\DeclareMathOperator*{\argmax}{arg\,max}
\newtheorem{thm}{Theorem}
\newtheorem{cor}[thm]{Corollary}
\newtheorem{observation}{Observation}
\setlist{nolistsep,leftmargin=*}
\begin{document}
\title{K-Core Minimization: A Game Theoretic Approach}
%\title{A Game Theoretic Approach For Core Resilience: Extended Version}% put your title here!
%\titlenote{Produces the permission block, and copyright information}

% AAMAS: as appropriate, uncomment one subtitle line; check the CFP
%\subtitle{Extended Abstract}
%\subtitle{Blue Sky Ideas Track}
%\subtitle{JAAMAS Track}
%\subtitle{Demonstration}
%\subtitle{Doctoral Consortium}

% AAMAS: submissions are anonymous for most tracks
\author{Sourav Medya}
\affiliation{%
  \institution{Northwestern University}
}
\email{sourav.medya@kellogg.northwestern.edu}
\author{Tiyani Ma}
\affiliation{%
  \institution{University of California Los Angeles}
}
\email{tianyima@cs.ucla.edu}
\author{Arlei Silva}
\affiliation{%
  \institution{University of California Santa Barbara}
}
\email{arlei@cs.ucsb.edu}
\author{Ambuj Singh}
\affiliation{%
  \institution{University of California Santa Barbara}
}
\email{ambuj@cs.ucsb.edu}  % put your paper number here!

\begin{abstract} 

%\small\baselineskip=9pt 

$K$-cores are maximal induced subgraphs where all vertices have degree at least $k$. These dense patterns have applications in community detection, network visualization and protein function prediction.  However, $k$-cores can be quite unstable to network modifications, which motivates the question: \textit{How resilient is the k-core structure of a network, such as the Web or Facebook, to edge deletions?} We investigate this question from an algorithmic perspective. More specifically, we study \textit{the problem of computing a small set of edges for which the removal minimizes the $k$-core structure of a network}.

This paper provides a comprehensive characterization of the hardness of the $k$-core minimization problem (KCM), including innaproximability and fixed-parameter intractability. Motivated by such a challenge in terms of algorithm design, we propose a novel algorithm inspired by Shapley value---a cooperative game-theoretic concept--- that is able to leverage the strong interdependencies in the effects of edge removals in the search space. As computing Shapley values is also NP-hard, we efficiently approximate them using a randomized algorithm with probabilistic guarantees. Our experiments, using several real datasets, show that the proposed algorithm outperforms competing solutions in terms of $k$-core minimization while being able to handle large graphs. Moreover, we illustrate how KCM can be applied in the analysis of the $k$-core resilience of networks.

%Given such an inapproximability result, we first propose a greedy heuristic along with optimization exploiting the problem structure. While this greedy technique achieves high quality results for high $k$, it is not very useful when the $k$ goes down. We propose a Shapley Value based approach for any $k$. Our experiments on several real world and large datasets show the applicability of our approach. Moreover, our method outperforms the proposed baselines significantly.
\end{abstract}

%%%%%%%%%%%%%%%%
\iffalse
K-cores are important structures of a network defined as subsets of vertices that have a degree of at least k within themselves. These small dense subgraphs are relevant in many applications, ranging from community detection to network visualization. Moreover, k-cores play in important role in the robustness of a network, which is a measure of how hard it is to make the network disconnected. In this work, we investigate optimal strategies for breaking k-cores apart by removing a fixed (budget) number of connections. First, we show that there is no efficient optimal or even approximation algorithm for breaking k-cores with size larger than 2. We then resort to the study of heuristics for breaking k-cores that are efficient and work well in practice. Our experiments show that our approach achieves effective results in large real-world datasets, significantly outperforming the baseline methods.
\fi

\keywords{Network resilience, k-core minimization, network design}  % put your semicolon-separated keywords here!

\maketitle

%%%%%%%%%%%%%%%%%%%%%%%%%%%%%%%%%%%%%%%%%%%%%%%%%%%%%%%%%%%%%%%%%%%%%%%%%%%%%%%%%%%%%%%%%%%%%%%%%%%%%%%%%
%% start of main body of paper

\section{Introduction}
\label{sec:intro}
% problem
% relation with k -core
% more problems
% the design version, challenges, theory, practice
% what has been done
% contributions
% check other two papers

$K$-cores play an important role in revealing the higher-order organization of networks. %: internet networks, social networks, protein networks, co-authorship networks, just to name a few. 
A $k$-core \cite{seidman1983network} is a maximal induced subgraph where all  vertices have internal degree of at least $k$. These cohesive subgraphs have been applied to model users' engagement and viral marketing in social networks \cite{bhawalkar2015preventing,kitsak2010identification}. Other applications include anomaly detection \cite{shin2016corescope}, community discovery \cite{peng2014accelerating}, protein function prediction \cite{you2013prediction},  and visualization  \cite{alvarez2006large,carmi2007model}. However, the $k$-core structure can be quite unstable under network modification. For instance, removing only a few edges from the graph might lead to the collapse of its core structure. 
This motivates the $k$-core minimization problem:% which is defined as follows:
\textit{ Given a graph G and constant k, find a small set of $b$ edges for which the removal minimizes the size of the k-core structure \cite{zhu2018k}.} %This motivates the analysis of $k$-core resilience \cite{Laishram2018}. In this paper, we investigate $k$-core resilience under edge removal from an optimization perspective. We formalize the $k$-core minimization problem as follows: \textit{Given a graph G and constant k, find a small set of $b$ edges for which the removal minimizes the size of the k-core structure \cite{zhu2018k}.}   

%A key question towards the stability of the $k$-core is the following: \textit{Which edges/vertices are important to keep the core structure of a network stable?}

We motivate $k$-core minimization using the following applications: (1) \textit{Monitoring:} Given an infrastructure or technological network, which edges should be monitored for attacks \cite{xiangyu2013identification,Laishram2018}? (2) \textit{Defense:} Which communication channels should be blocked in a terrorist network in order to destabilize its activities \cite{pedahzur2006changing,perliger2011social}? and (3) \textit{Design:} How to prevent unraveling in a social or biological network by strengthening connections between nodes \cite{bhawalkar2015preventing,morone2018k}? 
%An efficient solution to this problem can be used in increasing engagement, breaking stability, and maintaining sustainability of the $k$-cores. First, these links maximize the engagement in the network. As an example, the edges could be used as anchors to prevent unraveling in social network \cite{bhawalkar2015preventing}. Second, the links in the solution set are critical for the stability of a network. For instance, in military networks, the personnel would be interested in breaking the cohesive pattern between terrorists to reduce the efficiency of their communication. Third, monitoring these edges is of interest for designers in infrastructure, computer or other technological networks where edges can drop randomly \cite{Laishram2018}.

%A motivating example is to increase the engagement of users brings stability in social networks.
Consider a specific application of $k$-cores to online social networks (OSNs). OSN users tend to perform activities (e.g., joining a group, playing a game) if enough of their friends do the same \cite{burke2009feed}. Thus, strengthening critical links between users is key to the long-term popularity, and even survival, of the network \cite{Farzan2011}. This scenario can be modeled using $k$-cores. Initially, everyone is engaged in the $k$-core.  Removal of a few links (e.g., unfriending, unfollowing) might not only cause a couple of users to leave the network but produce a mass exodus due to cascading effects. This process can help us to understand the decline and death of OSNs such as Friendster \cite{garcia2013social}.

\begin{figure}[t]
%\vspace{-3mm}
    \centering
     \subfloat[Initial $G$]{\includegraphics[width=0.24\columnwidth]{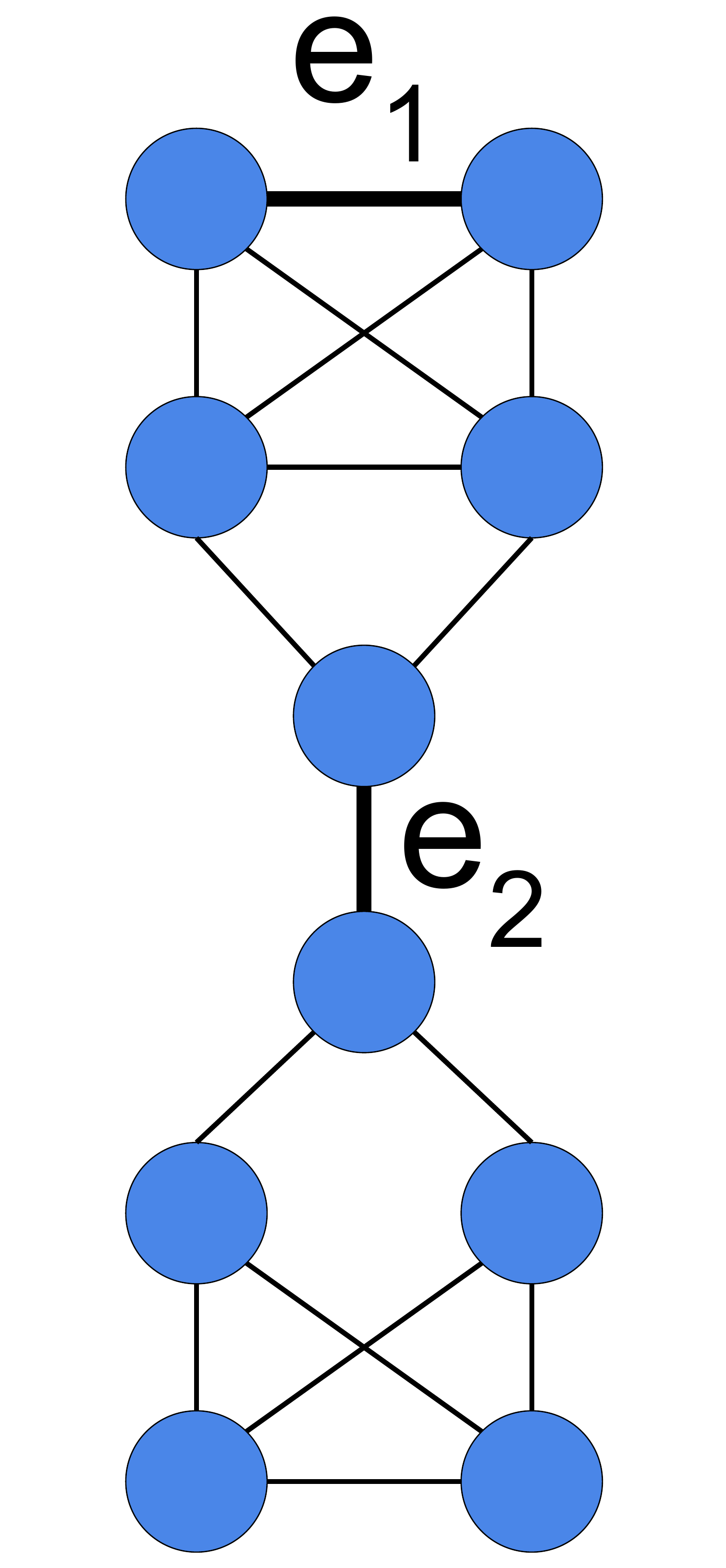}\label{fig:intro_1}}
     \hspace{3em}
      \subfloat[Modification $G'$]{\includegraphics[width=0.24\columnwidth]{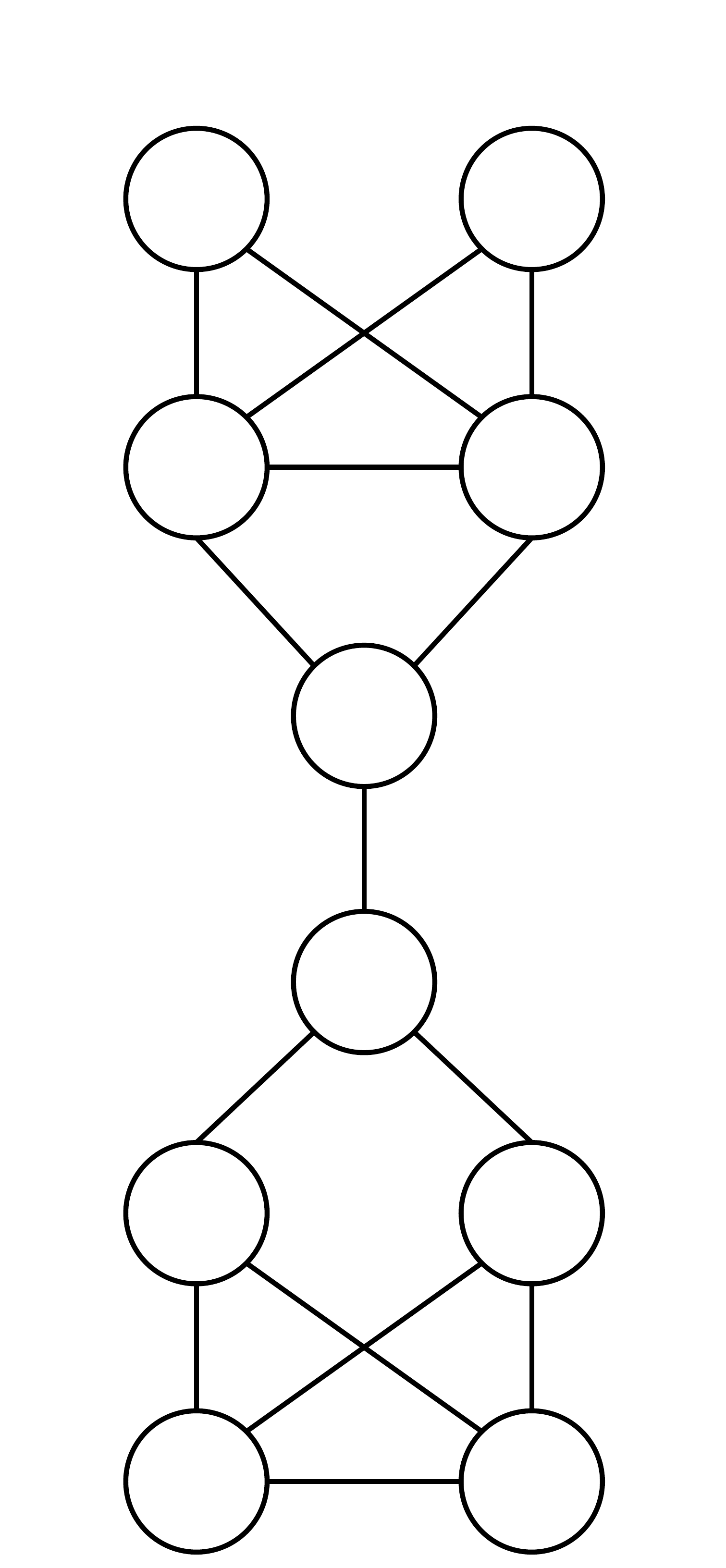}\label{fig:intro_2}}
      \hspace{3em}
      \subfloat[Modification $G''$]{\includegraphics[width=0.24\columnwidth]{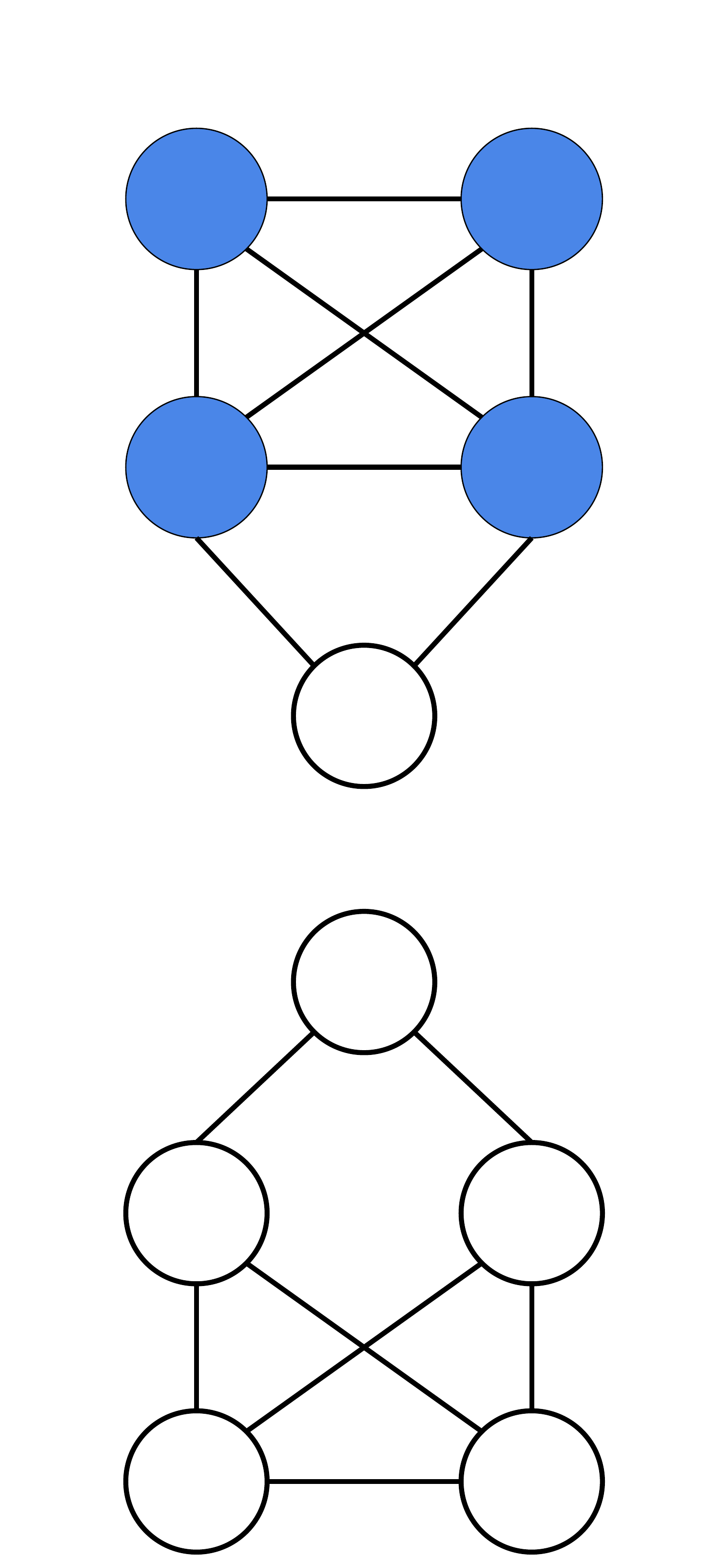}\label{fig:intro_3}}
    \caption{\textbf{ K-core minimization for an illustrative example: (a) Initial graph, where all the vertices are in the $3$-core; (b) Removing $e_1$ causes all the vertices to leave the $3$-core; (c) Removing $e_2$ causes only six vertices to leave the $3$-core. \label{fig:core_intro}}}
    \vspace{-2mm}
\end{figure}

%In this paper, we answer a natural question towards the stability of $k$-core. More specifically, the $k$-core minimization (KCM) problem is the following: to find $b$ links such that removal of these links would destroy the $k$-core maximally, i.e., leave the $k$-core with minimal number of vertices. Naturally, these links or edges are important to show the robustness of the $k$-core structures in the network.

%\textit{ Figure \ref{fig:intro_1} shows an initial graph. Figures \ref{fig:intro_2} and \ref{fig:intro_2} show modified graphs without the edge $e_1$ and the edge $e_2$ respectively. In $G$, all the nodes are in the $3$-core. In $G'$ the edge $(a,b)$ is deleted. Removal of $(a,b)$, brings the vertices $a$ and $b$ to $2$-core as they do not have degree $3$. The nodes $c$ and $d$ also are in the $2$-core as they do not have at least $3$ neighbours that are in the $3$-core. For the same reason, all the nodes are eventually in the $2$-core. But deleting $(e,f)$ from $G$ (modified graph $G"$) can bring only nodes $\{e,f,g,h,i,j\}$ in the $2$-core. Also, deleting $(c,d)$ from $G$ does not have any effect on the network.}

$K$-core minimization (KCM) can be motivated both from the perspective of a centralized agent who protects the structure of a network or an adversary that aims to disrupt it. Moreover, our problem can also be applied to measure network resilience \cite{Laishram2018}.

We illustrate KCM in Figure \ref{fig:core_intro}. An initial graph $G$ (Figure \ref{fig:intro_1}), where all vertices are in the $3$-core, is modified by the removal of a single edge. Graphs $G'$ (Figure \ref{fig:intro_2}) and $G''$ (Figure \ref{fig:intro_2}) are the result of removing $e_1$ and $e_2$, respectively. While the removal of $e_1$ brings all the vertices into a $2$-core,  deleting $e_2$ has a smaller effect---four vertices remain in the 3-core. Our goal is to identify a small set of edges removal of which minimizes the size of the $k$-core.

From a theoretical standpoint, for any objective function of interest, we can define a \textit{search}  (e.g. the $k$-core decomposition) and a corresponding \textit{modification} problem, such as $k$-core minimization. In this paper, we show that, different from its search version \cite{batagelj2011fast}, KCM is NP-hard. Furthermore, there is no polynomial time algorithm that achieves a constant-factor approximation for our problem. Intuitively, the main challenge stems from the strong combinatorial nature of the effects of edge removals. While removing a single edge may have no immediate effect, the deletion of a small number of edges might cause the collapse of the k-core structure. This behavior differs from more popular problems in graph combinatorial optimization, such as submodular optimization, where a simple greedy algorithm provides constant-factor approximation guarantees.

The algorithm for $k$-core minimization proposed in this paper applies the concept of \textit{Shapley values} (SVs), which, in the context of cooperative game theory, measure the contribution of players in  coalitions \cite{shapley1953value}. Our algorithm selects edges with largest Shapley value to account for the joint effect (or cooperation) of multiple edges. Since computing SVs is NP-hard, we approximate them in polynomial time via a randomized algorithm with quality guarantees. 

Recent papers have introduced the KCM problem \cite{zhu2018k} and its vertex version \cite{zhang2017finding}, where the goal is to delete a few vertices such that the $k$-core structure is minimized. However, our work provides a stronger theoretical analysis and more effective algorithms that can be applied to both problems. In particular, we show that our algorithm outperforms the greedy approach proposed in \cite{zhu2018k}.

%The main challenge stems from the fact that the effect of edge removal of a combination of edges might create a significant cascading effect where nodes leave from their original core to a lower one. We propose a Shapley value based algorithm that is able to capture this notion of dependence between edges whereas the traditional greedy algorithm is unaware of such dependence.

Our main contributions are summarized as follows:
\begin{itemize}
    \item We study the $k$-core minimization (KCM) problem, which consists of finding a small set of edges, removal of which minimizes the size of the $k$-core structure of a network.
    \item We show that KCM is NP-hard, even to approximate by a constant for $k\geq 3$. We also discuss the parameterized complexity of KCM and show the problem is $W[2]$-hard for the same values of $k$.
    \item Given the above inapproximability result, we propose a randomized Shapley Value based algorithm that efficiently accounts for the interdependence among the candidate edges for removal. 
    \item We show that our algorithm is both accurate and efficient using several datasets. Moreover, we illustrate how KCM can be applied to profile the structural resilience of real networks.
    %\item We also study how our methods generalizes to directed graphs. %We show that our approaches are applicable to group betweenness centrality maximization.    
    %further study a constrained yet realistic version of the problem that is APX-hard. Our proposed greedy and the randomized approach have a nearly optimal constant factor and probabilistic approximation guarantees respectively. 
\end{itemize}

\noindent
%Due to space constraints, some of the proofs and extra details are made available in the extended version of this paper \cite{kcorewsdm2019}.

\section{Problem Definition}
\label{sec:prob_def}

We assume $G(V,E)$ to be an undirected and unweighted graph with sets of vertices $V$ ($|V|=n$) and edges $E$ ($|E|=m$). Let $d(G,u)$ denote the degree of vertex $u$ in $G$. An induced subgraph, $H=(V_H,E_H) \subset G$ is the following: if $u,v \in V_H$ and $(u,v)\in E$ then $(u,v)\in E_H$. The $k$-core \cite{seidman1983network} of a network is defined below.

\begin{definition} \textbf{$k$-Core:} The $k$-core of a graph $G$, denoted by $C_k(G)=(V_k(G),E_k(G))$, is defined as a maximal induced subgraph that has vertices with degree at least $k$.
\end{definition}
%\begin{example}
Figure \ref{fig:core_definitions} shows an example. The graphs in Figures \ref{fig:kcore_ex11} and \ref{fig:kcore_ex12} are the $2$-core and the $3$-core, respectively, of the initial graph in Figure \ref{fig:kcore_ex1}. Note that, $C_{k+1}(G)$ is a subgraph of $C_k(G)$. Let $\mathcal{E}_G(v)$ denote the core number of the node $v$ in $G$. If $v\in V_k(G)$ and $v\notin V_{k+1}(G)$ then $\mathcal{E}_G(v)=k$. $K$-core decomposition can be performed in time $O(m)$ by recursively removing vertices with degree lower than $k$ \cite{batagelj2011fast}.

 Let $G^B = (V,E\setminus B)$ be the modified graph after deleting a set $B$ with $b$ edges. Deleting an edge reduces the degree of two vertices and possibly their core numbers. The reduction in core number might propagate to other vertices. For instance, the vertices in a simple cycle are in the $2$-core but deleting any edge from the graph moves all the vertices to the $1$-core. Let $N_k(G) =|V_k(G)|$  and $M_k(G) =|E_k(G)|$  be the number of nodes and edges respectively in $C_k(G)$.
%\end{example}

\begin{table} [t]
\centering
\small
\vspace{-2mm}
\begin{tabular}{| c | c |}
\hline
\textbf{Symbols} & \textbf{Definitions and Descriptions}\\
\hline
$G(V,E)$ & Given graph (vertex set $V$ and edge set $E$)\\
\hline
$n$ & Number of nodes in the graph\\
\hline
$m$ & Number of edges in the graph\\
\hline
$C_k(G)=(V_k(G),E_k(G))$ & The $k$-core of graph $G$\\
\hline
$N_k(G)$ & $|V_k(G)|$, $\#$nodes in the $k$-core of $G$\\
\hline
$M_k(G)$ & $|E_k(G)|$, $\#$edges in the $k$-core of $G$\\
\hline
$\Gamma$& Candidate set of edges\\
\hline
$b$& Budget \\
\hline
$\mathscr{V}(P)$ & The value of a coalition $P$ \\
\hline
$\Phi_e$ & The Shapley value of an edge $e$ \\
\hline
$P_e(\pi)$  & Set of edges before $e$ in permutation $\pi$ \\
\hline
\end{tabular}
\caption { Frequently used symbols}\label{tab:table_symbol}
%\vspace{-2mm}
\end{table}

\begin{definition} \textbf{Reduced $k$-Core:} A reduced $k$-core, $C_k(G^B)$ is the $k$-core in $G^B$, where $G^B = (V,E\setminus B)$. 
\end{definition}

\begin{example} 
Figures \ref{fig:kcore_ex2} and \ref{fig:kcore_ex3} show an initial graph, $G$ and modified graph $G^B$ (where $B=\{(a,c)\}$) respectively. In $G$, all the nodes are in the $3$-core. Deleting $(a,c)$ brings the vertices $a$ and $c$ to the $2$-core and thus $b$ and $d$ also go to the $2$-core. 
\end{example}

\begin{definition} \textbf{$K$-Core Minimization (KCM):} Given a candidate edge set $\Gamma$, find the set, $B \subset \Gamma$ of $b$ edges to be removed such that $C_k(G^B)$ is minimized, or, $f_k(B)=N_k(G)-N_k(G^B)$ is maximized.
\label{def:kcm}
\end{definition}

\begin{example} 
Figures \ref{fig:kcore_ex2} shows an initial graph, $G$, where all the nodes are in the $3$-core. Deleting $(a,c)$ and $(e,g)$ brings all the vertices to the $2$-core, whereas deleting $(e,c)$ and $(d,f)$ has no effect on the $3$-core structure (assuming $b\!=\!2)$.
\end{example}

%For simplicity, we assume $\Gamma$ to be all the edges in the graph unless specified otherwise. 
Clearly, the importance of the edges varies in affecting the $k$-core upon their removal. Next, we discuss strong inapproximability results for the KCM problem along with its parameterized complexity. 

\begin{figure}[t]
\vspace{-6mm}
    \centering
     \subfloat[Initial graph, $G$]{\includegraphics[width=0.12\textwidth]{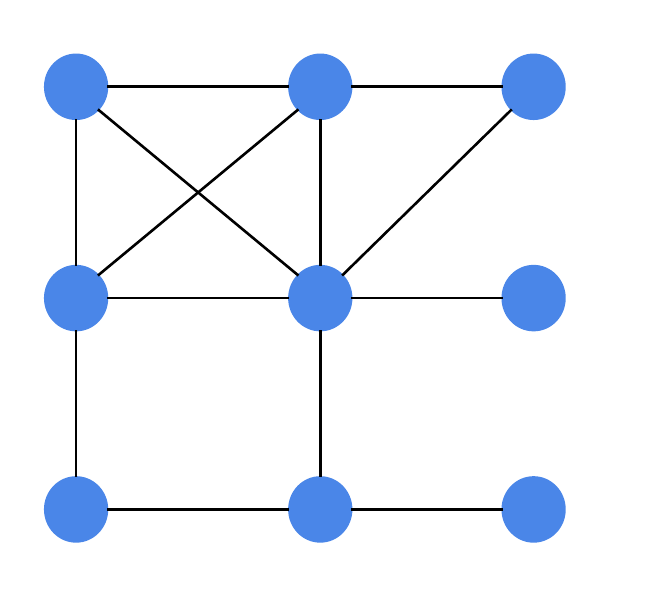}\label{fig:kcore_ex1}}
      \subfloat[The $2$-core of $G$]{\includegraphics[width=0.12\textwidth]{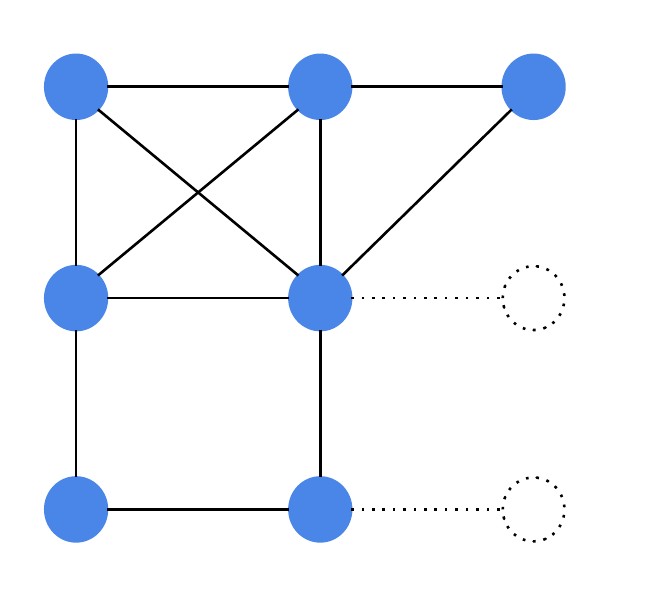}\label{fig:kcore_ex11}}
       \subfloat[The $3$-core of $G$]{\includegraphics[width=0.12\textwidth]{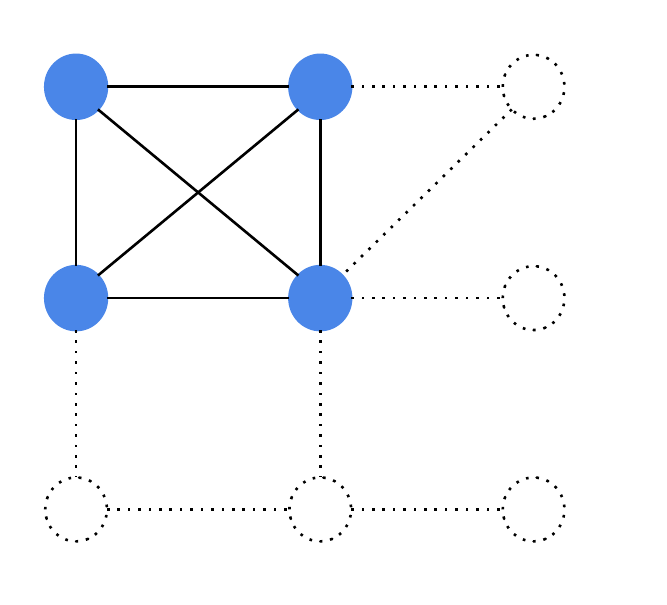}\label{fig:kcore_ex12}}
    \caption{\textbf{ Examples of (a) a graph $G$; (b) its $2$-core; and (c) its $3$-core structures. \label{fig:core_definitions}}}
\end{figure}

%Hardness
%hardness figure
\begin{figure}[t]
\vspace{-2mm}
    \centering
      \subfloat[Initial,  $G$]{\includegraphics[width=0.12\textwidth]{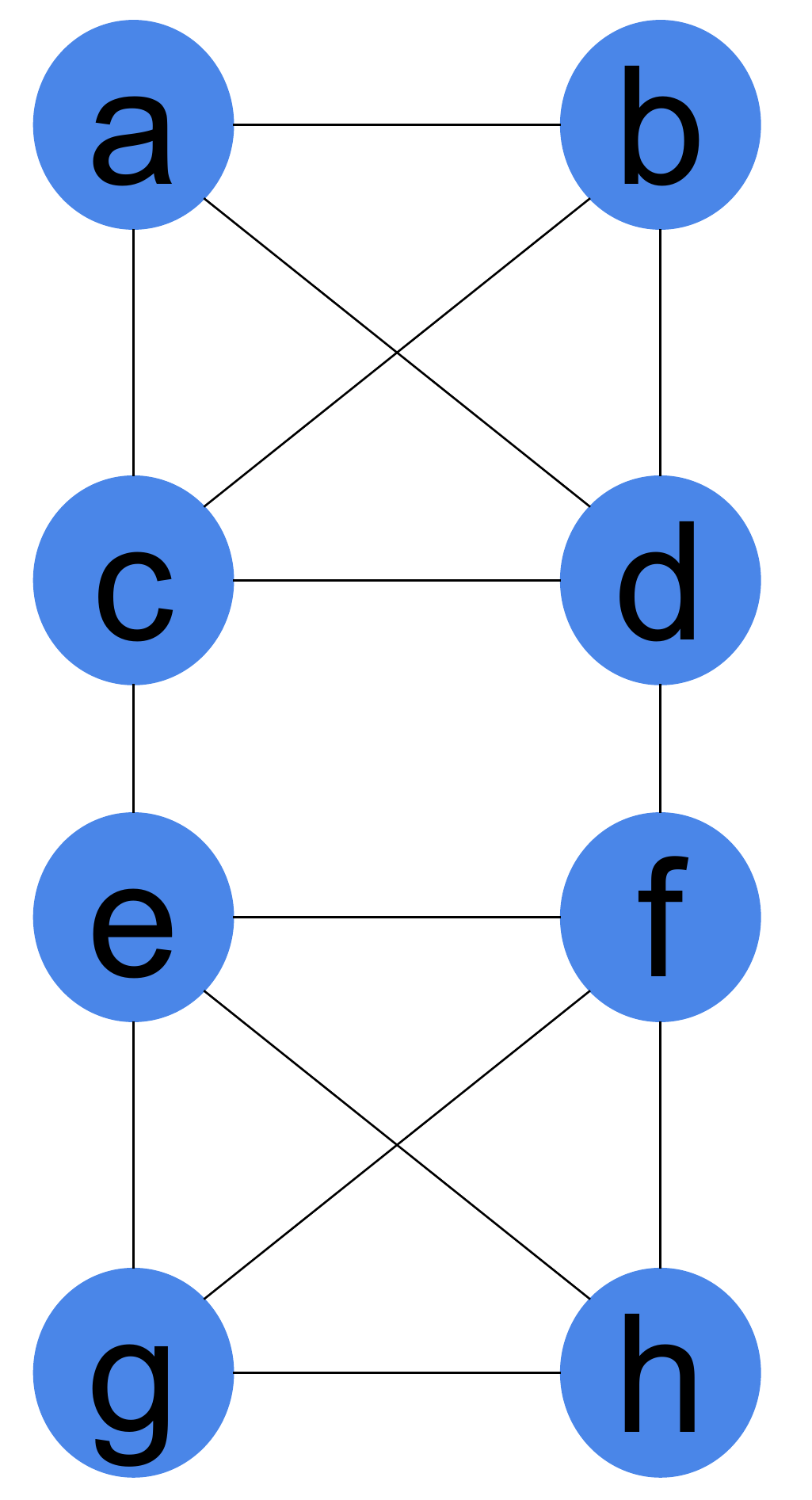}\label{fig:kcore_ex2}}
      \hspace{12mm}
       \subfloat[Modified,  $G^B$]{\includegraphics[width=0.12\textwidth]{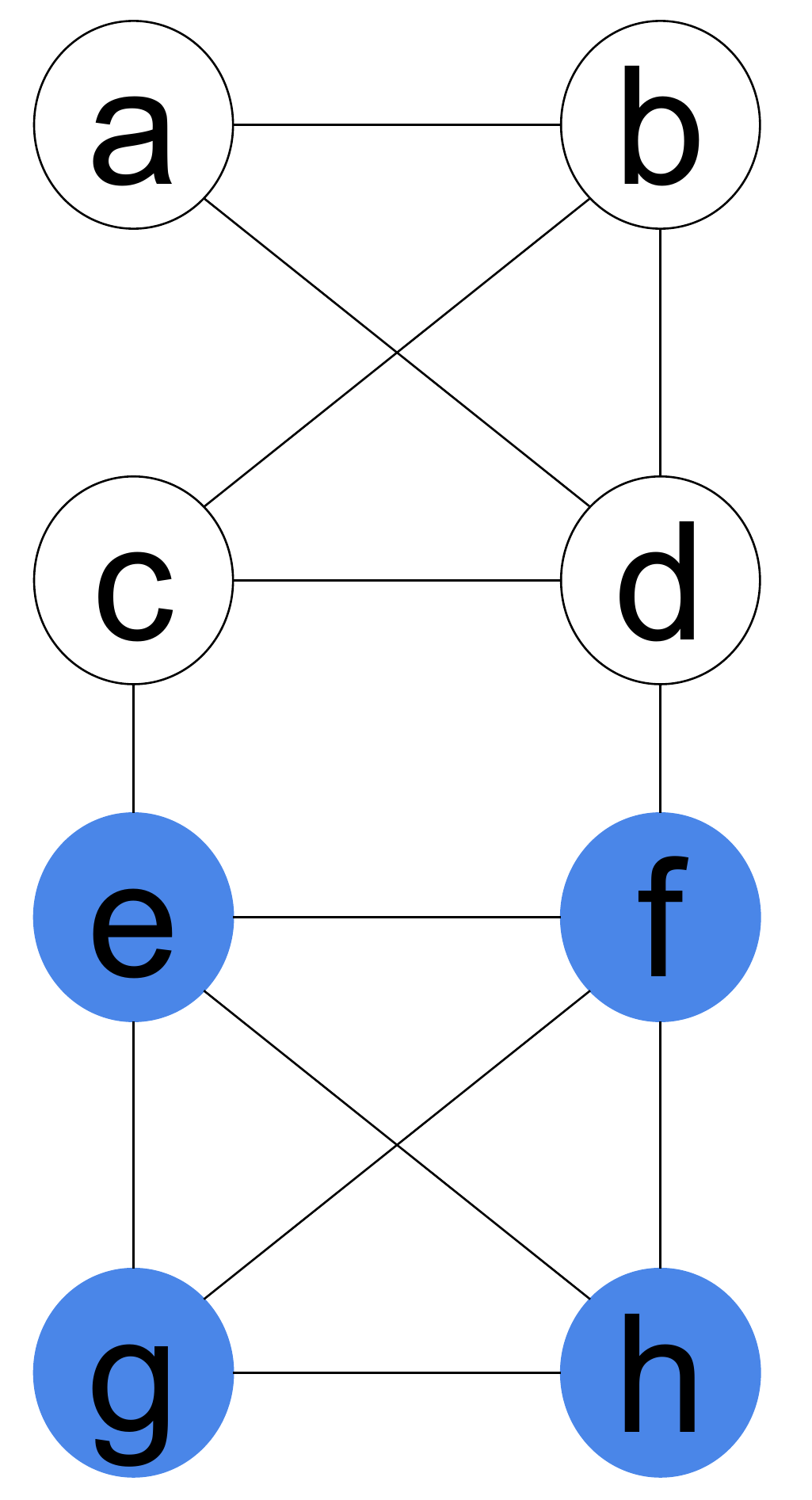}\label{fig:kcore_ex3}}
    \caption{ Example of the changes in the core structure via deletion of an edge: (a) All the nodes are in the $3$-core. (b) In the modified graph, the nodes $\{a,b,c,d\}$ are in the $2$-core. \label{fig:core_examples}}
\end{figure}

\subsection{Hardness and Approximability}
\label{sec:hardness}

The hardness of the KCM problem stems from two major facts: 1) There is a combinatorial number of choices of edges from the candidate set, and 2) there might be strong dependencies in the effects of edge removals (e.g. no effect for a single edge but cascading effects for subsets of edges). We show that KCM is NP-hard to approximate within any constant factor for $k\geq 3$. %We further study the parameterized complexity of KCM and it is found to be $W[2]$-hard. 

\begin{thm} \label{thm:np_hard_k_1_2}
The KCM problem is NP-hard for $k=1$ and $k=2$.
\end{thm}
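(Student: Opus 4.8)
The plan is to prove each case by a polynomial-time reduction, obtaining $k=1$ from \textsc{Clique} and then transferring hardness to $k=2$ through a small gadget.

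\emph{Case $k=1$.} The $1$-core of a graph is exactly its set of non-isolated vertices, so, assuming without loss of generality that $G$ has no isolated vertex (harmless, since isolated vertices never belong to the $1$-core), $f_1(B)=n-|T_B|$, where $T_B$ is the set of vertices that remain non-isolated after deleting $B$; moreover every edge surviving in $G^B$ has both endpoints in $T_B$, hence $|E(G[T_B])|\ge m-|B|$. I would reduce from \textsc{Clique}: given an instance $(G,s)$ with $s\ge 3$ (and $m\ge\binom{s}{2}$, else output a trivial no-instance), produce the KCM instance with $k=1$, $\Gamma=E$, and $b=m-\binom{s}{2}$. If $G$ has an $s$-clique $Q$, deleting the exactly $b$ edges outside $G[Q]$ isolates the remaining $n-s$ vertices, so $f_1\ge n-s$. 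Conversely, if $f_1(B)\ge n-s$ for some $B$ with $|B|=b$, then $|T_B|\le s$ while $|E(G[T_B])|\ge m-b=\binom{s}{2}\ge\binom{|T_B|}{2}$, forcing $|T_B|=s$ and $G[T_B]=K_s$. Thus the KCM optimum reaches $n-s$ iff $G$ contains an $s$-clique, so KCM with $k=1$ is NP-hard. (Equivalently, $k=1$ KCM is the \textsc{Densest Subgraph} decision problem in disguise.)

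\emph{Case $k=2$.} A direct imitation of the $k=1$ argument fails, since destroying the $2$-core of an arbitrary graph by edge deletions is easy (delete the complement of a spanning forest); the difficulty must be forced through the candidate set. I would therefore reduce from the $k=1$ problem. Given a $k=1$ instance $(G,\Gamma,b)$ with $G$ having no isolated vertex, build $G'$ by adding three fresh vertices $h,h_1,h_2$, the triangle $\{hh_1,hh_2,h_1h_2\}$, and an edge $hv$ for every $v\in V(G)$; keep $\Gamma'=\Gamma$, $b'=b$, and set $k=2$. The triangle keeps $h,h_1,h_2$ in the $2$-core permanently, and every original vertex is adjacent to $h$. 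A $2$-core peeling argument should show that, for any $B\subseteq\Gamma$, the only vertices peeled off $G'^B$ are the original vertices that have lost all of their original edges, so $C_2(G'^B)$ consists of $\{h,h_1,h_2\}$ together with the non-isolated vertices of $G^B$. Then $N_2(G'^B)=N_1(G^B)+3$ for every $B$, hence $f_2^{G'}(B)=f_1^{G}(B)$, the two optima coincide, and NP-hardness of the $k=1$ problem transfers to $k=2$. (Alternatively one may run the \textsc{Clique} reduction above directly inside $G'$.)

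The two objective-value identities are routine bookkeeping. The one step that needs care is the $2$-core peeling argument for $G'$: one must verify that deleting original edges never cascades beyond isolating some original vertices and can never dislodge the hub triangle --- which is exactly what the design ``$h$ joined to every vertex, plus the rigid triangle on $h,h_1,h_2$'' guarantees, since $h_1,h_2$ always retain degree $2$ and $h$ is incident to $h_1,h_2$. I expect this to be the only genuine subtlety; everything else is counting.
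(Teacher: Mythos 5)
Your argument is correct, but it takes a genuinely different route from the paper's. For $k=1$ the paper reduces from 2-MINSAT: clauses are encoded (via degree-2 clause vertices) as edges joining literal vertices, deleting an edge corresponds to satisfying a clause, and a truth assignment satisfying at most $b$ clauses corresponds to isolating the $m'$ literal vertices of the assignment; you instead reduce from \textsc{Clique} with the counting argument $b=m-\binom{s}{2}$, where isolating $n-s$ vertices forces the $m-b=\binom{s}{2}$ surviving edges onto at most $s$ vertices and hence onto an $s$-clique. Both are valid; yours is shorter and more elementary (no SAT variant needed), while the paper's gadget style is closer to the constructions it reuses for the $k\geq 3$ inapproximability and $W[2]$-hardness proofs. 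For $k=2$ the paper gives no construction at all---it only asserts a transformation from the $k=1$ case ``similar to'' the vertex-deletion paper of Zhang et al.---whereas you supply an explicit gadget (hub $h$ adjacent to every original vertex, rigid triangle on $h,h_1,h_2$, candidate set unchanged) together with the exact identity $N_2(G'^B)=N_1(G^B)+3$. I checked the peeling step you flagged: any vertex isolated in $G^B$ has degree $1$ in $G'^B$ and so can never lie in a subgraph of minimum degree $2$, while $T_B\cup\{h,h_1,h_2\}$ induces minimum degree at least $2$ (each $v\in T_B$ retains a surviving neighbour in $T_B$ plus $h$; $h_1,h_2$ keep degree $2$; $h$ keeps $h_1,h_2$), so the $2$-core is exactly that set and the objectives coincide, making the reduction value-preserving. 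One remark: your $k=2$ step leans on the candidate set $\Gamma$ being part of the input (the hub and triangle edges are not deletable), which Definition 3 explicitly permits and which the paper's $k\geq 3$ construction also exploits; also, your aside that destroying the $2$-core by unrestricted deletions is ``easy'' concerns only total destruction with unlimited budget, not budgeted KCM with $\Gamma=E$, so it serves as motivation rather than a claim your proof needs.
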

\begin{proof}
See the Appendix.
%The proof is given in the extended version \cite{kcorewsdm2020}. For both values of $k$, the reduction is from 2-MINSAT \cite{kohli1994minimum}. 
\end{proof}

Notice that a proof for Theorem \ref{thm:np_hard_k_1_2} is also given by \cite{zhu2018k}. However, our proof applies a different construction and was developed independently from (and simultaneously with) this previous work.

\begin{thm} \label{thm:hard_approx}
The KCM problem is NP-hard and it is also NP-hard to approximate within a constant-factor for all $k\geq 3$.
\end{thm}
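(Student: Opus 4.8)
The plan is to prove both halves at once with a single gap-producing polynomial reduction from \textsc{Set Cover}. Given a universe $U=\{u_1,\dots,u_n\}$, sets $S_1,\dots,S_m\subseteq U$ (WLOG each element in at least one set), and an integer $b$, I build a KCM instance $(G,\Gamma,k,b)$ in which \emph{every} vertex of $G$ initially lies in the $k$-core, so that $f_k(B)$ simply counts the vertices peeled off after deleting $B$. The candidate set is $\Gamma=\{e_1,\dots,e_m\}$, one edge per set, so choosing $B\subseteq\Gamma$ with $|B|=b$ is the same as choosing $b$ sets. The construction will guarantee that if the chosen sets cover $U$ then a huge ``prize'' component $P$ on $\Lambda:=(knm)^{2}$ vertices is entirely destroyed, giving $f_k(B)\ge \Lambda$, whereas if they do not cover $U$ then $P$ survives intact and only $O(knm)$ vertices are peeled, independent of $\Lambda$. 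Because $\Lambda=(knm)^2$ dwarfs $O(knm)$, any polynomial-time $\rho$-approximation for KCM with $\rho$ a fixed constant would, on all but finitely many instances, separate the ``cover'' from the ``no cover'' case and hence decide \textsc{Set Cover}; so no constant-factor approximation is possible unless $\mathrm{P}=\mathrm{NP}$, and NP-hardness follows a fortiori. Since the reduction preserves the parameter $b$, it also yields the claimed $W[2]$-hardness.

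I would describe the skeleton of $G$ for $k=3$; for general $k\ge 3$ the construction is identical once every skeleton vertex is padded with $k-3$ extra edges into disjoint robust cliques $K_{2k}$ (whose vertices always stay in the $k$-core and carry no candidate edge). The skeleton has four gadget types, wired so that every vertex has degree exactly $k$: (i) for each $S_j$, a \emph{broadcast path} one of whose endpoints loses an edge when $e_j$ is deleted, so that deleting $e_j$ starts a cascade that erases the whole broadcast path of $S_j$; (ii) for each element $u_i$, a \emph{collector path} $g_i^1\!-\!\cdots\!-\!g_i^{r_i}$ with one vertex per set containing $u_i$, where $g_i^t$ is attached to the broadcast path of the $t$-th such set, so that the collapse of \emph{any one} of those broadcast paths knocks $g_i^t$ below degree $k$ and, by a chain reaction along the collector, removes the collector's output vertex $z_i$; hence $z_i$ survives iff no chosen set contains $u_i$; (iii) a single \emph{bottleneck} vertex $H$ joined to every $z_i$, to $k-2$ robust vertices, and by one edge into the prize: since $H$ needs $k$ live neighbours and its non-$z$ support has size only $k-1$, $H$ dies exactly when all $z_i$ are gone; (iv) the \emph{prize} $P$, a long chain on $\Lambda$ vertices supported only from the $H$-end, which is self-sustaining while $H$ lives but unravels completely the moment $H$ dies. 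One checks every skeleton vertex starts with degree $\ge k$, so $G$ is its own $k$-core and $N_k(G)=|V(G)|$.

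Correctness is then two short arguments about the peeling process. \emph{Completeness}: if $\{S_j : j\in J\}$ with $|J|=b$ covers $U$, delete $B=\{e_j : j\in J\}$; each $e_j$ collapses the broadcast path of $S_j$, which forces every $g_i^t$ with $u_i\in S_j$ to fall, which propagates along each collector and removes every $z_i$; then $H$ loses all of its $z$-support and is peeled, and the prize chain unravels from end to end, so $f_k(B)\ge \Lambda$. \emph{Soundness}: if no $b$ sets cover $U$, then for every $B\subseteq\Gamma$ with $|B|=b$ some element $u_{i^\star}$ is uncovered, so its collector is untouched, $z_{i^\star}$ keeps full degree, $H$ retains a live $z$-neighbour and survives, and all of $P$ survives; tracing the peeling shows the only vertices that can go are those in the $\le b$ affected broadcast paths and the collectors reachable from them, a total of $O(knm)$ and independent of $\Lambda$, so $f_k(B)=O(knm)$.

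The main obstacle is making these four gadgets interlock so that the cascades behave \emph{exactly} as stated: one must verify that before any deletion no vertex is spuriously below degree $k$ (so that $N_k(G)=|V(G)|$), that deleting $e_j$ triggers precisely the intended chain reaction and nothing outside it, and---most delicately---that the $H$-to-prize link is genuinely one-directional, i.e., that the huge prize never feeds back enough degree to keep $H$ (or any $z_i$) alive; a backward leak here would break completeness. Designing that one-way valve correctly (the trick being that $H$ is given only $k-2$ robust neighbours plus the single prize edge---one short of $k$---so it must borrow the last unit of support from the $z_i$'s) and pushing a careful case analysis of the peeling order through every gadget is where essentially all the work lies; the remaining steps, including the lift from $k=3$ to arbitrary $k\ge 3$, are routine padding.
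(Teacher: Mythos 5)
There is a genuine gap, and it sits exactly where you declare the remaining work ``routine'': the coupling between the broadcast paths and the collector chains cannot be made one-directional with the gadgets as you describe them. For the forward cascade to work (deleting $e_j$ must sweep the entire broadcast path of $S_j$), every broadcast-path vertex must have degree exactly $k$ \emph{including} its edge to a collector vertex $g_i^t$. But then consider soundness: if some chosen set $S_{j'}$ covers an element $u_i$ that also lies in an \emph{unchosen} set $S_j$, the whole collector chain of $u_i$ collapses (that is the point of your OR-chain), and in particular the vertex $g_i^{t}$ attached to the broadcast path of $S_j$ dies; the broadcast vertex of $S_j$ it was attached to then drops to degree $k-1$ and the unchosen path unravels, which in turn kills the collectors of every other element of $S_j$, and so on. In a connected set system a single deleted candidate edge can in this way kill every $z_i$, hence $H$, hence the prize, even when no cover exists --- soundness fails completely. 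If instead you give broadcast vertices slack (degree $k+1$ at attachment points) to block this backward leak, then in the forward direction a broadcast vertex that loses only its path predecessor still has degree $k$ and the cascade stalls, so completeness fails. Your stated worry about a backward leak is aimed at the $H$--prize edge, which is in fact fine; the real one-way-valve problem is at the set/element interface, and resolving it requires a concrete mechanism (e.g., a separate trigger cycle matched edge-by-edge to an attachment cycle, so that in the forward direction every attachment vertex loses \emph{two} edges --- its matching edge and a cycle neighbour --- while in the backward direction it loses only the single element edge and survives). That is precisely the device the paper builds into its gadgets, and without it your reduction does not go through.

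It is also worth noting that your overall route differs from the paper's. The paper proves this theorem by an objective-value-preserving reduction from the Set Union Knapsack problem: items become trigger cycles, each set becomes a large degree-$3$ blob of $\Theta((m')^3)$ vertices that collapses iff \emph{all} of its items are selected, and $f_3(B)=P(\mathcal{A})+b(m'+1)$, so constant-factor inapproximability is inherited directly from SUKP. (The paper's Set Cover reduction appears separately, for the $W[2]$-hardness result, and is a decision-level reduction rather than a gap reduction.) Your gap-reduction-from-Set-Cover strategy with a large ``prize'' component is a legitimate alternative template and would also yield both NP-hardness and constant-factor inapproximability, but only after the cascade-isolation issue above is actually engineered, which is the substantive content of such a proof rather than a routine afterthought.
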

\begin{proof}
We sketch the proof for $k\!=\!3$ (similar for $k\!>\!3$).

Let $SK(U,\mathcal{S},P,W,q)$ be an instance of the Set Union Knapsack Problem \cite{goldschmidt1994note}, where $U=\{u_1, \ldots u_{n'}\}$ is a set of items, $\mathcal{S}=\{S_1, \ldots S_{m'}\}$ is a set of subsets ($S_i \subseteq U$), $p:\mathcal{S}\to\mathbb{R}_{+}$ is a subset profit function, $w:U\to\mathbb{R}_+$ is an item weight function, and $q \in \mathbb{R}_+$ is the budget. For a subset $\mathcal{A} \subseteq \mathcal{S}$, the weighted union of set $\mathcal{A}$ is $W(\mathcal{A}) = \sum_{e\in \cup_{t\in \mathcal{A}} S_t} w_e$ and $P(\mathcal{A})=\sum_{t \in \mathcal{A}} p_t$. The problem is to find a subset $\mathcal{A}^*\subseteq S$ such that $W(\mathcal{A}^*)\leq q$ and $P(\mathcal{A}^*)$ is maximized.  SK is NP-hard to approximate within a constant factor \cite{arulselvan2014note}. 

 We reduce a version of $SK$ with equal profits and weights (also NP-hard to approximate) to the KCM problem. The graph $G'$ is constructed as follows. For each $u_j \in U$, we create a cycle of $m'$ vertices $Y_{j,1}, Y_{j,2}, \ldots, Y_{j,m'}$ in $V$ and add $(Y_{j,1},Y_{j,2})$, $(Y_{j,2},Y_{j,3}), \ldots, (Y_{j,m'-1}$   $,Y_{j,m'}),(Y_{j,m'},Y_{j,1})$ as edges between them. We also add $5$ vertices $Z_{j,1}$ to $Z_{j,5}$ with eight edges where the four vertices $Z_{j,2}$ to $Z_{j,5}$ form a clique with six edges. The other two edges are $(Z_{j,1},Z_{j,2})$ and $(Z_{j,1},Z_{j,5})$. Moreover, for each subset $S_i$ we create a set of $O((m')^3)$ vertices (sets $X_{i,*}$ are red rectangles in Figure \ref{fig:hardness_ex1}), such that each node has exactly degree $3$, and add one more node $X_{i,1}$ with two edges incident to two vertices in $X_{i,*}$ from $X_{i,1}$. In the edge set $E$, an edge $(X_{i,1},Y_{j,i})$ will be added if $u_j\in S_i$. Additionally, if $u_j\notin S_i$, the edge $(Y_{j,i},Z_{j,1})$ will be added to $E$. Figure \ref{fig:hardness_ex1} illustrates our construction for a set $S_1=\{u_1,u_2\},S_2=\{u_1,u_3\},S_3=\{u_2\}$.
 
 %Candidate edges in $\Gamma$ are in the form $(a_j,b_j)$, for each item $u_j$. 
In KCM, the number of edges to be removed is the budget, $b$. The candidate set of edges, $\Gamma$ is the set of all the edges with form  $(Y_{j,1},Y_{j,2})$. Initially all the nodes in $G'$ are in the $3$-core. Our claim is, for any solution $\mathcal{A}$ of an instance of $SK$ there is a corresponding solution set of edges, $B$ (where $|B|=b$) in $G'$ of the KCM problem, such that $f_3(B)=P(\mathcal{A})+b(m'+1)$ if the edges in $\mathcal{A}$ are removed. 
% (the \textit{dotted} edges in Fig. \ref{fig:hardness_ex1}). %It can be easily proved that this is equivalent to choosing from the entire edge set. 
 %Figure \ref{fig:hardness_ex1} illustrates the structure of our construction for a set $S_1=\{u_1,u_2\},S_2=\{u_1,u_3\},S_3=\{u_2\}$.

 The $m'$ nodes in any $Y_j$ and the node $Z_{j,1}$ will be in the $2$-core if the edge $(Y_{j,1},Y_{j,2)}$ gets removed. So, removal of any $b$ edges from $\Gamma$ enforces $b(m'+1)$ nodes to go to the $2$-core. But the node $X_{i,1}$ and each node in $X_{i,*}$ ($O((m')^3)$ nodes) will be in the $2$-core iff all its neighbours in $Y_{j,i}$s go to the $2$-core after the removal of $b$ edges in $\Gamma$. Thus, an optimal solution $B^*$ will be $f_3(B^*)=P(\mathcal{A^*})+b(m'+1)$ where $\mathcal{A^*}$ is the optimal solution for SUKP. For any non-optimal solution $B$,  $f_3(B)=P(\mathcal{A})+b(m'+1)$ where $\mathcal{A}$ is also non-optimal solution for SUKP.  As $P(\mathcal{A^*})$ is at least $O((m')^3)$ by construction (i.e. $P(\mathcal{A^*})\gg b(m'+1)$), and $\frac{P(\mathcal{A^*})}{P(\mathcal{A})}$ cannot be within a constant factor, $\frac{f_3(B^*)}{f_3(B)}$ will also not be within any constant factor.%\approx \frac{P(\mathcal{A^*})}{P(\mathcal{A})}$. This proves our claim. %As $\frac{P(\mathcal{A^*})}{P(\mathcal{A})}$ cannot be within a constant factor, $\frac{f_3(B^*)}{f_3(B)}$ will also not be within any constant factor. This proves our claim.   %will cover a path from $p_i$ to $q_i$ iff all the edges $(a_j,b_j)$ corresponding to items in $S_i$ are added. 
 %Moreover, as $G(V,E\cup\Gamma)$ is a tree, there is only one possible path between any pair $(p_i,q_i)$. This proves our claim.
\end{proof}

Theorem \ref{thm:hard_approx} shows that there is no polynomial-time constant-factor approximation for KCM when $k\!\geq\!3$. This contrasts with well-known NP-hard graph combinatorial problems in the literature \cite{kempe2003maximizing}. In the next section, we explore the hardness of our problem further in terms of exact
exponential algorithms with respect to the parameters.

%hardness figure
\begin{figure}[t]
\vspace{-1mm}
    \centering
    {\includegraphics[width=0.45\textwidth]{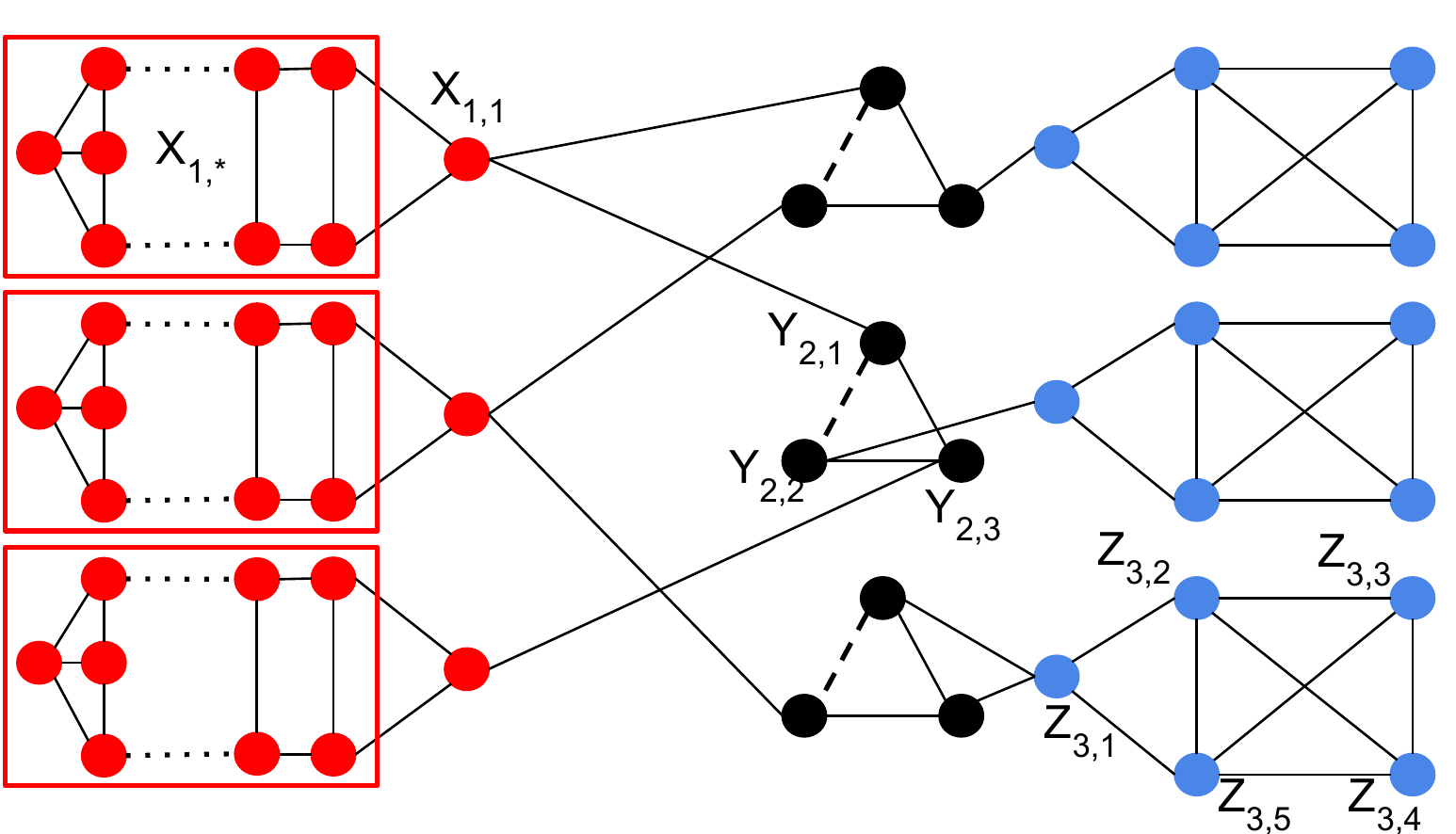}}
    \caption{\textbf{Example construction for hardness reduction from SK where  $U=\{u_1,u_2,u_3\}, S=\{S_1,S_2,S_3\}, S_1=\{u_1,u_2\},S_2=\{u_1,u_3\},S_3=\{u_2\}$. \label{fig:hardness_ex1}}}
\end{figure}

\subsection{Parameterized Complexity}
There are several NP-hard problems with exact solutions via algorithms that run in exponential time in the size of the parameter. For instance, the NP-hard Vertex Cover can be solved via an exhaustive search algorithm in time $2^{b_1}{n_1}^{O(1)}$ \cite{balasubramanian1998improved}, where $b_1$ and $n_1$ are budget and the size of the graph instance respectively. Vertex cover is therefore fixed-parameter tractable (FPT) \cite{flum2006parameterized}, and if we are only interested in small $b_1$, we can solve the problem in polynomial time. We investigate whether the KCM problem is also in the FPT class.

A parameterized problem instance is comprised of an instance $X$ in the
usual sense, and a parameter $b$. A problem with parameter $b$ is called fixed parameter tractable (FPT) \cite{flum2006parameterized}
if it is solvable in time $g(b) \times p(|X|)$, where $g$ is an arbitrary function of $b$ and $p$ is a polynomial in the
input size $|X|$. Just as in NP-hardness,
 there exists a hierarchy of
complexity classes above FPT. Being hard for one of these classes is an evidence that the problem is unlikely to be FPT. Indeed, assuming the
Exponential Time Hypothesis, a problem which is $W[1]$-hard does not belong to FPT.
The main classes in this hierarchy are: FPT$\subseteq W[1] \subseteq W[2]\subseteq\ldots W[P]\subseteq XP$. Generally speaking, the problem is harder when it belongs to a higher $W[.]$-hard class in terms of the parameterized complexity. For instance, \textit{dominating set} is in $W[2]$ and is considered to be harder than  \textit{maximum independent set}, which is in $W[1]$.

\begin{definition}
\textbf{Parameterized Reduction \cite{flum2006parameterized}: }
Let $P_1$ and $P_2$ be parameterized problems. A parameterized reduction from $P_1$ to $P_2$ is an algorithm that, given an instance $(X_1, b_1)$ of $P_1$, outputs an instance $(X_2, b_2)$ of $P_2$ such that: (1) $(X_1, b_1)$ is a yes-instance of $P_1$ iff $(X_2, b_2)$ is a yes-instance of $P_2$; (2) $b_2\leq h(b_1)$ for some computable (possibly exponential) function $h$; and (3) the running time of the algorithm is $g(b_1)\cdot|X|^{O(1)}$ for a computable function $g$.

\end{definition}

\begin{thm} \label{thm: param_approx}
The KCM problem is not in FPT, in fact, it is in $W[2]$ parameterized by $b$ for $k\geq 3$.
\end{thm}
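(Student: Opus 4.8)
The plan is to prove $W[2]$-hardness by a parameterized reduction from \textsc{Dominating Set} --- equivalently \textsc{Set Cover}/\textsc{Hitting Set} --- which is the canonical $W[2]$-complete problem parameterized by the solution size $\ell$. Given a \textsc{Set Cover} instance $(U,\mathcal{S},\ell)$ with $U=\{u_1,\ldots,u_{n'}\}$ and $\mathcal{S}=\{S_1,\ldots,S_{m'}\}$, I will build in polynomial time a KCM instance $(G',\Gamma,b,T)$ with $b=\ell$ such that every vertex of $G'$ has degree at least $3$ (so $C_3(G')=G'$), and such that $(U,\mathcal{S})$ has a cover of size $\ell$ iff there is $B\subseteq\Gamma$ with $|B|=b$ and $f_3(B)\ge T$. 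Since the parameter is preserved exactly ($b=\ell$) and the construction is polynomial, this is a valid parameterized reduction, giving $W[2]$-hardness for $k=3$; under the ETH this already rules out KCM $\in\mathrm{FPT}$. The construction extends to every $k>3$ by replacing each ``fragile cycle'' below by a minimal $k$-core ring in which every vertex carries $k-2$ external support edges, with the hub degrees adjusted correspondingly.

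The gadgets are as follows. (i) For each set $S_i$, a \emph{selector cycle} $W_i$ on $3tn'$ vertices ($t$ a small constant), each vertex having degree exactly $3$ (two cycle edges plus one edge to a \emph{hub}); the candidate set is $\Gamma=\{e_i\}_{i\in[m']}$, where $e_i$ is one fixed edge of $W_i$. Deleting $e_i$ drops two vertices of $W_i$ to degree $2$ and peels all of $W_i$. (ii) A collection of \emph{hub} vertices, each adjacent to three consecutive vertices of one $W_i$ and to exactly one ``downstream'' vertex, hence of degree $4$; this degree asymmetry is the crux --- a hub is peeled once its three $W_i$-neighbours are peeled, but it \emph{survives} the loss of its single downstream neighbour. (iii) For each element $u_j$, an \emph{element cycle} $B_j$ whose every vertex has degree $3$ (two cycle edges plus one incoming hub edge), wired so that $B_j$ receives a hub edge from (the copy associated with) $S_i$ exactly when $u_j\in S_i$; since every vertex of $B_j$ has degree exactly $3$, the loss of even one hub support peels all of $B_j$. (iv) A single shared \emph{padding cycle} $D$ absorbing all unused hubs: each $W_i$ is given the same number $tn'$ of hubs, of which $t|S_i|$ feed the real $B_j$'s and $t(n'-|S_i|)$ feed $D$. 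One checks directly that a hub of $S_i$ is never adjacent to any $W_{i'}$ with $i'\ne i$, so collapse propagates only ``downward''; in particular neither the collapse of a $B_j$ nor that of $D$ can drag a non-selected $W_{i'}$ out of the core.

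With this in place, deleting $\{e_i:i\in A\}$ for $|A|=\ell$ peels: (a) every $W_i$ and every hub with $i\in A$, a count that depends only on $\ell$ because all selector cycles and all hub-families have equal size; (b) all of $D$, since any selected $S_i$ kills at least one support of $D$ (WLOG no $S_i$ equals $U$, else the cover instance is trivially yes); and (c) the cycle $B_j$ exactly for those $u_j$ covered by $A$. Hence $f_3(B)=\kappa+t\sum_{j:\,u_j\text{ covered by }A}d_j$, where $d_j=|\{i:u_j\in S_i\}|$ and $\kappa$ depends only on the instance and on $\ell$; this is maximised, with value $\kappa+t\sum_j d_j$, precisely when $A$ covers $U$. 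Taking $T=\kappa+t\sum_j d_j$ yields the equivalence, and the same skeleton (with $k-2$ supports per ring vertex) handles $k>3$.

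The main obstacle is exactly items (ii) and (iv). A naive wiring in which the $B_j$-vertices simply hang off the selector cycles makes the collapse symmetric, so that selecting a single set would ignite a chain reaction through the set--element incidence graph and collapse essentially all of $G'$, destroying the reduction. Forcing a one-way collapse through the hub degree asymmetry, and then using the shared padding cycle $D$ so that the cover-independent part of the collapse is a fixed constant (making the threshold $T$ separate covers from non-covers cleanly), is the heart of the argument; the remaining checks --- that $C_3(G')=G'$, that the peeling process is order-independent, and that the $k>3$ generalisation goes through --- are routine.
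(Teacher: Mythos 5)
Your proposal is correct and takes essentially the same route as the paper: a parameterized reduction from Set Cover preserving the parameter $b$, with one fragile candidate edge per set gadget, a degree-$4$ asymmetry forcing one-way collapse into degree-$3$ element cycles (so an element gadget peels iff the element is covered), and padding so that the cover-independent part of the collapse count is fixed. The only differences are in gadget engineering --- the paper uses a $W$--$X$ double cycle per set, per-element padding via $Z_{j,1}$ attached to a $K_4$, and a constant $m+1$ contribution per covered element, whereas you use hub vertices, a shared padding cycle $D$, and a contribution $t\,d_j$ per covered element --- which does not change the substance of the argument.
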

\begin{proof}
We show a parameterized reduction from the Set Cover problem. The Set Cover problem is known to be $W[2]$-hard \cite{bonnet2016parameterized}. The details on the proof are given in the Appendix.%The details on the proof are given in the extended version \cite{kcorewsdm2020}.
\end{proof}

\begin{thm}\label{cor:tmcv_para_d}
 The KCM problem is para-NP-hard parameterized by $k$.
\end{thm}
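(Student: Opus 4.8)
The plan is to derive the statement directly from the definition of para-NP-hardness together with the hardness results already established for fixed $k$. Recall that a parameterized problem is \emph{para-NP-hard} precisely when it remains NP-hard for some \emph{constant} value of the parameter; equivalently, when the ``slice'' obtained by fixing the parameter to that constant is itself NP-hard. Hence the entire argument reduces to exhibiting one fixed value of $k$ for which KCM is NP-hard, and this has already been done.

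Concretely, I would invoke Theorem~\ref{thm:hard_approx}: its reduction from the Set Union Knapsack problem builds, for the fixed value $k=3$, a graph $G'$ in which deciding whether $b$ edge deletions can push a prescribed number of vertices out of the $3$-core is NP-hard (indeed NP-hard to approximate within any constant factor). Since $3$ is a constant that does not grow with the input size in that construction, the restriction of KCM to instances with $k=3$ is exactly the parameter-value-$3$ slice of ``KCM parameterized by $k$'', and that slice is NP-hard. By the definition above, this already witnesses that KCM parameterized by $k$ is para-NP-hard. If one prefers an even smaller parameter value, Theorem~\ref{thm:np_hard_k_1_2} yields the same conclusion via the slices $k=1$ and $k=2$.

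I would finish with a one-line contextual remark: para-NP-hardness is strictly stronger than non-membership in FPT, since a para-NP-hard problem cannot lie in XP unless $\mathrm{P}=\mathrm{NP}$; in particular there is no algorithm solving KCM in time $n^{g(k)}$ for any function $g$ under that assumption. The only point that genuinely needs checking — and it is immediate from the construction used for Theorem~\ref{thm:hard_approx} — is that $k$ there is a true constant rather than a quantity scaling with $|G'|$. So there is essentially no obstacle in this proof; all the combinatorial work was already carried out in establishing NP-hardness of KCM for fixed $k\ge 3$.
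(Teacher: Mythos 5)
Your proposal is correct and follows essentially the same route as the paper, which likewise derives para-NP-hardness directly from the fact that KCM is NP-hard already for a constant value of $k$ (Theorems~\ref{thm:np_hard_k_1_2} and \ref{thm:hard_approx}). Your extra check that $k$ is a genuine constant in the reduction, and the remark about exclusion from XP unless $\mathrm{P}=\mathrm{NP}$, are accurate but add nothing beyond the paper's one-line argument.
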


This can be proven from the fact that our problem KCM is NP-hard even for constant $k$. Motivated by these strong hardness and inapproximability results, we next consider some practical heuristics for the KCM problem. %Next, we attempt to build 

\section{Algorithms}
\label{sec:algo}

According to Theorems \ref{thm:hard_approx} and \ref{thm: param_approx}, an optimal solution--- or  constant-factor approximation---for $k$-core minimization requires enumerating all possible size-$b$ subsets from the candidate edge set, assuming $P\!\neq\!NP$. In this section, we propose efficient heuristics for KCM. 

%This requires evaluation of $O(|\Gamma|^b)$ sets. Theorem \ref{thm: hard_approx} implies a stronger result that it would require the same number of evaluation to achieve a constant approximation of the optimal solution. Instead, we start with an iterative method that selects the best edge to remove in subsequent iteration.

\subsection{Baseline: Greedy Cut~\cite{zhu2018k}}
\label{sec:algo_greedy}

For KCM, only the current $k$-core of the graph, $\mathscr{G}(V_k,E_k)=C_k(G)$ ($|V_k|=N_k$,$|E_k|=M_k$), has to be taken into account. Remaining nodes will already be in a lower-than-$k$-core and can be removed. We define a vulnerable set $VS_k(e,\mathscr{G})$ as those nodes that would be demoted to a lower-than-$k$-core if edge $e$ is deleted from the current core graph $\mathscr{G}$. Algorithm \ref{alg:GC} (GC) is a greedy approach for selecting an edge set $B$ ($|B|=b$) that maximizes the $k$-core reduction, $f_k(B)$. In each step, it chooses the edge that maximizes $|VS_k(e,\mathscr{G})|$  (step $3$-$4$) among the candidate edges $\Gamma$. The specific procedure for computing $VS_k(e,\mathscr{G})$ (step $3$), $LocalUpdate$  and their running times ($O(M_k+N_k)$) are described in the Appendix. The overall running time of GC is $O(b|\Gamma| (M_k+N_k))$. %The $LocalUpdate$ procedure is described in ~\cite{zhu2018k}.  

%The best one produces maximum $|VS_k(e,\mathscr{G})|$  (step $7$). Computing $VS_k$ (step $5$) procedure is described in Appendix and it takes at most $O(M_k+N_k)$ where $M_k$ and $N_k$ are the number of edges and vertices in $C_k(G)$ respectively. So, the overall running time of GC is $O(b|\Gamma| (M_k+N_k))$.

%The Vulnerable Set Algorithm

\begin{algorithm}[t]
\caption{Greedy Cut (GC)}
%\begin{algorithmic}[1]
\label{alg:GC}
\KwInput{$G,k,b$}
\KwOutput{$B$: Set of edges to delete}
$B\leftarrow \emptyset,max \leftarrow -\infty, \mathscr{G}\leftarrow C_k(G)$\\
\While{$|B|<b$}{
$e^*\leftarrow \argmax_{e \in \mathscr{G}(E_k)\setminus B} |computeVS(e=(u , v),\mathscr{G},k)|$\\
$B\leftarrow B \cup \{e^*\}$ \\
LocalUpdate$(e,\mathscr{G},k)$ \\
}
\textbf{return} $B$
%\end{algorithmic}
\end{algorithm}

\subsection{Shapley Value Based Algorithm}
\label{sec:shapley_algo}
The greedy algorithm discussed in the last section is unaware of some dependencies between the candidates in the solution set. For instance, in Figure \ref{fig:kcore_ex2}, all the
  edges have same importance (the value is $0$) to destroy the $2$-core structure. In this scenario, GC will choose an edge arbitrarily. However, removing an optimal set of seven edges can make the graph a tree ($1$-core). To capture these dependencies, we adopt a cooperative game theoretic concept named Shapley Value \cite{shapley1953value}. Our goal is to make a coalition of edges (players) and divide the total gain by this coalition equally among the edges inside it.

\subsubsection{Shapley Value}

The Shapley value of an edge $e$ in the context of KCM is defined as follows. Let the value of a coalition $P$ be $\mathscr{V}(P)= f_k(P)=N_k(G)-N_k(G^P)$. Given an edge $e\in \Gamma$ and a subset $P\subseteq \Gamma$ such that $e \notin P$, the marginal contribution of $e$ to $P$ is:
\begin{equation}
\mathscr{V}(P\cup \{e\}) - \mathscr{V}(P),\quad \forall P \subseteq \Gamma
\end{equation}
Let $\Omega$ be the set of all $|\Gamma|!$ permutations of all the edges in $\Gamma$ and $P_e(\pi)$ be the set of all the edges that appear before $e$ in a permutation $\pi$. The Shapley value of $e$ the average of its marginal contributions to the edge set that appears before $e$ in all the permutations:
\begin{equation} \label{eq:SV_all}
   \Phi_e=\frac{1}{|\Gamma|!} \sum_{\pi \in \Omega} \mathscr{V}(P_e (\pi)\cup \{e\}) - \mathscr{V}(P_e (\pi))
\end{equation}

Shapley values capture the importance of an edge inside a set (or coalition) of edges. However, computing Shapley value requires considering $O(|\Gamma|!)$ permutations. Next we show how to efficiently approximate the Shapley value for each edge via sampling.

\begin{algorithm}[h] 
\caption{Shapley Value Based Cut (SV)}
%\begin{algorithmic}[1]
\label{alg:SV}
\KwInput{$G,k,b$}
\KwOutput{$B$: Set of edges to delete}
 Initialize all $\Phi'_e$ as $0$, $\forall e \in \Gamma$ \\
Generate $S=O(\frac{\log{\Gamma}}{\epsilon^2})$ random permutations of edges\\
 $B\leftarrow \emptyset, \mathscr{G}\leftarrow C_k(G)$\\
\For{$\pi \in S$}{
    \For{$e =(u,v) \in \Gamma$} {
      $\Phi'_e \leftarrow \Phi'_e+(\mathscr{V}(P_e (\pi)\cup \{e\}) - \mathscr{V}(P_e (\pi)))$ 
        
    }
}
 $\Phi'_e\leftarrow \frac{\Phi'_e}{|S|}$, $\forall e \in \Gamma$ \\
 Select top $b$ $\Phi'_e$ edges from $B$\\
\textbf{return} $B$
%\end{algorithmic}
\end{algorithm}

\subsubsection{Approximate Shapley Value Based Algorithm} \label{sec:approx_SV}
Algorithm \ref{alg:SV} (Shapley Value Based Cut, SV) selects the best $b$ edges according to their approximate Shapley values based on a sampled set of permutations, $S$. %The number of samples is set according to Theorem \ref{thm:approx_SV} in Section \ref{sec:approx_SV_analysis}. Our sampling strategy is nearly accurate in computing the Shapley values of each edge with probabilistic guarantees (see Section \ref{sec:approx_SV_analysis}). 
For each permutation in $S$, we compute the marginal gains of all the edges. These marginal gains are normalized by the sample size, $s$. In terms of time complexity, steps 4-6 are the dominating steps and take $O(s|\Gamma|(N_k+M_k))$ time, where $N_k$ and $M_k$ are the number of nodes and edges in $C_k(G)$, respectively. Note that similar sampling based methods have been introduced for different applications \cite{castro2009polynomial, maleki2013bounding} (details are in Section \ref{sec:prev_work}).
%%%%%%% Approximate Shapley value 

\subsubsection{Analysis} \label{sec:approx_SV_analysis}
In the previous section, we presented a fast sampling algorithm (SV) for $k$-core minimization using Shapley values. Here, we study the quality of the approximation provided by SV as a function of the number of samples. We show that our algorithm is nearly optimal with respect to each Shapley value with high probability. More specifically, given $\epsilon >0$ and $\delta < 1$, SV takes $p(\frac{1}{\epsilon},\frac{1}{\delta})$ samples, where $p$ is a polynomial in $\frac{1}{\epsilon},\frac{1}{\delta}$, to approximate the Shapley values within $\epsilon$ error with probability $1-\delta$. %Our strategy allows the selection of edges with high Shapley values using a small number of samples, thus enabling scalability to large graphs.

 We sample uniformly with replacement, a set of permutations $S$ ($|S|=s$) from the set of all permutations, $\Omega$. Each permutation is chosen with probability $\frac{1}{|\Omega|}$. Let $\Phi'_e$ be the approximate Shapley value of $e$ based on $S$. $X_i$ is a random variable that denotes the marginal gain in the $i$-th sampled permutation. So, the estimated Shapley value is $\Phi'_e = \frac{1}{s} \sum_{i=1}^s X_i$. Note that $\mathbb{E}[\Phi'_e]= \Phi_e$.
 
 \iffalse
\begin{lemma} \label{lemma:expectation}
Given $s$ sampled permutations and  $\Phi'_e = \frac{1}{s} \sum_{i=1}^s X_i$, for any edge $e \in \Gamma$, $\mathbb{E}[\Phi'_e]= \Phi_e$.

%\vspace{1mm}
\end{lemma}
\begin{proof} The proof is omitted for brevity.
\end{proof}
\fi

\begin{theorem} \label{thm:approx_SV}
Given $\epsilon$ $(0<\epsilon<1)$, a positive integer $\ell$, and a sample of independent permutations $S, |S|=s$, where 
$s \geq \frac{(\ell+1)\log{|\Gamma|}}{2\epsilon^2}$; then $\forall e \in \Gamma$:
\begin{equation*}
Pr (|\Phi'_e- \Phi_e | < \epsilon \cdot N_k) \geq 1- 2|\Gamma|^{-\ell}
\end{equation*}
 where $N_k$ denotes the number of nodes in $C_k (G)$.
\end{theorem}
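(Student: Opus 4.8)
The plan is to apply a Hoeffding-type concentration bound to the empirical average $\Phi'_e = \frac{1}{s}\sum_{i=1}^s X_i$, followed by a union bound over the $|\Gamma|$ edges. First I would observe that each $X_i$, the marginal contribution $\mathscr{V}(P_e(\pi)\cup\{e\}) - \mathscr{V}(P_e(\pi))$ in the $i$-th sampled permutation, is bounded: since $\mathscr{V}(P) = N_k(G) - N_k(G^P)$ takes values in $[0, N_k]$ (here $N_k$ abbreviates $N_k(G)$, the size of the current $k$-core, which contains all vertices that any edge removal can affect), the marginal gain $X_i$ lies in the interval $[0, N_k]$; in fact one can even argue $|X_i| \le N_k$ suffices. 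The permutations are sampled independently and uniformly from $\Omega$, so the $X_i$ are i.i.d.\ with $\mathbb{E}[X_i] = \Phi_e$ by the identity $\mathbb{E}[\Phi'_e] = \Phi_e$ noted just before the statement.

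Next I would invoke Hoeffding's inequality: for independent variables $X_i \in [0, N_k]$,
\begin{equation*}
Pr\left(\left|\Phi'_e - \Phi_e\right| \ge \epsilon N_k\right) \le 2\exp\left(-\frac{2 s^2 (\epsilon N_k)^2}{s N_k^2}\right) = 2\exp(-2 s \epsilon^2).
\end{equation*}
Substituting the hypothesis $s \ge \frac{(\ell+1)\log|\Gamma|}{2\epsilon^2}$ gives $2\exp(-2s\epsilon^2) \le 2\exp(-(\ell+1)\log|\Gamma|) = 2|\Gamma|^{-(\ell+1)}$ for a single edge $e$. Then I would take a union bound over all $e \in \Gamma$: the probability that \emph{some} edge has error at least $\epsilon N_k$ is at most $|\Gamma| \cdot 2|\Gamma|^{-(\ell+1)} = 2|\Gamma|^{-\ell}$, which yields the claimed $Pr(|\Phi'_e - \Phi_e| < \epsilon N_k) \ge 1 - 2|\Gamma|^{-\ell}$ simultaneously for all edges.

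The only genuinely delicate point—hardly an obstacle, but the step worth stating carefully—is justifying the range of $X_i$. One must argue that $N_k(G^P)$ is monotone under adding edges to the removal set (so $\mathscr{V}$ is nondecreasing and the marginal gains are nonnegative) and that no removal can push more than $N_k(G)$ vertices out of the $k$-core, since vertices outside $C_k(G)$ are already excluded; this bounds each $X_i$ within a window of width $N_k$, which is exactly what Hoeffding requires. Everything else is a routine substitution. I would also remark that the sample size $s = O(\log|\Gamma| / \epsilon^2)$ used in Algorithm~\ref{alg:SV} corresponds to taking $\ell$ a small constant, so the guarantee degrades gracefully and the running time stays polynomial in $1/\epsilon$ and $\log(1/\delta)$ as claimed in the surrounding text.
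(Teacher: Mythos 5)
Your proposal is correct and follows essentially the same route as the paper: Hoeffding's inequality applied to the i.i.d.\ per-permutation marginal gains, bounded in a window of width $N_k$, followed by a union bound over the $|\Gamma|$ edges, with the sample size $s \geq \frac{(\ell+1)\log|\Gamma|}{2\epsilon^2}$ absorbing the extra factor of $|\Gamma|$. The only cosmetic difference is that the paper first states the bound in terms of the per-edge maximum gain $Q_e$ before passing to $N_k$ and applies the union bound before substituting $s$, whereas you bound the range by $N_k$ directly (with an explicit monotonicity justification) and substitute $s$ first; the calculations are identical.
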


\begin{proof}
We start by analyzing the Shapley value of one edge. Because the samples provide an unbiased estimate and are i.i.d., we can apply \emph{Hoeffding's inequality}~\cite{hoeff1963} to bound the error for edge $e$:
%\vspace{-0.10in}
\begin{equation}
Pr[|\Phi'_e-\Phi_e| \geq \epsilon \cdot Q_e ]\leq \delta 
\end{equation}
where $\delta=2\exp\left(-\frac{2s^2\epsilon^2 Q^2_e}{\mathcal{R}}\right)$, $\mathcal{R} =\sum\limits_{i=1}^{s}(b_i-a_i)^2$, and each $X_i$ is strictly bounded by the intervals $[a_i, b_i]$. Let $Q_e= Max\{\mathscr{V}(P_e (\pi)\cup \{e\}) - \mathscr{V}(P_e (\pi))| \pi \in \Omega\}$ be the maximum gain for $e$ in any permutation. Then, $\mathcal{R}< sQ^2_e$, as for any $X_i$ the minimum and maximum values are $0$ and $Q_e$ respectively. As a consequence:
\begin{equation*}
\delta =2\exp\left(-\frac{2s^2\epsilon^2 Q^2_e}{\mathcal{R}}\right) < 2\exp\left(-\frac{2s^2\epsilon^2 Q^2_e}{sQ^2_e}\right)  =  2\exp\left(-2s\epsilon^2 \right)
\end{equation*}

Thus, the following holds for each edge $e$:
 \begin{equation*}
 Pr[|\Phi'_e-\Phi_e| \geq \epsilon \cdot Q_e ] < 2\exp\left(-2s\epsilon^2 \right)
 \end{equation*}
 
 Using the above equation we compute a joint sample bound for all edges $e\in \Gamma$. Let $\Gamma=\{e_1,e_2,...,e_{|\Gamma|}\}$ and $E_i$ be the event that $|\Phi'_{e_i}-\Phi_{e_i}| \geq \epsilon \cdot Q_{e_i}$. So, 
 $Pr[E_i]= Pr[|\Phi'_{e_i}-\Phi_{e_i}| \geq \epsilon \cdot Q_{e_i} ] < 2\exp\left(-2s\epsilon^2 \right) $. Similarly, one can prove that 
 %\begin{equation*}
 $Pr[|\Phi'_{e_i}-\Phi_{e_i}| \geq \epsilon \cdot N_k ] \leq  \delta'$,
% \end{equation*}
 where $\delta' =2\exp\left(-\frac{2s^2\epsilon^2 N^2_k}{\mathcal{R}}\right) <  2\exp\left(-2s\epsilon^2 \right)$, as  $\mathcal{R}< sN^2_k$.
 
 Applying union bound ($Pr(\cup_{i}E_i)\leq \sum_{i}Pr(E_i)$), for all edges in $ \Gamma$, i.e., $\forall i \in \{1,2,...|\Gamma|\}$, we get that:
 \begin{equation*}
 Pr[|\Phi'_{e_i}-\Phi_{e_i}| \geq \epsilon \cdot N_k ] < 2|\Gamma|\exp\left(-2s\epsilon^2 \right)
 \end{equation*}
 
 By choosing $s \geq \frac{(\ell+1)\log{|\Gamma|}}{2\epsilon^2} $, $\forall i \in \{1,2,...|\Gamma|\}$,
 \begin{equation*}
 Pr[|\Phi'_{e_i}-\Phi_{e_i}| \geq \epsilon \cdot N_k ] < \frac{2}{|\Gamma|^\ell}, \quad \text{or,}
 \end{equation*}
  \begin{equation*}
 Pr[|\Phi'_{e_i}-\Phi_{e_i}| < \epsilon \cdot N_k) \geq 1- 2|\Gamma|^{-\ell}
 \end{equation*}
  This ends the proof.
\end{proof}

%For each edge $e\in \Gamma$ the approximation bound in theorem \ref{thm:approx_SV} can be tighter. In particular the following result holds for any edge $e\in \Gamma$.

\iffalse
\begin{cor}\label{cor:anyEdge}
Given $\epsilon$ $(0<\epsilon<1)$, a positive integer $l$, and a sample of independent permutations $S, |S|=s$, where 
$s \geq \frac{l\log{|\Gamma|}}{2\epsilon^2}$; then for an edge $e \in \Gamma$:
\begin{equation*}
Pr (|\Phi'_e- \Phi_e | < \epsilon \cdot Q_e) \geq 1- 2|\Gamma|^{-l} \quad \text{or,}
\end{equation*}
\begin{equation*}
Pr (|\frac{\Phi'_e}{Q_e}- \frac{\Phi_e}{Q_e}| < \epsilon ) \geq 1- 2|\Gamma|^{-l} 
\end{equation*}
 where $Q_e= Max\{\mathscr{V}(C_e (\pi)\cup \{e\}) - \mathscr{V}(C_e (\pi))| \pi \in \Omega\}$, the maximum marginal gain for $e$ in any permutation.
\end{cor}

\begin{proof} The proof is similar to that of Theorem \ref{thm:approx_SV}.
\end{proof}

\fi
%Corollary \ref{cor:anyEdge} implies that, given $\epsilon >0$ and $\delta < 1$, with $poly(\frac{1}{\epsilon},\frac{1}{\delta})$ samples ($s=\frac{l}{2\epsilon^2}\log{\frac{2}{\delta}}$) our sampling scheme produces approximate Shapley values (normalized by $Q_e$) are within $\epsilon$ error with probability $1-\delta$. 

Next, we apply Theorem \ref{thm:approx_SV} to analyze the quality of a set $B$ produced by Algorithm \ref{alg:SV} (SV), compared with the result of an exact algorithm (without sampling). Let the exact Shapley values of top $b$ edges be $\Phi^o_{B}=\{\Phi_{O1},\Phi_{O2},\Phi_{O3},...,\Phi_{Ob}\}$ where $\Phi_{O1}\geq \Phi_{O2}\geq...\geq\Phi_{Ob}$. The set produced by Algorithm \ref{alg:SV} (SV) has Shapley values, $\Phi^a_{B}=\{\Phi_{A1},\Phi_{A2},\Phi_{A3},...,\Phi_{Ab}\}$ where $\Phi_{A1}\geq \Phi_{A2}\geq...\geq\Phi_{Ab}$. We can prove the following result regarding the SV algorithm.

\begin{cor}\label{cor:anyset}
For any  $i, \Phi_{Oi} \in \Phi^o_{B}$ and $\Phi_{Ai} \in \Phi^a_{B}$, $\epsilon$ $(0<\epsilon<1)$, positive integer $\ell$, and a sample of independent permutations $S, |S|=s$, where 
$s \geq \frac{(\ell+1)\log{|\Gamma|}}{2\epsilon^2}$:
\begin{equation*}
Pr (|\Phi_{Oi}- \Phi_{Ai} | < 2\epsilon \cdot N_k) \geq 1- 2|\Gamma|^{-\ell}
\end{equation*}
 where $N_k$ denotes the number of nodes in $C_k (G)$.

\end{cor}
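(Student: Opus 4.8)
The plan is to derive Corollary~\ref{cor:anyset} directly from Theorem~\ref{thm:approx_SV} by controlling the ranking error. First I would observe that Theorem~\ref{thm:approx_SV}, together with a union bound over all edges, guarantees that with probability at least $1-2|\Gamma|^{-\ell}$ the event $\mathcal{G} = \{\,|\Phi'_e - \Phi_e| < \epsilon\cdot N_k \text{ for every } e\in\Gamma\,\}$ holds. (Strictly, the theorem as stated bounds the failure probability for each fixed edge; one re-runs the union-bound step in its proof once more to get the simultaneous statement, which only costs replacing $\ell$ by $\ell$ in the exponent — the $(\ell+1)$ in the sample size already absorbs the extra $\log|\Gamma|$ factor. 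I would note this explicitly.) The whole argument is then carried out on the event $\mathcal{G}$.

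Next I would set up the comparison between the true top-$b$ set, with sorted exact values $\Phi_{O1}\geq\cdots\geq\Phi_{Ob}$, and the set $B$ returned by Algorithm~\ref{alg:SV}, whose members have sorted exact values $\Phi_{A1}\geq\cdots\geq\Phi_{Ab}$ and sorted \emph{estimated} values $\Phi'_{A1}\geq\cdots\geq\Phi'_{Ab}$ (these are the $b$ largest estimates overall, by the selection rule in step~8). The key chain of inequalities, valid on $\mathcal{G}$, is: for each $i$,
\begin{equation*}
\Phi_{Ai} \;>\; \Phi'_{Ai} - \epsilon N_k \;\geq\; \Phi'_{Oi} - \epsilon N_k \;>\; \Phi_{Oi} - 2\epsilon N_k,
\end{equation*}
where the middle step uses that the $i$-th largest estimated value among the selected edges is at least the $i$-th largest estimated value among any $b$ edges — in particular at least $\Phi'_{Oi}$, the $i$-th largest estimate among the true optimal set. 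Since also $\Phi_{Oi}\geq\Phi_{Ai}$ (the $O$-set consists of the edges with the $b$ largest \emph{exact} values, so its $i$-th order statistic dominates that of any other $b$-subset), we get $0 \leq \Phi_{Oi} - \Phi_{Ai} < 2\epsilon N_k$, i.e. $|\Phi_{Oi}-\Phi_{Ai}| < 2\epsilon N_k$. Combining with $\Pr[\mathcal{G}]\geq 1-2|\Gamma|^{-\ell}$ finishes the proof.

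The main obstacle I anticipate is the order-statistic bookkeeping in the middle step: one has to be careful that "$i$-th largest estimate among the selected $b$ edges $\geq$ $i$-th largest estimate among the optimal $b$ edges" is genuinely justified by the greedy top-$b$ selection, and that the sorted-value indices are matched consistently (the $i$-th element of $\Phi^a_B$ sorted by exact value need not be the $i$-th element sorted by estimated value, so one should phrase the argument so it does not secretly assume the two orderings coincide). A clean way around this is to prove the slightly stronger, order-free statement that $\min$ over matchings is bounded, or simply to argue via the identity that for any two sorted length-$b$ lists the $i$-th order statistics satisfy the displayed inequalities whenever every coordinate-wise estimate is within $\epsilon N_k$ of its true value; I would spell that out as a one-line lemma to keep the corollary's proof honest.
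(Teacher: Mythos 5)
Your proof is correct and follows essentially the same route as the paper's own argument: invoke Theorem \ref{thm:approx_SV} (with the union bound already paid for by the $(\ell+1)$ in the sample size) to get the simultaneous event that every edge's estimate is within $\epsilon\cdot N_k$ of its exact Shapley value, then use the fact that the algorithm's $i$-th largest estimate dominates the optimal set's $i$-th largest estimate to sandwich $\Phi_{Oi}-\Phi_{Ai}$ inside $[0,2\epsilon\cdot N_k)$. The only difference is presentational: you make explicit the order-statistic bookkeeping (the chain $\Phi_{Ai} > \Phi'_{Ai}-\epsilon N_k \geq \Phi'_{Oi}-\epsilon N_k > \Phi_{Oi}-2\epsilon N_k$ together with $\Phi_{Oi}\geq\Phi_{Ai}$) that the paper compresses into the single assertion $\Phi'_{Ai} > \Phi'_{Oi}$.
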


\begin{proof} For all edges $e \in \Gamma$, Theorem \ref{thm:approx_SV} shows that $Pr (|\Phi'_e- \Phi_e | < \epsilon \cdot N_k) \geq 1- 2|\Gamma|^{-\ell}$. So, with probability $1- 2|\Gamma|^{-\ell}$, $|\Phi'_{Oi}- \Phi_{Oi} | < \epsilon \cdot N_k$ and $ |\Phi'_{Ai}- \Phi_{Ai} | < \epsilon \cdot N_k$. As $\Phi'_{Ai} > \Phi'_{Oi} $, $ |\Phi_{Oi}- \Phi_{Ai} | < 2\epsilon \cdot N_k$  with the same probability. 
\end{proof}

At this point, it is relevant to revisit the hardness of approximation result from Theorem \ref{thm:hard_approx} in the light of Corollary \ref{cor:anyset}. First, SV does not directly minimize the KCM objective function (see Definition \ref{def:kcm}). Instead, it provides a score for each candidate edge $e$ based on how different permutations of edges including $e$ minimize the KCM objective under the assumption that such scores are divided fairly among the involved edges. Notice that such assumption is not part of the KCM problem, and thus Shapley values play the role of a heuristic. Corollary \ref{cor:anyset}, which is a polynomial-time randomized approximation scheme (PRAS) type of guarantee instead of a constant-factor approximation, refers to the exact Shapley value of the top $b$ edges, and not the KCM objective function. We evaluate how SV performs regarding the KCM objective in our experiments. 
%We also show a comparison with respect to the expensive optimal algorithm. 

\subsubsection{Generalizations} Sampling-based approximate Shapley values can also be applied to other relevant combinatorial problems on graphs for which the objective function is not submodular.  Examples of these problems include $k$-core anchoring  \cite{bhawalkar2015preventing}, influence minimization \cite{kimura2008minimizing}, and network design \cite{dilkina2011}).

\begin{table}[t]
%\vspace{-1mm}parameters:We 
\centering
%\small
\begin{tabular}{| c | c | c |c|c|}
\hline
\textbf{Dataset Name}& \textbf{$|V|$} & \textbf{$|E|$} & \textbf{$k_{max}$}  & Type\\
\hline
\textbf{Yeast}& 1K & 2.6K & 6 & Biological \\
\hline
\textbf{Human}& 3.6K & 8.9K & 8 & Biological  \\
\hline
\textbf{email-Enron (EE)}& 36K & 183K & 42 & Email \\
\hline
\textbf{Facebook (FB)}& 60K & 1.5M & 52 & OSN \\
\hline
%\textbf{loc-Gowalla (LG)} & 196K & 950K & 52 & OSN \\
%\hline
%\textbf{com-Youtube (CY)}& 1.1M & 2.9M &  50 & OSN\\
%\hline
\textbf{web-Stanford (WS)}& 280K & 2.3M& 70 & Webgaph\\
\hline
\textbf{DBLP (DB)}& 317K  & 1M & 113 & Co-authorship\\
\hline
\textbf{com-Amazon (CA)}& 335K & 926K & 6 & Co-purchasing\\
\hline
\textbf{Erdos-Renyi (ER)}& 60K & 800K & 19 & Synthetic\\
\hline
\end{tabular}

\caption{Dataset descriptions and statistics. The value of $k_{max}$ (or degeneracy) is the largest $k$ among all the values of $k$ for which there is a $k$-core in the graph. \label{table:data_description}}
\vspace{-3mm}
 \end{table}
 
\section{Experiments}
\label{sec:exp}

In this section, we evaluate the proposed Shapley Value Based Cut (SV) algorithm for k-core minimization against baseline solutions. Sections \ref{sec::effect_sv} and \ref{sec:running_time} are focused on the quality results (k-core minimization) and the running time of the algorithms, respectively. Moreover, in Section \ref{sec:others}, we show how k-core minimization can be applied in the analysis of the structural resilience of networks.
%In this section, we evaluate the algorithms for k-core minimization proposed in this paper---Greedy (GC) and Shapley Value Based Cut (SV)---against baseline solutions using several large-scale graphs. Sections \ref{sec::effect_sv} and \ref{sec:running_time} are focused on the quality results (k-core minimization) and the running time of the algorithms, respectively. Moreover, in Section \ref{sec:others}, we show how k-core minimization can be applied in the analysis of the structural resilience of networks.

%%%%% SV experiments ---- varying budget
\begin{figure*}[ht]
\vspace{-3mm}
    \centering
     \subfloat[DB]{\includegraphics[width=0.24\textwidth]{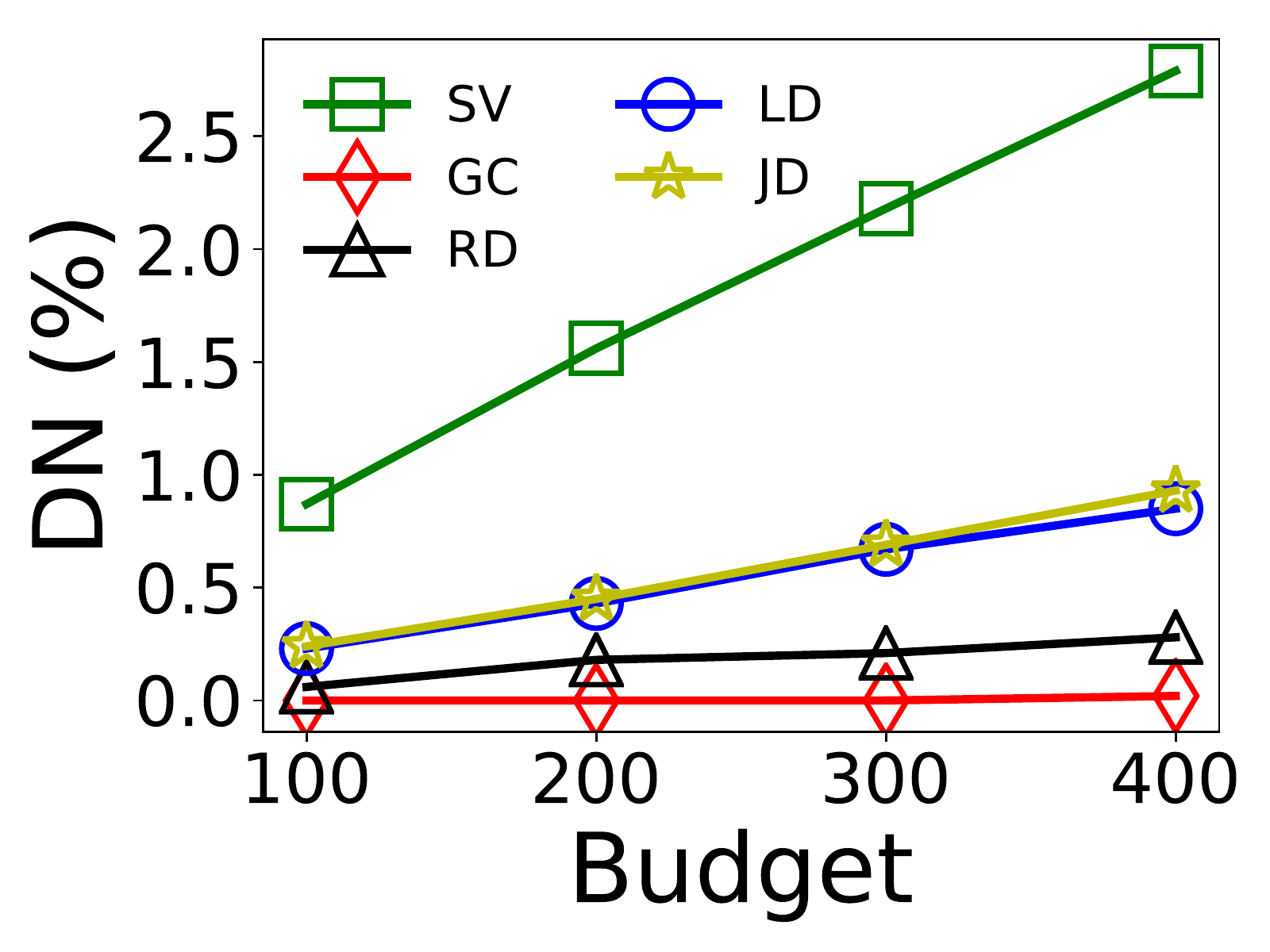}\label{fig:sv_db_budget}}
    \hspace{1mm}
    \subfloat[WS]{\includegraphics[width=0.24\textwidth]{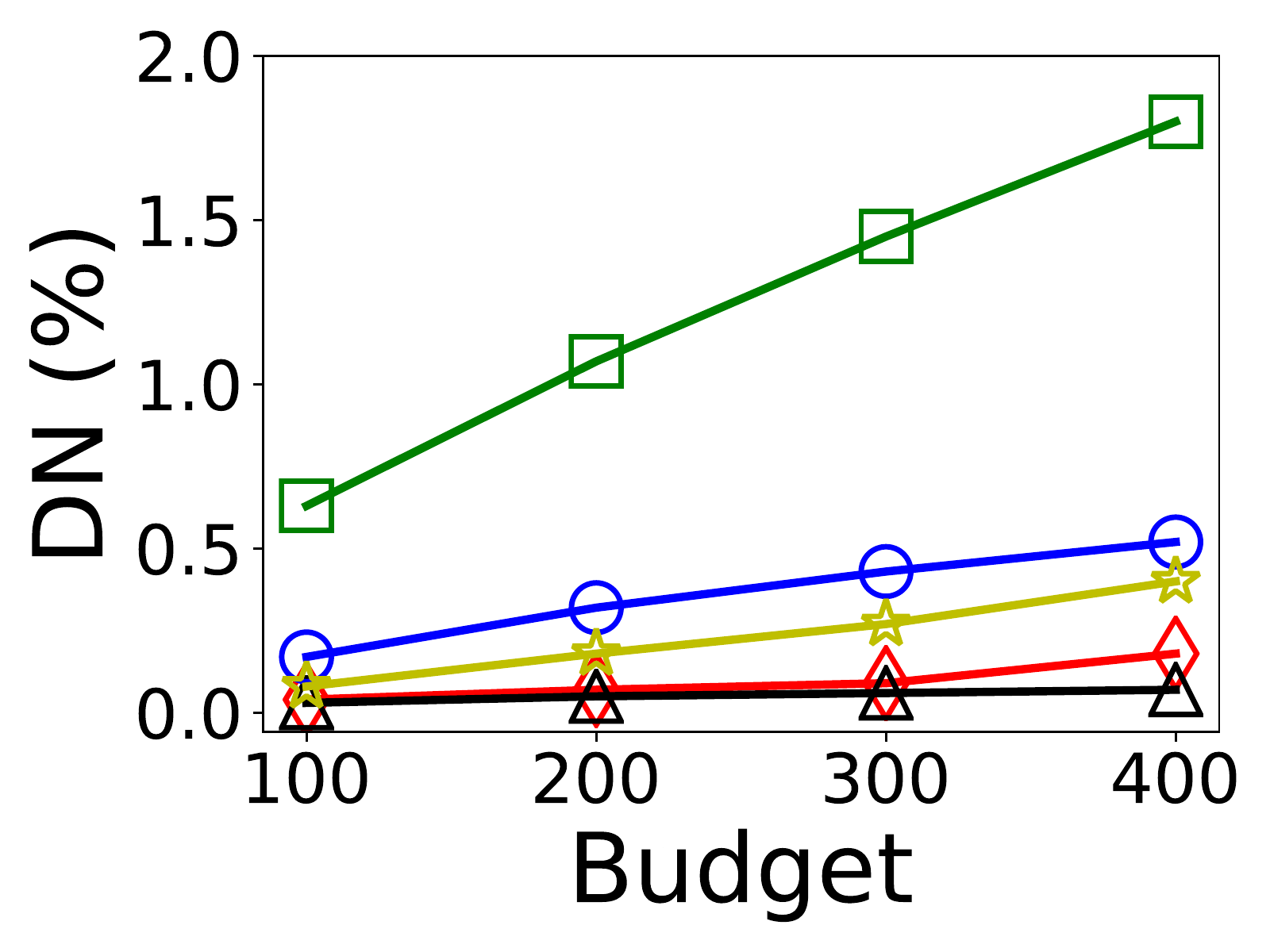}\label{fig:sv_ws_budget}}
    \hspace{1mm}
    \subfloat[EE]{\includegraphics[width=0.24\textwidth]{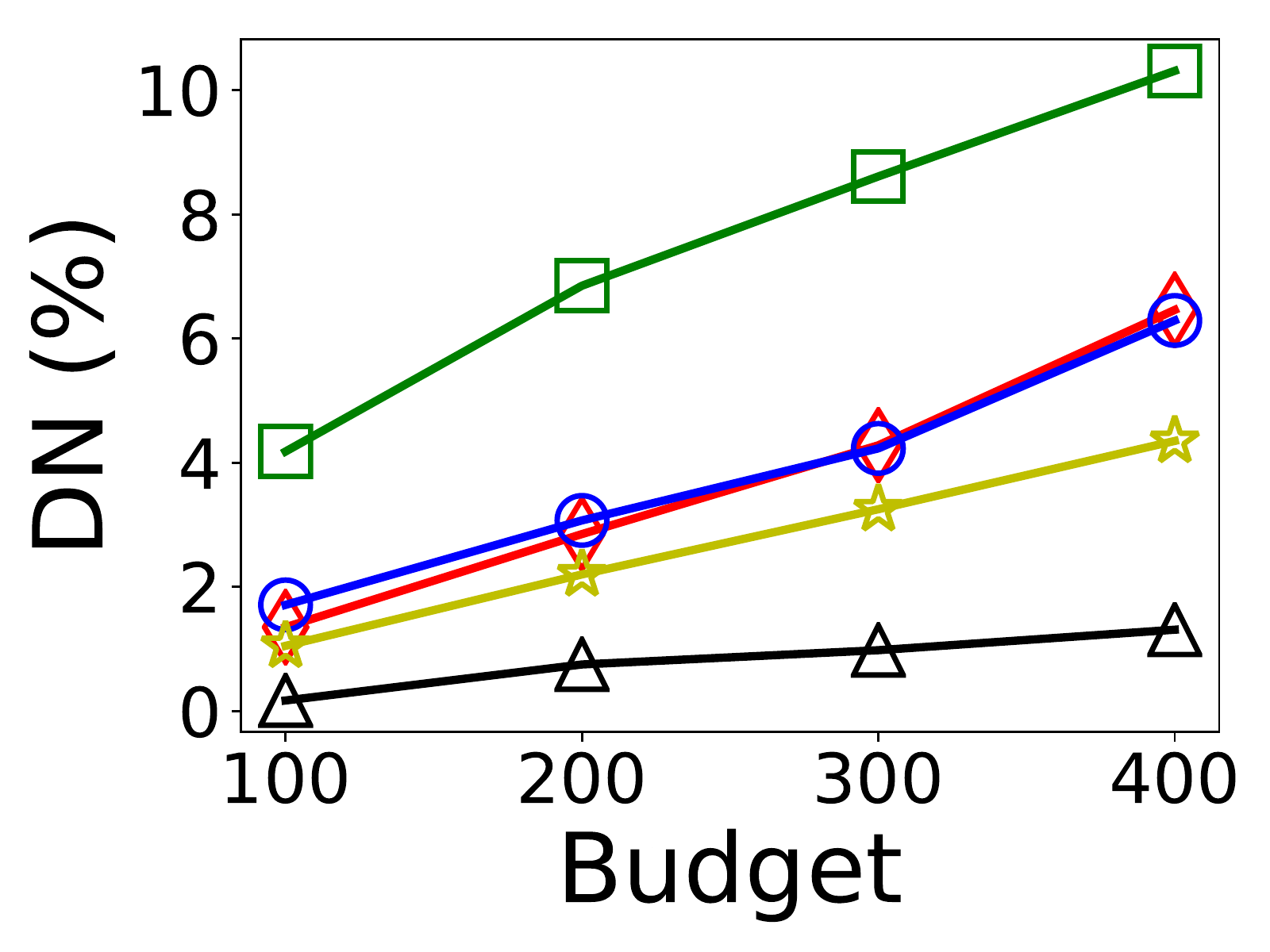}\label{fig:sv_ee_budget}}
    \hspace{1mm}
    \subfloat[FB]{\includegraphics[width=0.24\textwidth]{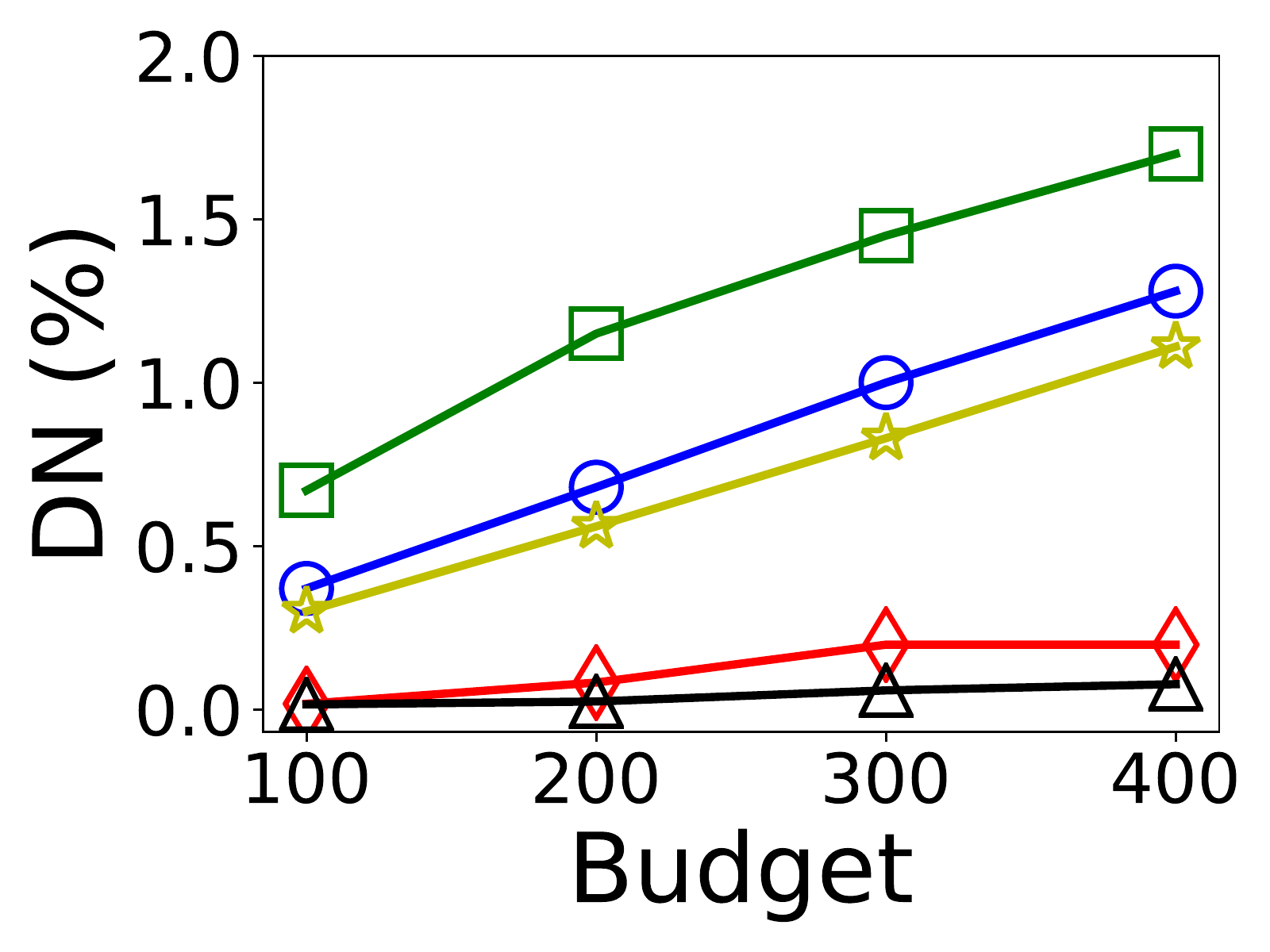}\label{fig:sv_fb_budget}}
    \hspace{1mm}
    \subfloat[FB]{\includegraphics[width=0.24\textwidth]{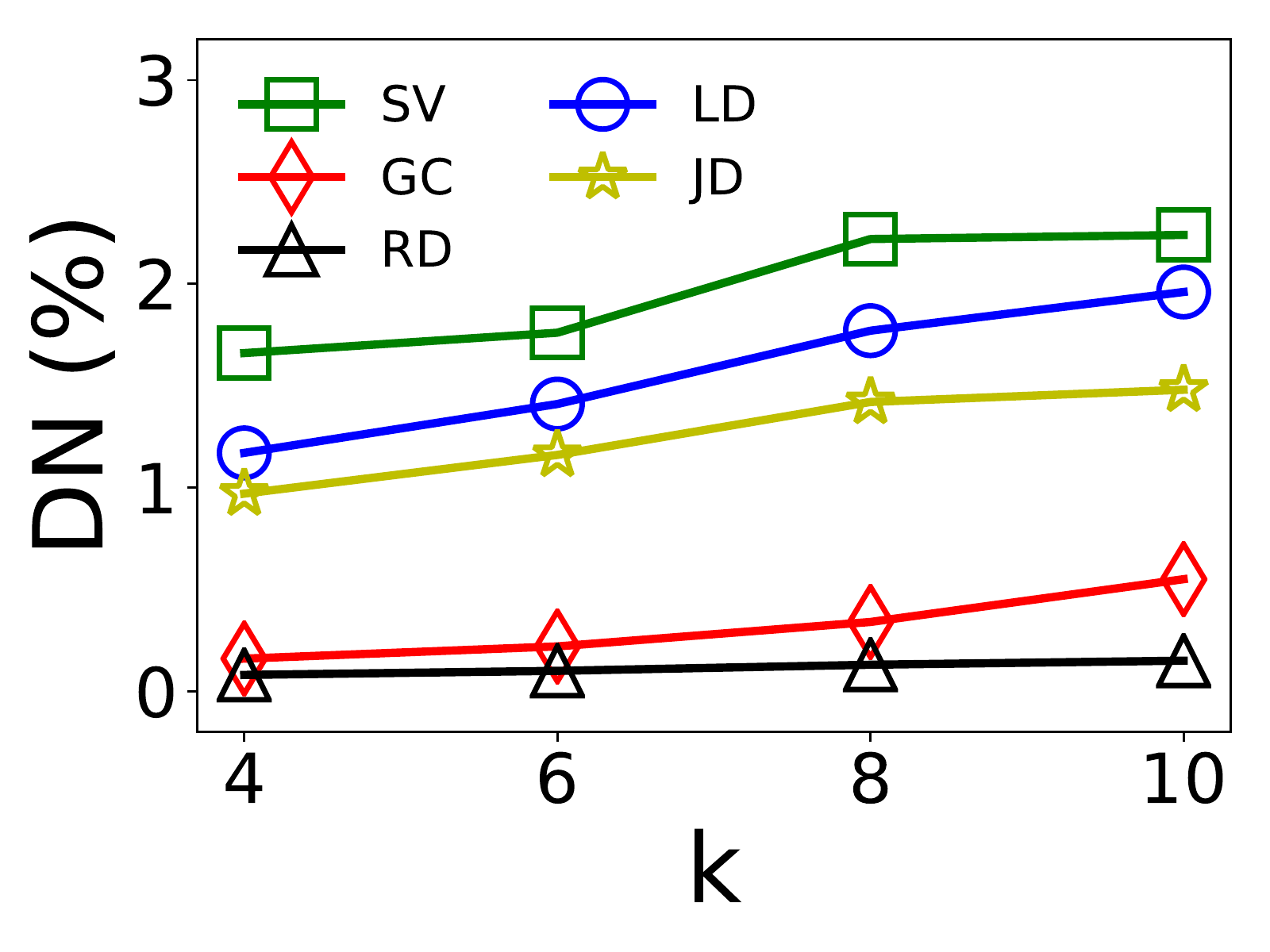}\label{fig:sv_fb_k}}
     \hspace{1mm}
     \subfloat[WS]{\includegraphics[width=0.24\textwidth]{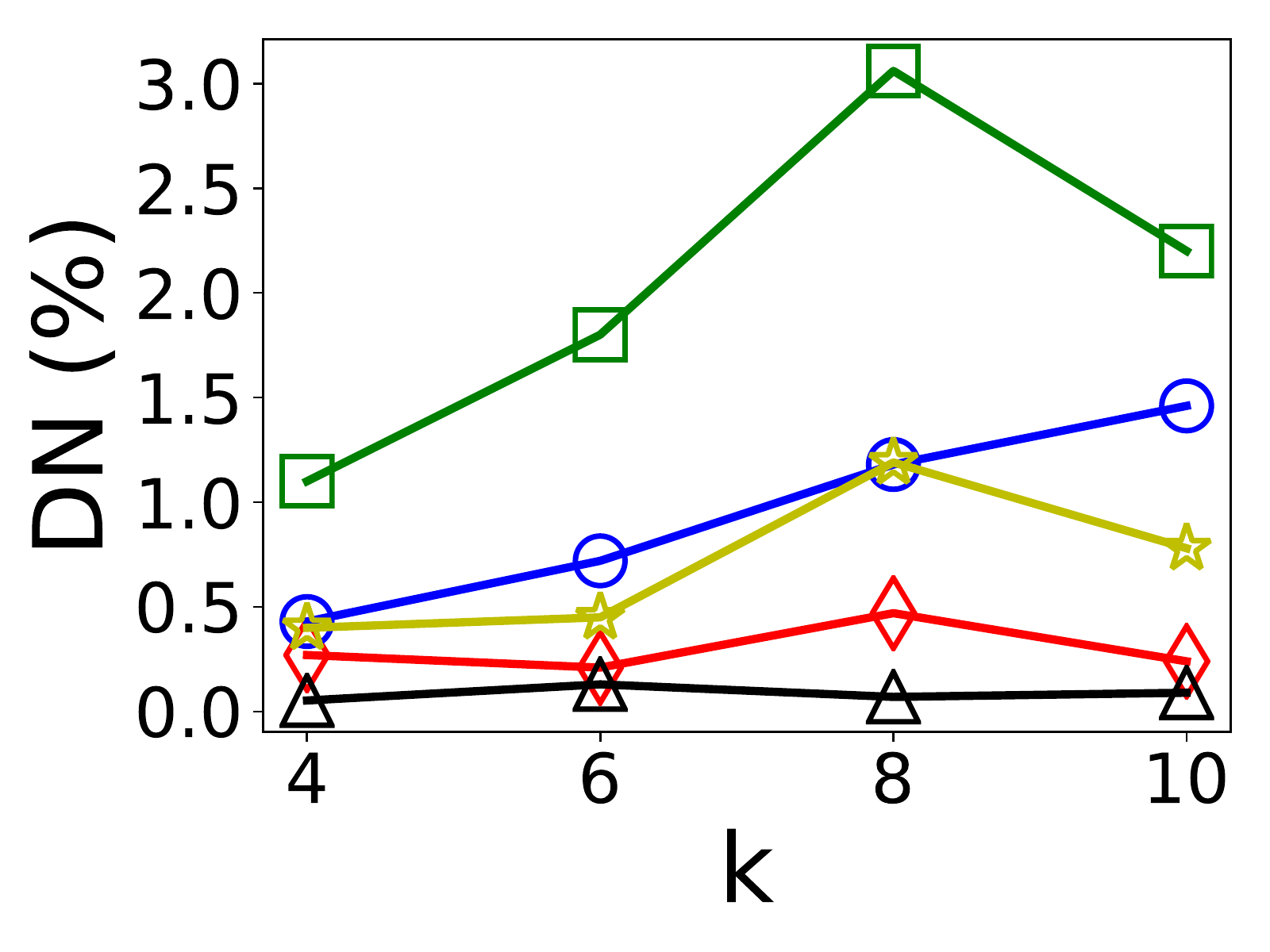}\label{fig:sv_ws_k}}
    \subfloat[FB]{\includegraphics[width=0.24\textwidth]{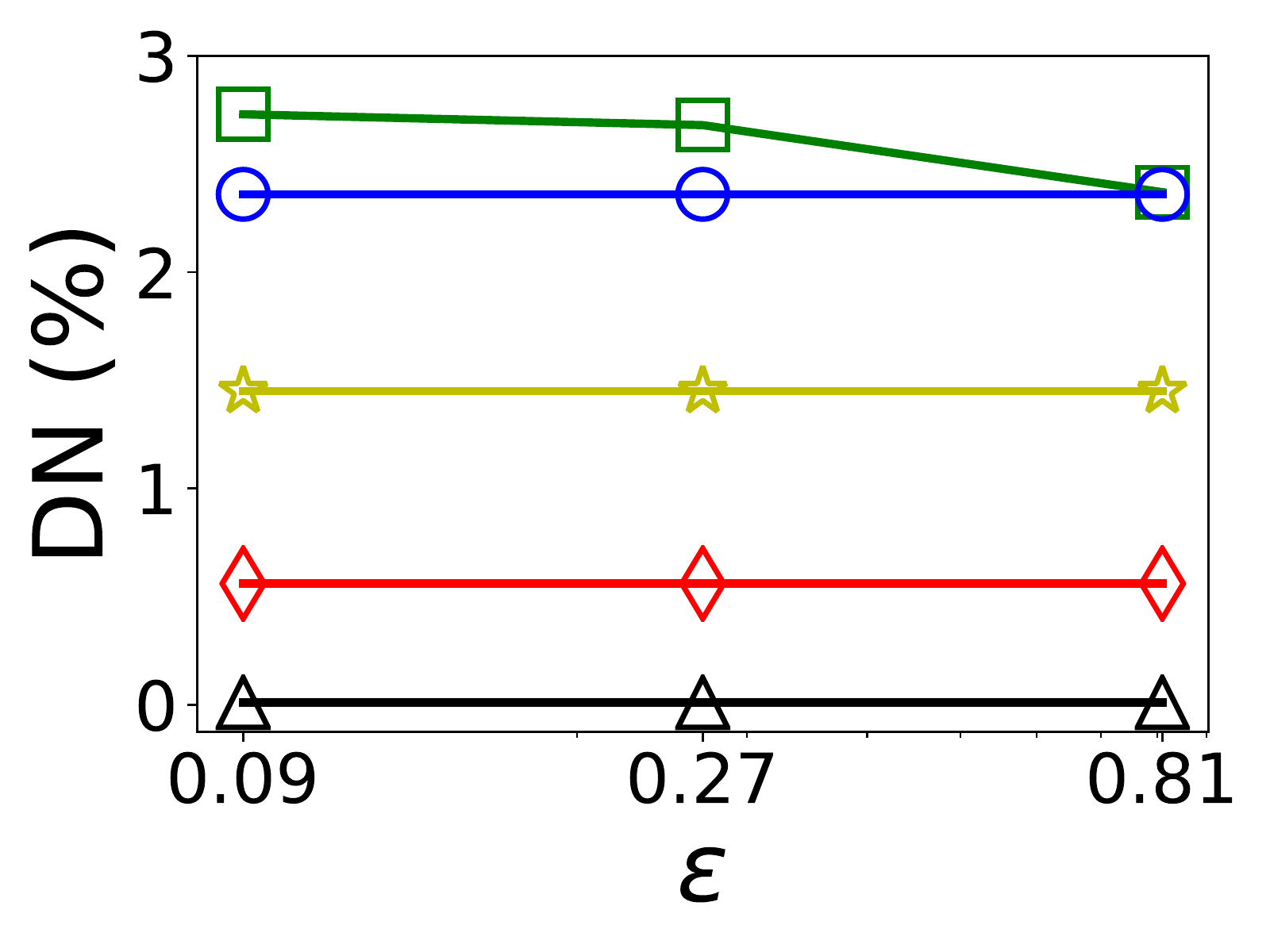}\label{fig:sv_fb_epsilon}}
    \hspace{1mm}
    \subfloat[WS]{\includegraphics[width=0.24\textwidth]{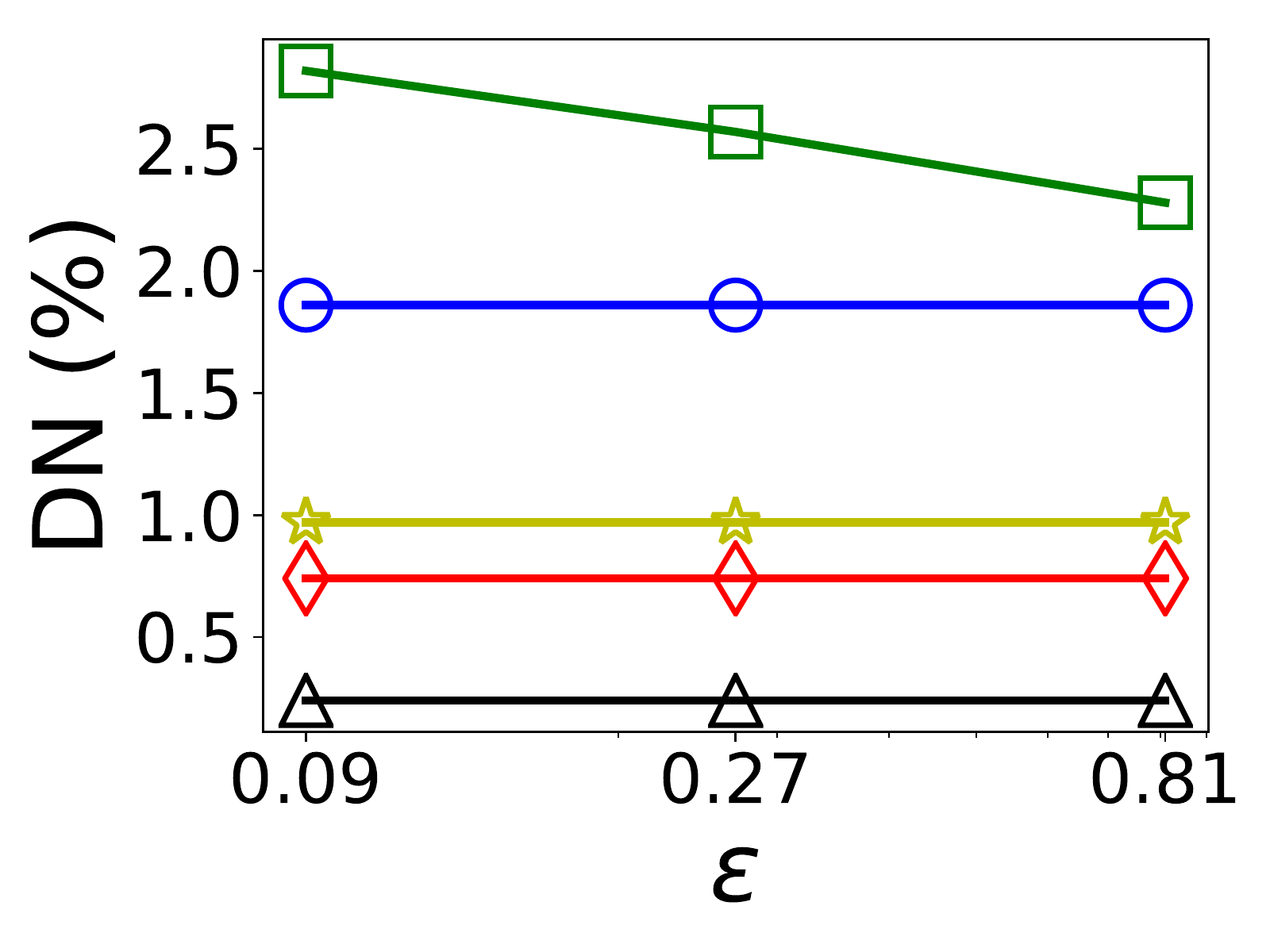}\label{fig:sv_ws_epsilon}}
    
     %\hspace{1mm}
    \vspace{-1mm}
    
    \caption{\textbf{K-core minimization (DN(\%)) for different algorithms varying (a-d) the number of edges in the budget; (e-f) the core parameter $k$; (g-h) and the sampling error $\epsilon$. Some combinations of experiments and datasets are omitted due to space limitations, but those results are consistent with the ones presented here. The Shapley Value based Cut (SV) algorithm outperforms the best baseline (LD) by up to 6 times. On the other hand, the Greedy approach (GC) achieves worse results than the baselines, with the exception of RD, in most of the settings. SV error increases smoothly with $\epsilon$ and LD becomes a good alternative for large values of $k$. \label{fig:SV_budget}}}
    \vspace{-2mm}
\end{figure*}

\subsection{Experimental Setup}

%\textbf{Experimental Setup:} 
All the experiments were conducted on a $2.59$ GHz Intel Core i7-4720HQ machine with $16$ GB RAM running Windows 10. Algorithms were implemented in Java. The source-code of our implementations will be made open-source once this paper is accepted.

\textbf{Datasets:} The real datasets used in our experiments are available online and are mostly from SNAP\footnote{\url{https://snap.stanford.edu}}. The Human and Yeast datasets are available in \cite{moser2009mining}. In these datasets the nodes and the edges correspond to genes and interactions (protein-
protein and genetic interactions) respectively. The Facebook dataset is from \cite{viswanath2009evolution}. Table \ref{table:data_description} shows dataset statistics, including the  largest k-core (a.k.a. degeneracy). These are undirected and unweighted graphs from various applications: EE is from email communication; FB is an online social network, WS is a Web graph, DB is a collaboration network and CA is a product co-purchasing network. We also apply a random graph (ER) generated using the Erdos-Renyi model.

\textbf{Algorithms:} Our algorithm, \textit{Shapley Value Based Cut (SV)} is described in Section \ref{sec:shapley_algo}. %\textit{Greedy Cut (GC)} and \textit{Shapley Value Based Cut (SV)} are algorithms proposed in Sections \ref{sec:algo_greedy} and \ref{sec:shapley_algo}, respectively. %The first baseline is Algorithm \ref{alg:GC} (\textbf{Greedy Cut, GC}) which also an edge based adaptation of the algorithm proposed in \cite{zhang2017finding}. 
Besides the Greedy Cut (GC) algorithm~\cite{zhu2018k} ( Section \ref{sec:algo_greedy}), we also consider three more baselines in our experiments. \textit{Low Jaccard Coefficient (JD)} removes the $k$ edges with lowest Jaccard coefficient. Similarly, \textit{Low-Degree (LD)} deletes $k$ edges for which adjacent vertices have the lowest degree. We also apply \textit{Random (RD)}, which simply deletes $k$ edges from the candidate set $\Gamma$ uniformly at random. Notice that while LD and JD are quite simple approaches for KCM, they often outperform GC.

\textbf{Quality evaluation metric:} We apply the percentage $DN(\%)$ of vertices from the initial graph $G$ that leave the $k$-core after the deletion of a set of edges $B$ (produced by a KCM algorithm):
\begin{equation} \label{eq:DN}
    DN(\%)=\frac{N_k(G)-N_k(G^B)}{N_k(G)}\times 100
\end{equation}

\textbf{Default parameters:} We set the candidate edge set $\Gamma$ to those edges ($M_k(G)$) between vertices in the k-core $C_k(G)$. Unless stated otherwise, the value of the approximation parameter for SV ($\epsilon$) is $0.05$ and the number samples applied is $\frac{\log{|\Gamma|}}{\epsilon^2}$ (see Theorem~\ref{thm:approx_SV}).

%%%%%%%%%%%%%%%%%%%%%%%%%%%%%%%%%%%%%%%%% inlcude main two sections
%\input{4_1_exp_gr}
\subsection{Quality Evaluation}
\label{sec::effect_sv}

KCM algorithms are compared in terms of quality (DN(\%)) for varying budget ($b$), core value $k$, and the error of the sampling scheme applied by the SV algorithm ($\epsilon$). %In the Appendix, we show a comparison of the SV algorithm with the optimal algorithm.
%The goal is to show Algorithm SV outperforms the mentioned baselines in real datasets. The key parameters impacting the performance of SV are the budget $b$, the core $k$, and the number of samples ($s$) used in SV. Note that the number of samples affects the error $\epsilon$ (Theorem \ref{thm:approx_SV}). We study the impact of these parameters on the quality. To highlight the efficacy of SV, we report the \emph{DN($\%$)} (from Eq. \ref{eq:DN}). 

\textbf{Varying budget (b):} Figure \ref{fig:SV_budget} presents the k-core minimization results for $k\!=\!5$---similar results were found for $k\!=\!10$---using four different datasets. SV outperforms the best baseline by up to six times. This is due to the fact that our algorithm can capture strong dependencies among sets of edges that are effective at breaking the k-core structure. On the other hand, GC, which takes into account only marginal gains for individual edges, achieves worse results than simple baselines such as JD and LD. We also compare SV and the optimal algorithm in small graphs and show that SV produces near-optimal results (Section \ref{sec:sv_opt}).
%In particular, LD achieves the best results among the baselines, despite of not being aware of the k-core minimization objective as GC and SV.
%This impressive performance of SV stems from two factors. First, unlike SV, all the baselines including GC are oblivious to the dependencies between the edges in the candidate set. Consequently the performance suffers.

\textbf{Varying core value (k):} We evaluate the impact of $k$ over quality for the algorithms using two datasets (FB and WS) in Figures \ref{fig:sv_fb_k} and \ref{fig:sv_ws_k}. The budget ($b$) is set to $400$. As in the previous experiments, SV outperforms the competing approaches. However, notice that the gap between LD (the best baseline) and SV decreases as $k$ increases. This is due to the fact that the number of samples decreases for higher $k$ as the number of candidate edge also decreases, but it can be mended by a smaller  $\epsilon$. Also, a larger $k$ will increase the level of dependency between candidate edges, which in turn makes it harder to isolate the impact of a single edge---e.g. independent edges are the easiest to evaluate. %This is due to the facts that the number of possible permutations of size $k$ increases exponentially with $k$ but the number of samples ($s$) applied by SV in our experiments becomes lower as there are less candidate edges. Also, the joint effect of a combination of edges is more desired in a core with higher value of $k$. 
On the other hand, a large value of $k$ leads to a less stable k-core structure that can often be broken by the removal of edges with low-degree endpoints. %While we could still increase the number of samples $s$ of SV to stay competitive, 
LD is a good alternative for such extreme scenarios. Similar results were found for other datasets.
%In both the datasets, SV unsurprisingly outperforms all the other baselines. SV also takes a few minutes (Table \ref{table:time_SV_epsilon_k}) to run on large graphs. When $k$ decreases number of candidate edges increase, and thus the running time also increases for SV. 

\textbf{Varying the sampling error ($\epsilon$):} The parameter $\epsilon$ controls the the sampling error of the SV algorithm according to Theorem \ref{thm:approx_SV}. We show the effect of $\epsilon$ over the quality results for FB and WS in Figures \ref{fig:sv_fb_epsilon} and \ref{fig:sv_ws_epsilon}. The values of $b$ and $k$ are set to $400$ and $12$ respectively. The performance of the competing algorithms do not depend on such parameter and thus remain constant. As expected, DN(\%) is inversely proportional to the value of $\epsilon$ for SV. The  trade-off between $\epsilon$ and the running time of our algorithm enables both accurate and efficient selection of edges for k-core minimization.

\subsection{Running Time} 
\label{sec:running_time}

Here, we evaluate the running time of the GC and SV algorithms. In particular, we are interested in measuring the performance gains due to the pruning strategies described in the Appendix. LD and JD do not achieve good quality results in general, as discussed in the previous section, thus we omit them from this evaluation. 

Running times for SV varying the sampling error ($\epsilon$) and the core parameter ($k$) using the FB dataset are given in Figures \ref{fig:fb_ep_time} and \ref{fig:fb_k_time}, respectively. Even for small error, the algorithm is able to process graphs with tens of thousands of vertices and millions of edges in, roughly, one minute. Running times decay as $k$ increases due to two factors: (1) the size of the $k$-core structure decreases (2) pruning gets boosted by a less stable core structure.

\iffalse
\begin{table}[t]
%\vspace{-1mm}
\centering

\begin{tabular}{| c | c | c |c|c|c|c|c|c|c|}
\hline
&\multicolumn{5}{c|}{\textbf{$\epsilon$ }} & \multicolumn{4}{c|}{\textbf{$k$ }}\\
\hline
\textbf{Dataset} & .1 & .2 & .3 & .4 & .5 & 5 & 10 & 15 & 20\\
\hline
\textbf{CY} & 45 & 10 & 5 & 2 & 1 & 525 & 289 & 224 & 159\\

\hline
\textbf{WS} & 123 & 31 & 14 & 9 & 5 & 679 & 493 & 340 & 254 \\

\hline
\end{tabular}

\caption{Running times (seconds) of Algorithm SV varying $\epsilon$ ($k=20, b= 500$) and varying $k$ ($\epsilon=.05, b= 500$). \label{table:time_SV_epsilon_k}}
%\vs
%\vspace{-2mm}
 \end{table}
\fi

%%%%% SV experiments ---- varying epsilon
\begin{figure}[t]
\vspace{-4mm}
    \centering
    \subfloat[Varying $\epsilon$]{\includegraphics[width=0.22\textwidth]{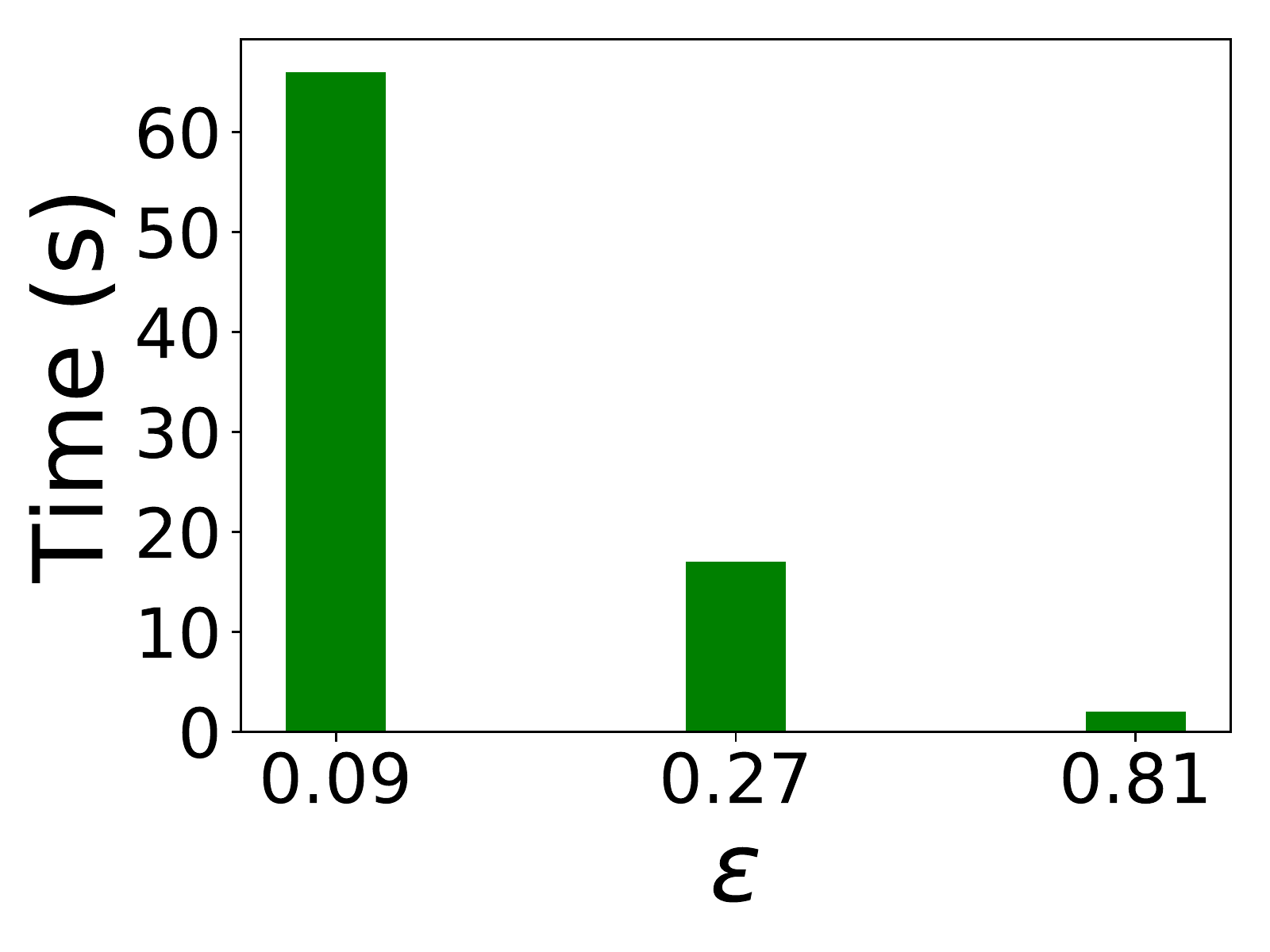}\label{fig:fb_ep_time}}
    \hspace{1mm}
    \subfloat[Varying $k$]{\includegraphics[width=0.22\textwidth]{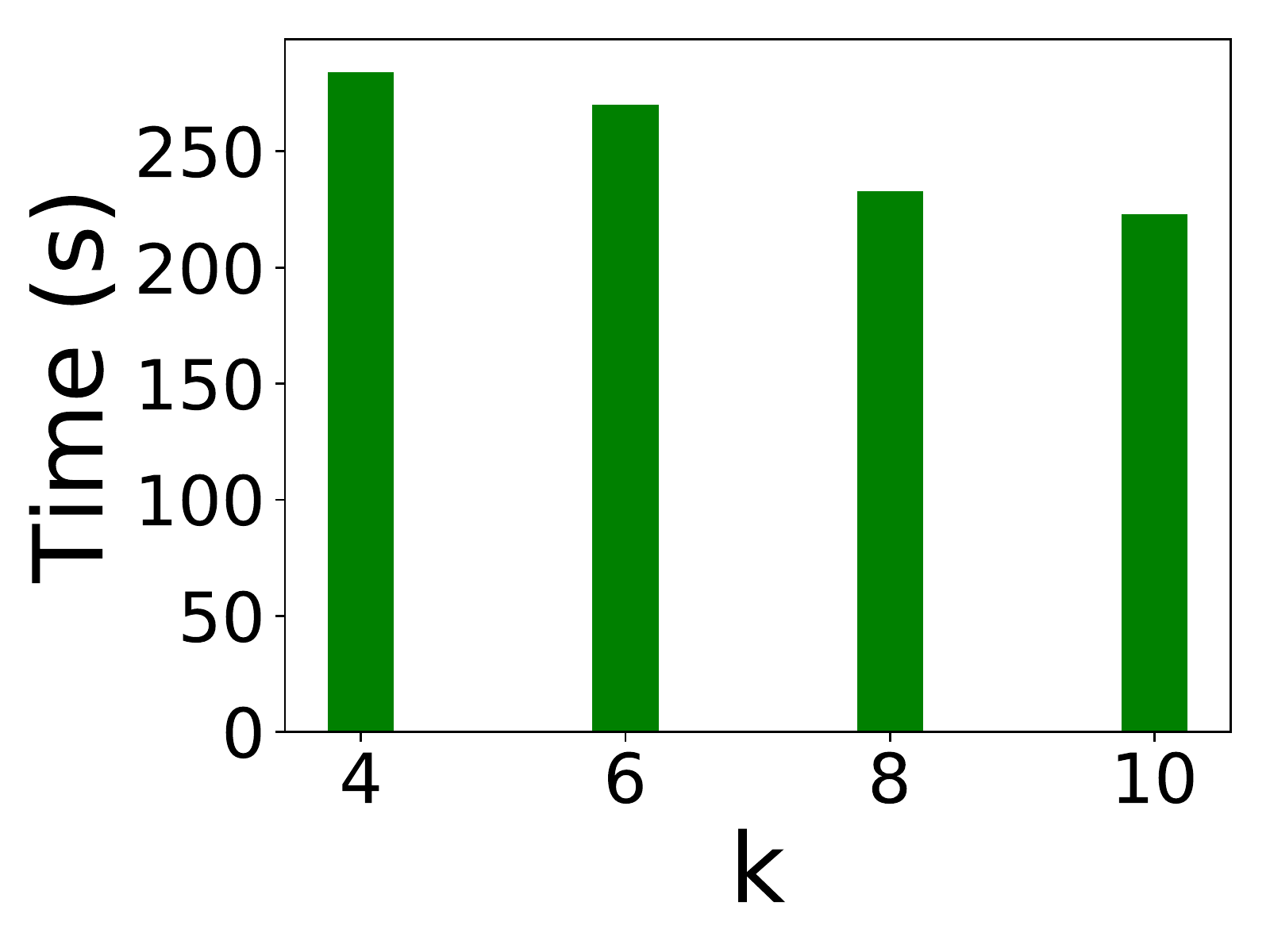}\label{fig:fb_k_time}}   
    %\hspace{1mm}\\
     %\subfloat[Pruning, GC]{\includegraphics[width=0.22\textwidth]{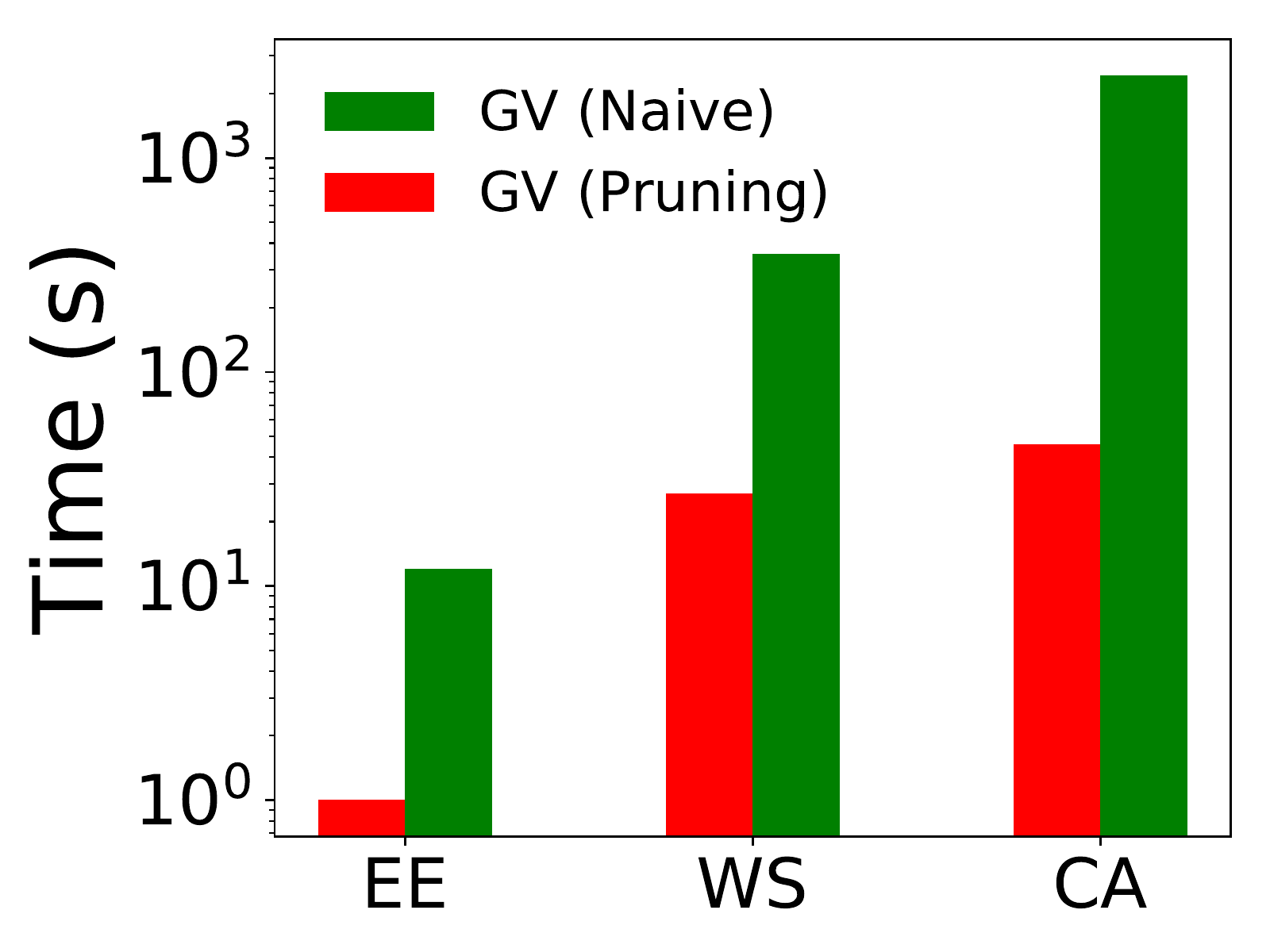}\label{fig:gc_time}}
    %\hspace{1mm}
      % \subfloat[Pruning, SV]{\includegraphics[width=0.22\textwidth]{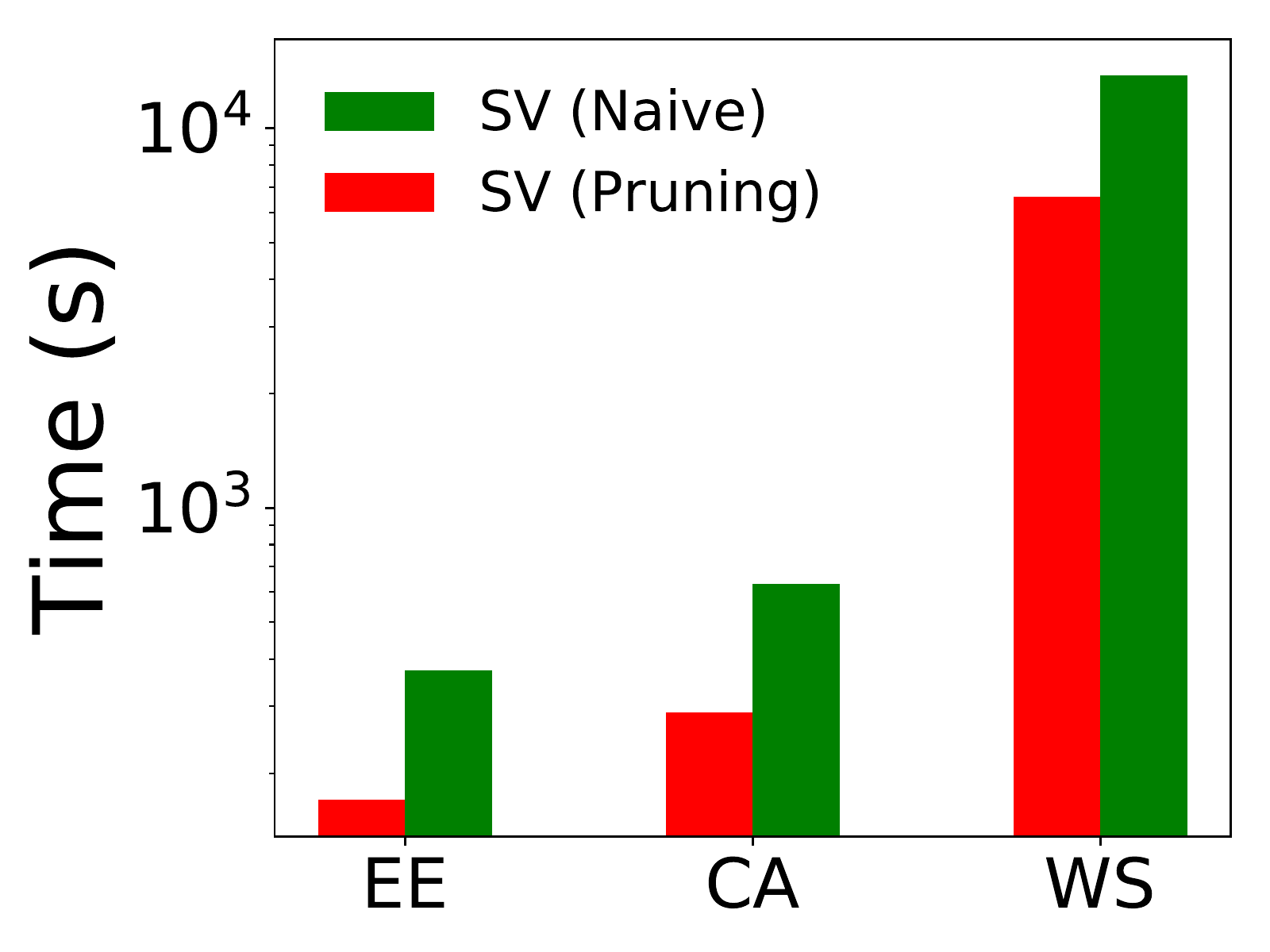}\label{fig:sv_time}}
    \vspace{-1mm}
    \caption{Running times by SV using FB while varying (a) the sampling error $\epsilon$ and (b) the core parameter $k$; SV is efficient even for small values of sampling error and its running time decreases with $k$. \label{fig:gc_sv_time}}
    \vspace{-2mm}
\end{figure}

%%%%% karate network visualization

\begin{figure}[t]
\vspace{-4mm}
    \centering
    \subfloat[ $b=5$]{\includegraphics[width=0.245\textwidth, trim={1.5cm 1cm .5cm .45cm},clip]{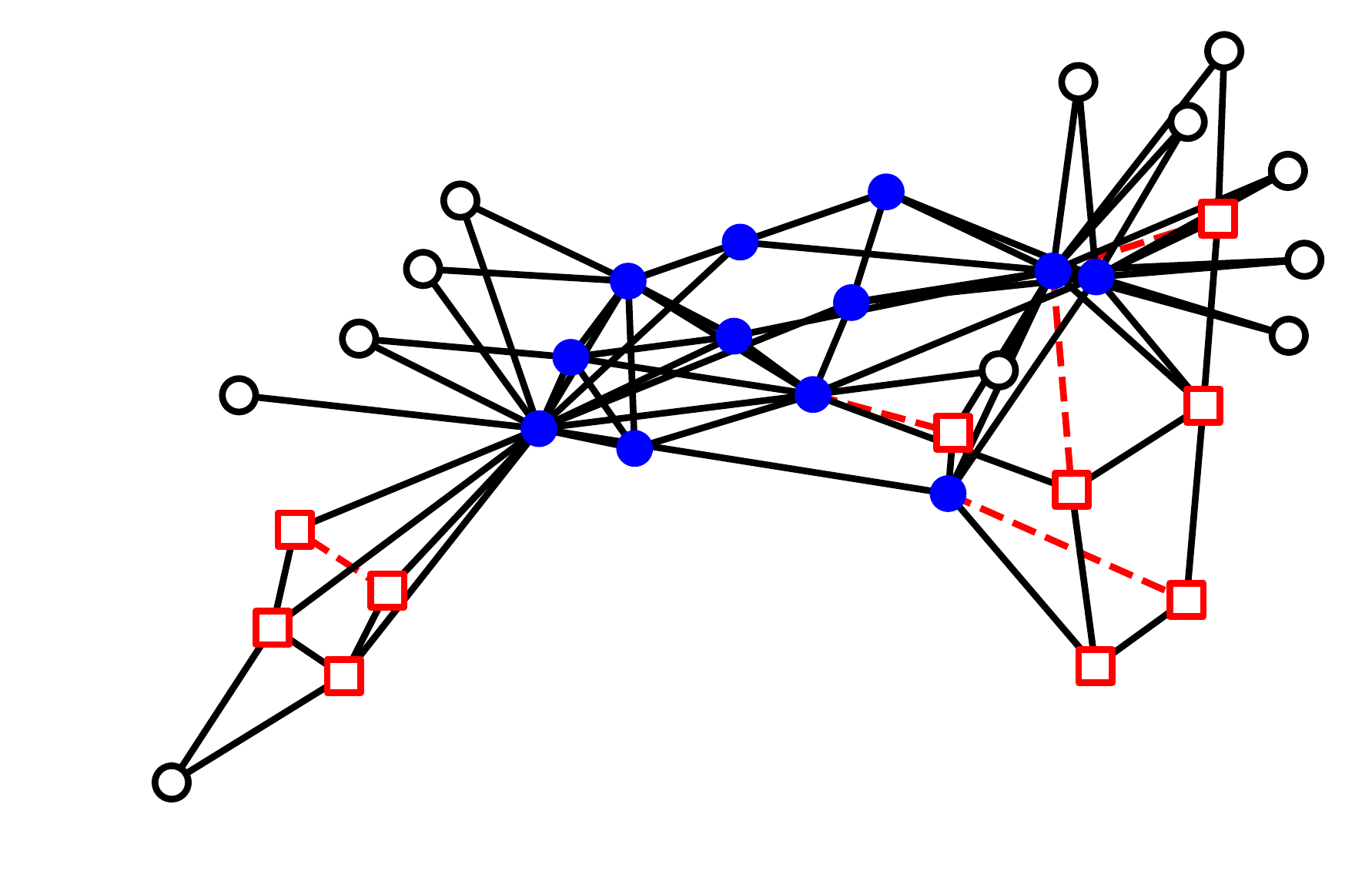}\label{fig:karate_b5_k3}}
    \subfloat[$b=10$]{\includegraphics[width=0.245\textwidth, trim={1.5cm 1cm .5cm .45cm},clip]{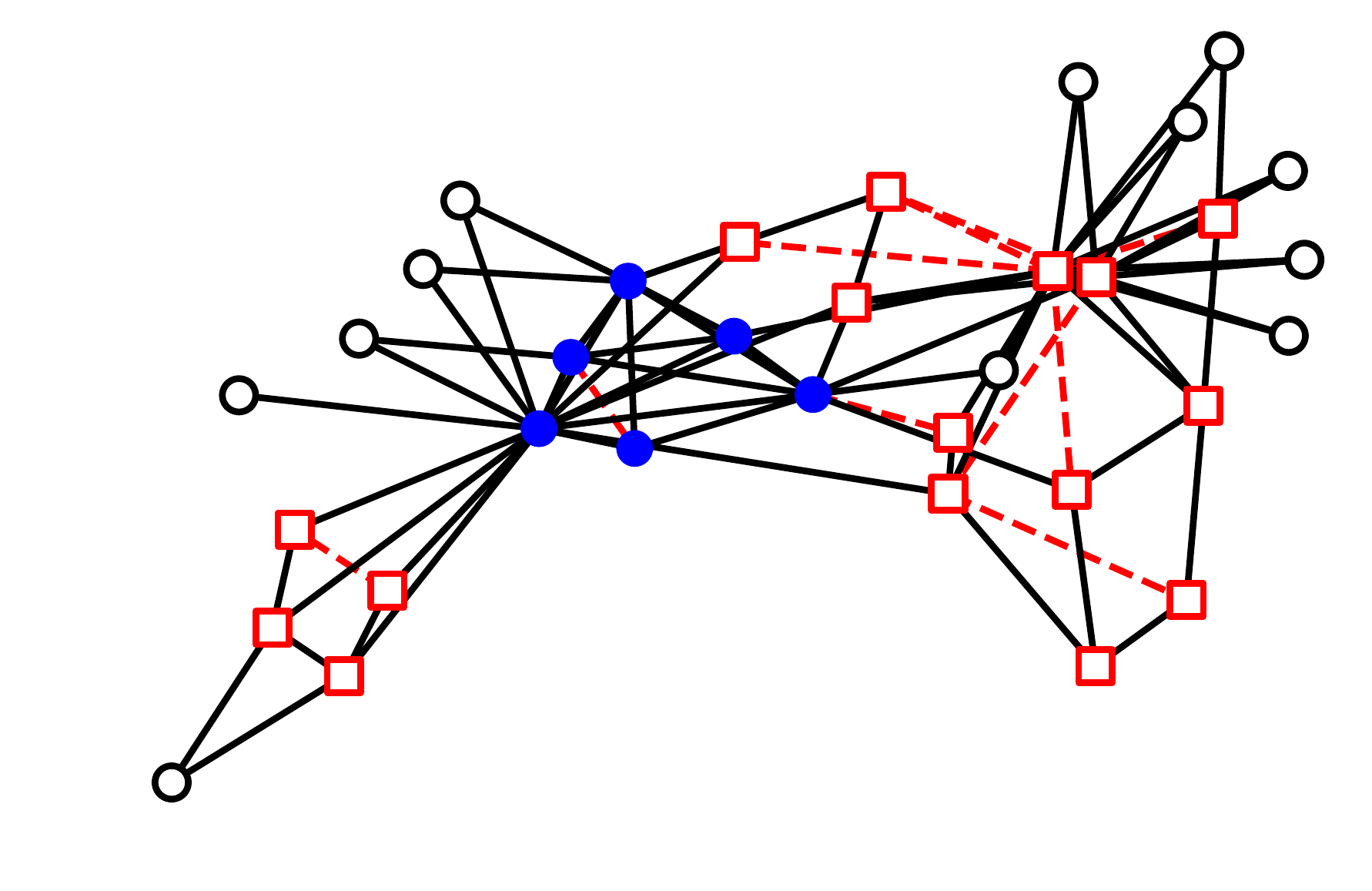}\label{fig:karate_b10_k3}}   
    \caption{K-core ($k=3$) minimization on the Zachary's Karate network: (a) $b=5$ and (b) $b=10$. Unfilled circle nodes are not in the $3$-core of the original network. After removal of $b$ dashed (red) edges, filled (blue) circle nodes remain in the $3$-core and unfilled (red) square nodes are removed from the $3$-core.  \label{fig:karate_k3}}
    \vspace{-2mm}
\end{figure}

\begin{figure*}[ht]
\vspace{-3mm}
    \centering
     \subfloat[DB]{\includegraphics[width=0.23\textwidth]{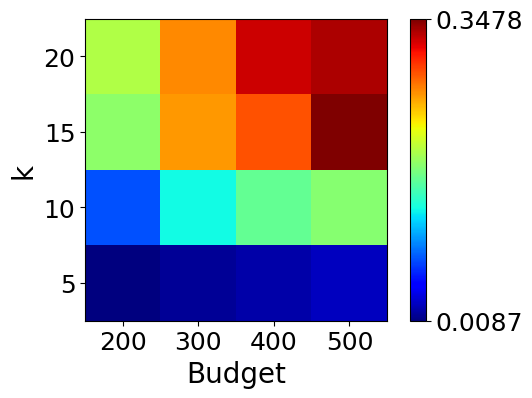}\label{fig:SV_k_b_DB}}
    \hspace{1mm}
    \subfloat[WS]{\includegraphics[width=0.23\textwidth]{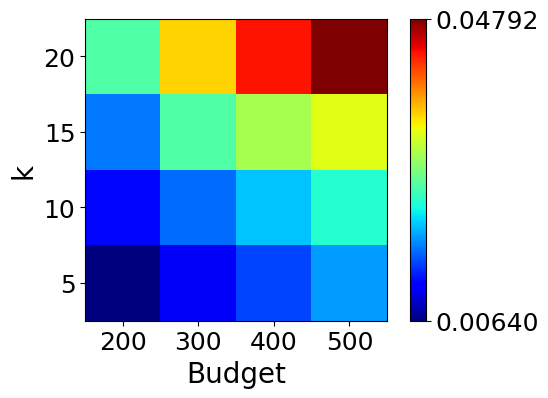}\label{fig:SV_k_b_WS}}
    \hspace{1mm} 
    \subfloat[FB]{\includegraphics[width=0.23\textwidth]{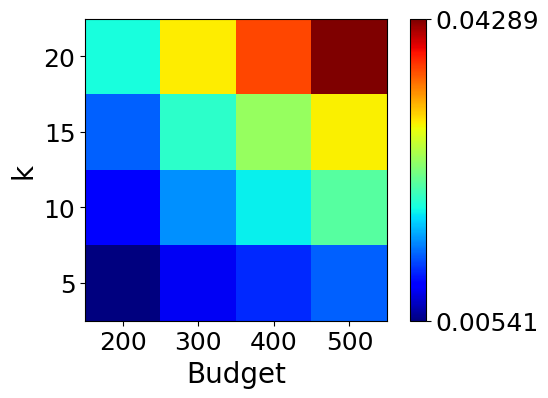}\label{fig:SV_k_b_FB}}
    \hspace{1mm}
    \subfloat[ER]{\includegraphics[width=0.23\textwidth]{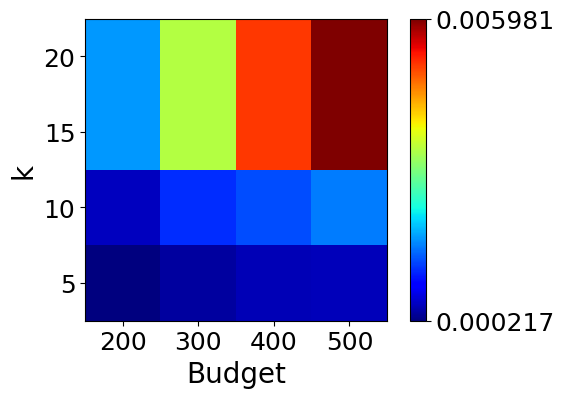}\label{fig:SV_k_b_ER}}
    \hspace{1mm}
    \vspace{-1mm}
    \caption{\textbf{Core resilience for four different networks: (a) DB (co-authorship), (b) WS (Webgraph), (c) FB (social), (d) ER (random). ER and DB are the most and least stable networks, respectively. Tipping points are found for ER and DB. \label{fig:SV_b_k_two}}}
\end{figure*}

\subsection{SV and the optimal algorithm} 
\label{sec:sv_opt}
In these experiments, we evaluate the approximation achieved by SV (Algorithm \ref{alg:SV}) compared to the optimal results using two small networks (Human and Yeast). The optimal set of $b\!=\!5$ and $b\!=\!10$ edges among a randomly chosen a set of $50$ edges is selected as the candidate set $\Gamma$ inside the $k$-core. An optimal solution is computed based on all possible sets with size $b$ in $\Gamma$. Table \ref{tab:sv_opt} shows the $DN(\%)$ produced by the optimal solution (OPT) and SV. Notice that the SV algorithm produces near-optimal results.
\begin{table}[t]
\centering
\begin{tabular}{|c| c | c | c | c |}
\hline
&\multicolumn{2}{c|}{\textbf{Human}} & \multicolumn{2}{c|}{\textbf{Yeast}}  \\
\hline
& $b=5$ & $b=10$ & $b=5$  & $b=10$\\
\hline
OPT & 2.88 & 3.24 & 11.16 & 12.05 \\
\hline
SV ($\epsilon =.1$) &  2.88 & 3.06 & 10.27 & 11.16 \\
\hline
SV ($\epsilon =.2$) &  2.8 & 3.06 & 8.48 & 10.71 \\
\hline
\end{tabular}

\caption{$DN(\%)$ by SV (approximate) and optimal algorithm using a small candidate set size ($|\Gamma|=50$), and $k=5$. \label{tab:sv_opt}}
\vspace{-9mm}
 \end{table}

\subsection{Application: $k$-core Resilience}
\label{sec:others}
%\textbf{Profiling networks: }
We show how KCM can be applied to profile the resilience or stability of real networks. %In the previous experiments SV is shown to be an effective algorithm for finding edges that are critical to the $k$-core structure. %Here, we apply SV to profile the core stability of different networks.
A profile provides a visualization of the resilience of the $k$-core structure of a network for different combinations of $k$ and budget. We apply $DN(\%)$ (Equation \ref{eq:DN}) as a measure of the percentage of the $k$-core removed by a certain amount of budget---relative to the immediately smaller budget value. 

Figure \ref{fig:SV_b_k_two} shows the results for co-authorship (DB), Web (WS), social network (FB) and a random (ER) graph. We also discuss profiles for Human and Yeast in the Appendix. Each cell corresponds to a given $k$-$b$ combination and the color of cell $(X,Y)$ shows the difference in $DN(\%)$ between $b\!=\!Y$ and $b\!=\!Y\!-\!100$ for $k\!=\!X$. %The cell $(20,200)$ implies the difference in $DN(\%)$ between $b=200$ and $b=100$ for $k=20$. 
As colors are relative, we also show the range of values associated to the the color scheme. We summarize our main findings as follows:

\textbf{Stability:} ER (Figure \ref{fig:SV_k_b_ER}) is the most stable graph, as can be noticed by the range of values in the profile. The majority of nodes in ER are in the $19$-core. DB (Figure \ref{fig:SV_k_b_DB}) is the least stable, but only when $k\!>\!5$, which is due to its large number of small cliques. The high-core structure of DB is quite unstable, with less than $1$\% of the network in the $20$-core structure after the removal of $500$ edges.

\textbf{Tipping points:} We also look at large effects of edge removals within small variations in budget---for a fixed value of $k$. Such a behavior is not noticed for FB and WS (Figures \ref{fig:SV_k_b_WS} and \ref{fig:SV_k_b_FB}, respectively), for which profiles are quite smooth. This is mostly due to the presence of fringe nodes at different levels of $k$-core structure. On the other hand, ER produced the most prominent tipping points ($k\!=\!15$ and $k\!=\!20$). This pattern is also found for DB.

\subsection{$K$-core Minimization on the Karate Network}
In Figure \ref{fig:karate_k3}, we demonstrate the application of our algorithm for KCM  using the popular Zachary's Karate network with two different budget settings, $b=5$ and $b=10$, and $k$ fixed to $3$. Unfilled circles are nodes initially out of the $3$-core. The dashed (red) edges are removed by our algorithm---often connecting fringe nodes. Filled (blue) circles and unfilled (red) squares represent nodes that remain and are removed from the $3$-core, respectively, after edge removals.  

\section{Previous Work}
\label{sec:prev_work}

\noindent
\textbf{ $K$-core computation and applications: } 
A $k$-core decomposition algorithm was first introduced by Seidman \cite{seidman1983network}. %Afterwards, several algorithms were proposed for the $k$-core decomposition in static networks. 
A more efficient solution---with time complexity $O(|E|)$---was presented by Batagelj et al. \cite{batagelj2011fast} and its distributed version was proposed in \cite{montresor2013distributed}. Sariyuce et al. \cite{sariyuce2013streaming} and Bonchi et al. \cite{bonnet2016parameterized} proposed algorithms $k$-core decomposition in streaming data and uncertain graphs respectively. $K$-cores are often applied in the analysis and visualization of large scale complex networks \cite{alvarez2006large}. Other applications include clustering and community detection \cite{giatsidis2014corecluster}, characterizing the Internet topology \cite{carmi2007model}, and analyzing the structure of software systems \cite{zhang2010using}. In social networks, $k$-cores are usually associated with models for user engagement. Bhawalkar et al. \cite{bhawalkar2015preventing} and Chitnis et al. \cite{chitnis2013preventing} studied the problem of increasing the size of $k$-core by anchoring a few vertices initially outside of the $k$-core. Malliaros et al. \cite{malliaros2013stay} investigated user engagement dynamics via $k$-core decomposition. %Chitnis et al. \cite{chitnis2013preventing} proved stronger inapproximation results for the anchoring problem. 
%The $k$-core decomposition has been used in many applications.
\textbf{Network resilience/robustness:} Understanding the behavior of a complex system (e.g. the Internet, the power grid) under different types of attacks and failures has been a popular topic of study in network science \cite{callaway2000network,albert2004structural,cohen2000resilience}. This line of work is mostly focused on non-trivial properties of network models, such as critical thresholds and phase transitions, assuming random or simple targeted modifications. Najjar et al. \cite{najjar1990network} and Smith et al. \cite{smith2011network} apply graph theory to evaluate the resilience of computer systems, specially communication networks. An overview of different graph metrics for assessing robustness/resilience is given by \cite{ellens2013graph}. Malliaros et al. \cite{malliaros2012fast} proposed an efficient algorithm for computing network robustness based on spectral graph theory. The appropriate model for assessing network resilience and robustness depends on the application scenario and comparing different such models is not the focus of our work.

%\noindent
\textbf{Stability/resilience of $k$-core: } Adiga et al. \cite{adiga2013robust} studied the stability of high cores in noisy networks. Laishram et al. \cite{Laishram2018} recently introduced a notion of resilience in terms of the stability of $k$-cores against deletion of random nodes/edges. If the rank correlation of core numbers before and after the removal is high, the network is core-resilient. They also provided an algorithm to increase resilience via edge addition. Notice that this is different from our problem, as we search for edges that can destroy the stability of the $k$-core. Another related paper is the work by Zhang et al. \cite{zhang2017finding}. Their goal is to find $b$ vertices such that their deletion reduces the $k$-core maximally. The $k$-core minimization problem via edge deletion has been recently proposed by Zhu et al. \cite{zhu2018k}. However, here we provide stronger inapproximability results and a more effective algorithm for the problem, as shown in our experiments.
%However, our problem is more general, as the removal of a node is equivalent to the removal of its adjacent edges. Our algorithmic contributions also apply to the vertex version.

\textbf{Shapley Value (SV) and combinatorial problems:}  A Shapley value based algorithm was previously introduced for influence maximization (IM) \cite{narayanam2011shapley}. However, IM can be approximated within a constant-factor by a simple greedy algorithm due to the submodular property \cite{kempe2003maximizing}. In this paper, we use Shapley value to account for the joint effect of multiple edges in the solution of the KCM problem, for which we have shown stronger inapproximability results.

\textbf{Shapley Value (SV) estimation via sampling:}  Sampling techniques have been applied for the efficient computation of SV's \cite{castro2009polynomial, maleki2013bounding}. Castro et al. \cite{castro2009polynomial} introduced SV sampling %without much analysis and applied to 
in the context of symmetric and non-symmetric voting games. Maleki et al. \cite{maleki2013bounding} provided analyses for stratified  sampling specially when the marginal contributions of players are similar. Our sampling results are specific for the KCM problem and analyze the effect of sampling on the top $k$ nodes with highest Shapley values (Corollary \ref{cor:anyset}).
%for the KCM problem and analyze the effect of sampling on the top $k$ nodes with highest Shapley values (Corollary \ref{cor:anyset}).

\textbf{Other network modification problems: }%A set of network modification problems based on vertex upgrades to improve the delays on adjacent edges were introduced by Paik et al.~\cite{paik1995}.  %These  problems have since then attracted a significant amount of attention. 
%Meyerson et al.~\cite{meyerson2009} designed algorithms for the minimization of shortest path distances. Faster algorithms for the same problem were proposed in~\cite{parotsidis2015selecting}. %Demaine et al.~\cite{demaine2010} studied the minimization of the diameter of a network and node eccentricity by adding shortcut edges. 
A set of network modification problems were introduced by Paik et al.~\cite{paik1995}. Recent work~\cite{lin2015, dilkina2011} addressed the shortest path distance optimization problem via improving edge or node weights on undirected graphs. A node version of the problem has also been studied \cite{dilkina2011,medya2018noticeable}. 
Another related problem is to optimize node centrality by adding edges \cite{crescenzi2015,ishakian2012framework,medya2018group}. Boosting or containing diffusion processes in networks were studied under different well-known diffusion models such as Linear Threshold \cite{Khalil2014} and Independent Cascade \cite{kimura2008minimizing}.
  %More examples include boosting or containing diffusion processes in networks. These were studied under different well-known diffusion models such as SIR \cite{Tong2012GML}, Linear Threshold \cite{Khalil2014} and Independent Cascade \cite{kimura2008minimizing, lin2017boosting}.

\section{Conclusion}
We have studied the $k$-core minimization (KCM) problem, which consists of finding a set of edges, removal of which minimizes the size of the $k$-core structure. KCM was shown to be NP-hard, even to approximate within any constant when $k\!\geq\!3$. The problem is also not fixed-parameter tractable, meaning it cannot be solved efficiently even if the number of edges deleted is small. Given such inapproximability results, we have proposed  an efficient randomized heuristic based on Shapley value to account for the interdependence in the impact of candidate edges. For the sake of comparison, we also evaluate a simpler greedy baseline, which cannot assess such strong dependencies in the effects of edge deletions. 

We have evaluated the algorithms using several real graphs and shown that our Shapley value based approach outperforms competing solutions in terms of quality. The proposed algorithm is also efficient, enabling its  application to graphs with hundreds of thousands of vertices and millions of edges in time in the order of minutes using a desktop PC. We have also illustrated how KCM can be used for profiling the resilience of networks to edge deletions.

\section*{Acknowledgment}
The authors would like to thank Neeraj Kumar for helpful discussions.

%\input{appendix}
%\clearpage
%\bibliographystyle{ACM-Reference-Format}
%\bibliography{WSDM20}
%\clearpage

%%%%%%%%%%%%%%%%%%%%%%%%%%%%%%%%%%%%%%%%%%%%%%%%%%%%%%%%%%%%%%%%%%%%%%%%%%%%%%%%%%%%%%%%%%%%%%%%%%%%%%%%%
%% bibliography: see CFP for number of permitted pages

\bibliographystyle{ACM-Reference-Format}  % do not change this line!
\bibliography{AAMAS20}  % put name of your .bib file here
%\clearpage
\section{Appendix}
%hardness figure

\subsection{Proof for Theorem \ref{thm:np_hard_k_1_2}}

\begin{proof}
First, we sketch the proof for $k=1$. %For $k>3$, while the reduction is same, the construction is easy to modify.
Consider an instance of the NP-hard 2-MINSAT \cite{kohli1994minimum} problem which is defined by a set $U=\{u_1,u_2,...,u_{m'}\}$ of $m'$ variables and collection $C'=\{c_1,c_2,...,c_{n'}\}$ of $n'$ clauses. Each clause $c\in C'$ has two literals ($|c|=2$). So, each $c_i\in C'$ is of the form $z_{i1}\vee z_{i2}$ where $z_{ij}$ is a literal and is either a variable or its negation. The problem is to decide whether there exists a truth assignment in $U$ that satisfies no more than $n^* < n'$ clauses in $C$. To define a corresponding KCM instance, we construct the graph $G'$ as follows. We create a set of $n'$ vertices $X_c =\{v_i|\ c_i\in C'\}$. For each variable $u_i\in U$, we create two vertices: one for the variable ($w_{i1}$) and another for its negation ($w_{i2}$). Thus, a total of $2m'$ vertices, $Y_u=\{w_{11},w_{12},w_{21},w_{22}$ $, \ldots, w_{m'1},w_{m'2}\}$ are produced. Moreover, whenever the literal $u_i\vee \bar{u}_j\in c_t$, we add two edges, $(v_{t},w_{i1})$ and $(v_{t},w_{j2})$ to $G'$.

For $k=1$, KCM aims to maximize the number of isolated vertices ($0$-core, $k\!=\!0$) via removing $b$ edges. An edge in the KCM instance ($K_I$) is a vertex $v_i$ in $G'$. Each vertex is connected to exactly two vertices (end points of the edge in $K_I$) in $Y_u$. Satisfying a clause is equivalent to removing the corresponding vertex (deleting the edge in $K_I$) from $G'$. A vertex in $Y_u$ will be isolated when all associated clauses (or vertices) in $X_c$ are satisfied (removed). If there is a truth assignment which satisfies no more than $b=n^*$ clauses in 2-MINSAT, that implies $m'$ vertices can be isolated in $G'$ by removing $\leq b$ vertices (or deleting $\leq b$ edges in KCM). If there is none, then $m'$ vertices cannot be isolated by removing $\leq b$ edges in KCM.

To prove for $k=2$, we can transform the  $k=1$ version of KCM to the $k=2$ one (transformation is similar to the one in \cite{zhang2017finding}). 
\end{proof}

%\vspace{2mm}

\subsection{Proof for Theorem \ref{thm: param_approx} }
\begin{proof}
We sketch the proof for $k=3$. A similar construction can be applied for $k>3$. Consider an instance of the $W[2]$-hard Set Cover \cite{bonnet2016parameterized} problem, defined by a collection of subsets $S=\{S_{1},S_{2},...,S_{m}\}$ from a universal set of items $U=\{ u_{1},u_{2},...,u_{n} \}$. The problem is to decide whether there exist $b$ subsets whose union is $U$. We define a corresponding KCM instance (graph $G$) as follows. 
 
 For each $S_i \in S$ we create two cycles: one of $n$ vertices $X_{i,1},X_{i,2},$\\$\cdots,X_{i,n}$ in $V$ with edges $(X_{i,1},X_{i,2}), (X_{i,2},X_{i,3}),\cdots,(X_{i,n},X_{i,1})$. Another cycle of $n$ vertices $W_{i,1}$ to $W_{i,n}$ and obviously with $n$ edges $(W_{i,1},W_{i,2}), (W_{i,2},W_{i,3})\cdots,(W_{i,n},W_{i,1})$ is also added along with $n$ more edges $(W_{i,j},X_{i,j})$ where $j=1,2,\cdots,n$ for every $i$. Moreover, for each $u_j \in U$, we create a cycle of $m$ vertices $Y_{j,1}, Y_{j,2},\cdots, Y_{j,m}$ with the following edges $(Y_{j,1},Y_{j,2}), \cdots,(Y_{j,m-1},Y_{j,m}), (Y_{j,m},Y_{j,1})$. 
 We add $5$ vertices $Z_{j,1}$ to $Z_{j,5}$ with cliques of four vertices $Z_{j,2}$ to $Z_{j,5}$ and two more edges: $(Z_{j,1},Z_{j,2}),(Z_{j,1},Z_{j,5})$. 
 Furthermore, edges $(X_{i,j},Y_{j,i})$ will be added to $E$ if $u_j\in S_i$. 
 Moreover, edges $(Y_{j,i},Z_{j,1})$ will be added to $E$ if $u_j\notin S_i$. Clearly the reduction is in FPT. The candidate set, $\Gamma = \{ (W_{i,1},W_{i,2})| \forall{i=1,2,...,m}\}$.  Fig. \ref{fig:hardness_ex2} illustrates our construction for sets $S_1=\{u_1\},S_2=\{u_1,u_2,u_4\},$ and $S_3=\{u_3\}$.
 
Initially all nodes are in the $3$-core. We claim that a set $S'\subset S$, with $|S'|\leq b$, is a cover iff $f_3(B)\!=\! 2bn\!+\!n(m\!+\!1)$ where  $B\!=\!\{(W_{i,1},W_{i,2})| S_i \in S' \}$. 
For any $i$, if $(W_{i,1},W_{i,2})$ is removed, the $2n$ nodes $\{W_{i,1},\cdots,W_{i,n}\}$ and $\{X_{i,1},\cdots,X_{i,n}\}$ will go to the $2$-core (note that the trivial case is removed where a subset will contain all the elements; thus there will be one $X_{i,t}$ for some $t$ with exactly degree 3). If $u_j\in S_i$, then $m+1$ nodes  $\{Y_{j,1},...,X_{j,m}\}$ and $Z_{j,1}$ go to $2$-core after $(W_{i,1},W_{i,2})$ is removed. If $S'$ is a set cover, all the $u_j$s will be in some $S_i\in S'$ and $n(m+1)$ nodes (all the $Y_{j,i}$s and $Z_{j,1}$s) will go into $2$-core; so $f_3(B)\!=\!2bn\!+\!n(m+1)$. On the other hand, assume that $f_3(B)\!=\!2bn\!+\!n(m\!+\!1)$ after removing edges in $B= \{(W_{i,1},W_{i,2})| S_i \in S' \}$. The only way to have $m+1$ nodes removed from corresponding $u_j$ is if $u_j \in X_i$ and $X_i\in X$. Thus, $n(m+1)$ nodes (all the $Y_{j,i}$s and $Z_{j,1}$s) along with $2bn$ nodes ($X_{i,j}$s and $W_{i,j}$s, where $S_i\in S'$) will be removed, making $S'$ a set cover. %This proves our claim.
\end{proof}

\begin{figure}[t]
%\vspace{-3mm}
    \centering
    {\includegraphics[width=0.4\textwidth]{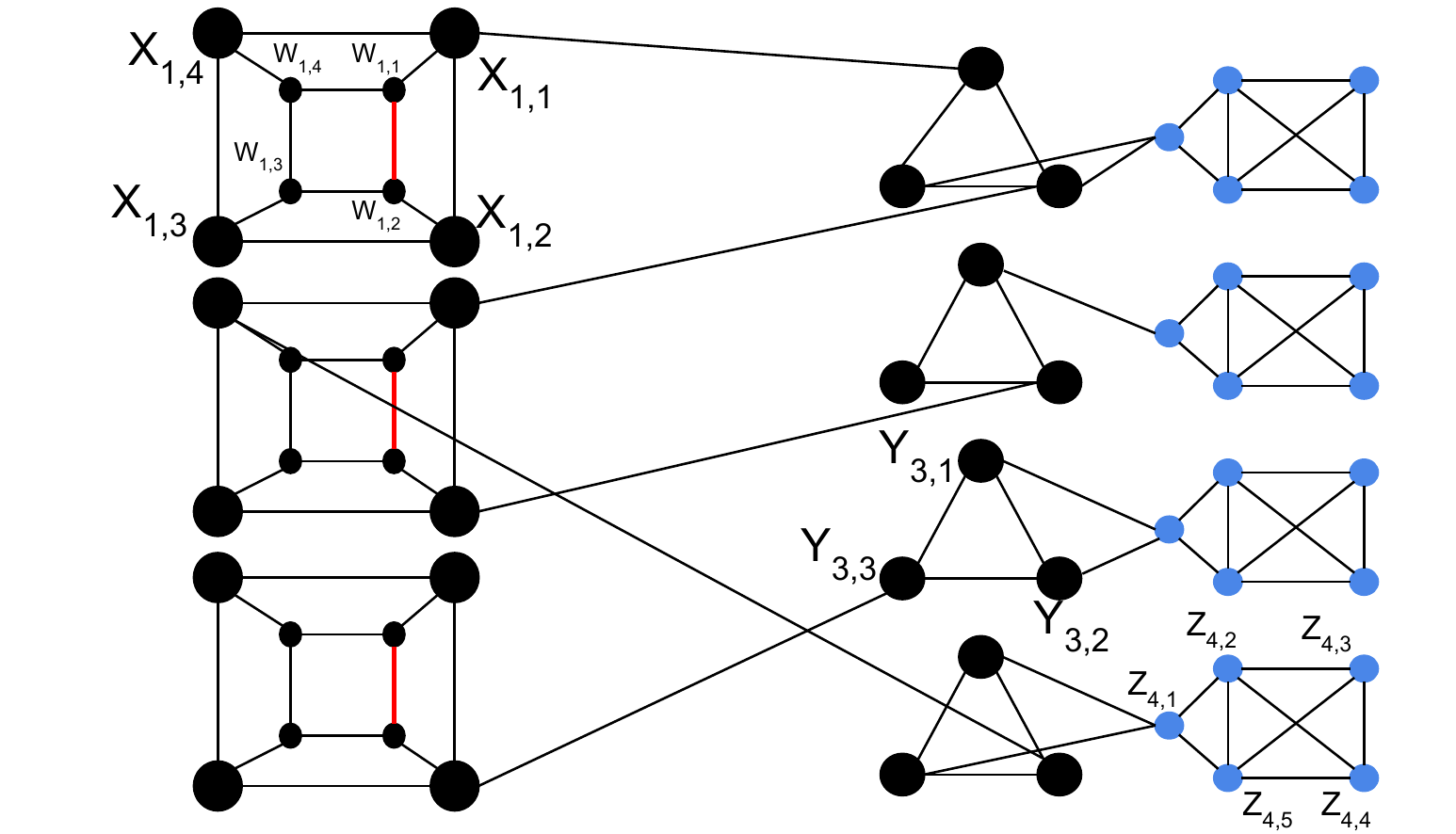}}
     %\vspace{-2mm}
    \caption{Example construction for parameterized hardness from Set Cover where  $U=\{u_1,u_2,u_3,u_4\}, S=\{S_1,S_2,S_3\}, S_1=\{u_1\},S_2=\{u_1,u_2,u_4\},$ and $S_3=\{u_3\}$. \label{fig:hardness_ex2}}
  %  \vspace{-2mm}
\end{figure}

\subsection{Algorithm \ref{alg:computeVS}} \label{sec:algo_computevs}
This procedure computes the vulnerable set---i.e., the set of nodes that will leave the $k$-core upon deletion of the edge $e$ from $\mathscr{G}$. The size of the set is essentially the marginal gain of deleting $e$. If $e\!=\!(u,v)$ is deleted, $u$ will be removed iff $d(\mathscr{G},u)\!=\!k$ (the same for $v$). This triggers a cascade of node removals from the $k$-core (with the associated edges). Let $vul(w)$ be the set of nodes already removed from $\mathscr{G}$ that are neighbours of node $w$. We observe that $w$ will be removed if $d(\mathscr{G},w)- |vul(w)|<k$. Note that the procedure is similar to Algorithm 2 (\textit{LocalUpdate}), having $O(M_k+N_k)$ running time.

%\newline
%computeVS
\begin{algorithm}[h]
\small
\caption{computeVS}
\label{alg:computeVS}
%\begin{algorithmic}[1]
\KwInput{$e=(u,v),\mathscr{G},k$}
\KwOutput{$X$}
\If{$d(\mathscr{G},u)= \mathcal{E}_\mathscr{G}(u)$}{
 Queue $S\leftarrow S\cup \{u\}$,  $X\leftarrow X\cup \{u\}$\\
}
\If{$d(\mathscr{G},v)= \mathcal{E}_\mathscr{G}(v)$}{
 Queue $S\leftarrow S\cup \{v\}$, $X\leftarrow X\cup \{v\}$\\
}
\While{$S \not = \emptyset$ } {
    Remove $y$ form $S$ \\
    \For{$w \in N(y)$}{
     $vul(w)\leftarrow \{z|z\in N(w)\cap X\}$ \\
    \If{$w \not \in X \And d(\mathscr{G},w) - |vul(w)|<k $}{
    Add $w$ to $X$, $S$
    }
    }
}
\textbf{return} $X$
%\end{algorithmic}
\end{algorithm}

\subsection{Local Update (Algorithm \ref{alg:Local_update})}  \label{sec:algo_local}
After the removal of the edge $e^*$ in each step, the current graph $\mathscr{G}$ is updated (step $9$). Recomputing the $k$ cores in $\mathscr{G}$ would take $O(M_k)$ time. Instead, a more efficient approach is to update only the affected region after deleting the $e^*$. If an edge $e^*=(u,v)$ is deleted, $u$ will be removed if $d(\mathscr{G},u)=k$ (the same for $v$). This triggers a cascade of node removals (with the associated edges). Let $vul(w)$ be a set of nodes already removed from $\mathscr{G}$ that are neighbours of node $w$. We observe that $w$ will be removed if $d(\mathscr{G},w)- |vul(w)|<k$.

%Local Update
\begin{algorithm}[t]
\small
\caption{LocalUpdate}
\label{alg:Local_update}
%\begin{algorithmic}[1]
\KwInput{$e=(u,v),\mathscr{G},k$}
 Remove $(u,v)$, update $d(\mathscr{G},u),d(\mathscr{G},v)$, $X\leftarrow \Phi$, $Y\leftarrow \Phi$\\
%$X\leftarrow \Phi$, $Y\leftarrow \Phi$ \\
    \If{$d(\mathscr{G},u)<k$}{
     Queue $Y\leftarrow Y\cup \{u\}$, $X\leftarrow X\cup \{u\}$
    }
    \If{$d(\mathscr{G},v)<k$}{
     Queue $Y\leftarrow Y\cup \{v\}$, $X\leftarrow X\cup \{v\}$
    }
\While{$Y \not = \emptyset$} {
    Remove $y$ form $Y$ 
    %\State $\mathcal{E}_\mathscr{G}(y)=d(\mathscr{G},y)$\\
       
        \For{$w \in N(y)$}{
                 $vul(w)\leftarrow \{z|z\in N(w)\cap X\}$ \\
                \If{$w \not \in X \And d(\mathscr{G},w) - |vul(w)|<k $}{
                 Add $w$ to $X$, $S$
                }

        }
    \If{$d(\mathscr{G},y) < k$}{
    Remove $y$ from $\mathscr{G}$
    }
}
%\end{algorithmic}
\end{algorithm}
\subsection{Optimizations for GC and SV }
\label{sec:opt}
Here, we discuss optimizations for the Greedy (GC) and Shapley Value based (SV) algorithms. We propose a general pruning technique to speed up both Algorithms \ref{alg:GC} and \ref{alg:SV} (GC and SV). For GC, in each step, all the candidate edges are evaluated (step $3$). \textit{How can we reduce the number of evaluations in a single step?} In SV, in a single permutation, marginal gains are computed for all the candidate edges (step $5$). \textit{How can we skip edges that have $0$ marginal gain?}. We answer these questions by introducing the concept of \textit{edge dominance}. Let $Z(e,\mathscr{G})$ be the set of vertices that would be removed if $e$ is deleted from $\mathscr{G}$ due to the $k$-core constraint. If $e'=(u,v)$ has one of the end points $u$ or $v$ in $Z(e,\mathscr{G})$, then $e'$ is dominated by $e$.

\begin{observation} \label{obs:dominance}
If $e'$ is dominated by $e$, then $Z(e',\mathscr{G}) \subseteq Z(e,\mathscr{G})$.
\end{observation}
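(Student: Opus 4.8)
The plan is to prove the inclusion directly from the \emph{maximality} of the $k$-core, so that no reasoning about peeling order is needed. Recall the relevant facts: $\mathscr{G}=C_k(G)$ has minimum degree at least $k$, and $Z(e,\mathscr{G})$ is precisely the set of vertices of $\mathscr{G}$ that fail to survive in $C_k(\mathscr{G}-e)$; equivalently, $V(\mathscr{G})\setminus Z(e,\mathscr{G})$ is the vertex set of $C_k(\mathscr{G}-e)$. The whole argument will hinge on exhibiting a suitable induced subgraph of $\mathscr{G}-e'$ with minimum degree at least $k$.

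First I would dispose of the degenerate case $Z(e,\mathscr{G})=\emptyset$: then no edge is dominated by $e$, since the dominance condition requires an endpoint of $e'$ to lie in $Z(e,\mathscr{G})$, and the statement holds vacuously. So assume $Z(e,\mathscr{G})\neq\emptyset$. A one-line observation is that then at least one endpoint of $e$ must lie in $Z(e,\mathscr{G})$: if both endpoints of $e$ had degree strictly greater than $k$ in $\mathscr{G}$, then $\mathscr{G}-e$ would still have minimum degree at least $k$, forcing $Z(e,\mathscr{G})=\emptyset$.

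The key step is then to set $H=C_k(\mathscr{G}-e)$, so that $V(H)=V(\mathscr{G})\setminus Z(e,\mathscr{G})$ and $H$ has minimum degree at least $k$, and to check that $H$ is actually an induced subgraph of $\mathscr{G}-e'$. Since at least one endpoint of $e$ belongs to $Z(e,\mathscr{G})$, that endpoint is outside $V(H)$, so $e$ is not an edge of $\mathscr{G}[V(H)]$, and hence $H=\mathscr{G}[V(H)]$. By the dominance hypothesis we may assume $u\in Z(e,\mathscr{G})$, i.e. $u\notin V(H)$; therefore $e'=(u,v)$ also has an endpoint outside $V(H)$ and is not an edge of $\mathscr{G}[V(H)]$ either. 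Consequently $H=\mathscr{G}[V(H)]=(\mathscr{G}-e')[V(H)]$, an induced subgraph of $\mathscr{G}-e'$ with minimum degree at least $k$.

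To finish, I would invoke maximality of the $k$-core: every induced subgraph of $\mathscr{G}-e'$ with minimum degree at least $k$ is contained in $C_k(\mathscr{G}-e')$, so $V(H)\subseteq V_k(\mathscr{G}-e')$, i.e. $V(\mathscr{G})\setminus Z(e,\mathscr{G})\subseteq V(\mathscr{G})\setminus Z(e',\mathscr{G})$; taking complements in $V(\mathscr{G})$ gives $Z(e',\mathscr{G})\subseteq Z(e,\mathscr{G})$. The only place that needs care is the induced-subgraph bookkeeping in the third step --- verifying that neither $e$ nor $e'$ survives inside $\mathscr{G}[V(H)]$ --- and this is exactly why the domination hypothesis (an endpoint of $e'$ sitting in $Z(e,\mathscr{G})$) is used; everything else follows immediately from the definition of the $k$-core and its maximality.
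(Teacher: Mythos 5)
Your proposal is correct. For this particular statement the paper offers no argument at all --- Observation~\ref{obs:dominance} is asserted without proof in the appendix on pruning --- so there is nothing to mismatch against; your maximality argument is a complete and valid justification, and it is essentially the canonical one. Writing $W=V(\mathscr{G})\setminus Z(e,\mathscr{G})=V_k(\mathscr{G}-e)$, you show $(\mathscr{G}-e')[W]$ has minimum degree at least $k$ and invoke maximality of $C_k(\mathscr{G}-e')$ to get $W\subseteq V_k(\mathscr{G}-e')$, which is exactly the complement form of the claimed inclusion. Two minor remarks. First, your auxiliary claim that some endpoint of $e$ lies in $Z(e,\mathscr{G})$ is proved via degrees ("some endpoint has degree at most $k$ in $\mathscr{G}$"); the missing half-line is that a degree-$k$ endpoint of $e$ has degree $k-1$ in $\mathscr{G}-e$ and hence cannot survive in any induced subgraph of $\mathscr{G}-e$ with minimum degree $k$ --- immediate, but worth stating. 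Second, that auxiliary claim is dispensable: since $u\in Z(e,\mathscr{G})$ means $u\notin W$, you have $(\mathscr{G}-e')[W]=\mathscr{G}[W]\supseteq(\mathscr{G}-e)[W]$, and a supergraph on the same vertex set can only gain degree, so the minimum-degree bound (and hence containment in the $k$-core) follows without ever arguing that $e$ itself is absent from $\mathscr{G}[W]$. Your version buys a slightly tidier "induced subgraph of $\mathscr{G}-e'$" statement at the cost of one extra lemma; either route is sound.
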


In Algorithm \ref{alg:GC} (GC), while evaluating each edge in the candidate set (step $3$) if $e'$ comes after $e$, we skip the evaluation of $e'$, as $|X_e| \geq |X_{e'}|$ (Obs. \ref{obs:dominance}). %Figure \ref{fig:kcore_ex2} shows an example with the candidate set $\Gamma$ containing all edges. Initially, all the nodes are in the $3$-core. To break the $3$-core, the edge $(a,b)$ is dominated by the edge $(a,c)$ as $Z((a,c),G)=Z((a,b),G)$. 
In Algorithm \ref{alg:SV} (SV), while computing the marginal gain of each edge in a coalition for a particular permutation $\pi$, assume that $e'$ appear after $e$.  As $e \in P_{e'}(\pi)$ and using Observation \ref{obs:dominance}, $\mathscr{V}(P_e (\pi)\cup \{e\}) - \mathscr{V}(P_e (\pi))) =0$. Thus, the computation of the marginal gain of $e'$ can be skipped. We evaluate the performance gains due to pruning in our experimental evaluation.% For SV, we consider only $b$ non-dominated edges in a permutation and normalize their contribution by their number of appearances in the sampled set. Notice that these optimizations do not affect the output of the algorithms. We evaluate the performance gains due to pruning in our experimental evaluation.

In Figures \ref{fig:gc_time} and \ref{fig:sv_time}, we look further into the effect of pruning for GC and SV by comparing versions of the algorithms with and without pruning using three datasets. GC becomes one order of magnitude faster using pruning. Gains for SV are lower but still significant (up to 50\%). We found  in other experiments that the impact of pruning for SV increases with the budget, which is due to the larger number of permutations to be considered by the algorithm.
 
 \begin{figure}[t]


\vspace{-2mm}
    \centering
    %\subfloat[Varying $\epsilon$]{\includegraphics[width=0.22\textwidth]{Experiments/plots/SVexp/facebook_time_epsilon.pdf}\label{fig:fb_ep_time}}
    %\hspace{1mm}
    %\subfloat[Varying $k$]{\includegraphics[width=0.22\textwidth]{Experiments/plots/SVexp/facebook_time_k.pdf}\label{fig:fb_k_time}}   
    %\hspace{1mm}\\
     \subfloat[Pruning, GC]{\includegraphics[width=0.22\textwidth]{Experiments/plots/Highcore/time_bar_03.pdf}\label{fig:gc_time}}
    \hspace{1mm}
       \subfloat[Pruning, SV]{\includegraphics[width=0.22\textwidth]{Experiments/plots/SVexp/sv_01_time.pdf}\label{fig:sv_time}}
    \hspace{1mm}
    \caption{Impact of pruning for GC and SV algorithms using three datasets: GC is up to one order of magnitude faster with pruning, while SV is up to 50\% faster. \label{fig:gc_sv_time_1}}
    \vspace{-2mm}
\end{figure}

 %%%% Figure
\iffalse
\begin{figure}[h]
%\vspace{-3mm}
    \centering
    {\includegraphics[width=0.3\textwidth]{}}
    \caption{ Comparison between SV (approximate) and the optimal algorithm using four datasets and a small candidate set size ($|\Gamma|=50$), $b=5$ and $k=5$. \label{fig:sv_opt}}
    %\vspace{5mm}
\end{figure}

\fi
\iffalse
\subsection{SV and the optimal algorithm} 
\label{sec:sv_opt}
In these experiments, we evaluate the approximation achieved by SV (Algorithm \ref{alg:SV}) compared to the optimal results using two small networks (Human and Yeast). The optimal set of $b\!=\!5$ and $b\!=\!10$ edges among a randomly chosen a set of $50$ edges is selected as the candidate set $\Gamma$ inside the $k$-core. An optimal solution is computed based on all possible sets with size $b$ in $\Gamma$. Table \ref{tab:sv_opt} shows the $DN(\%)$ produced by the optimal solution (OPT) and SV. Notice that the SV algorithm produces near-optimal results.

\begin{table}[t]
\centering
\begin{tabular}{|c| c | c | c | c |}
\hline
&\multicolumn{2}{c|}{\textbf{Human}} & \multicolumn{2}{c|}{\textbf{Yeast}}  \\
\hline
& $b=5$ & $b=10$ & $b=5$  & $b=10$\\
\hline
OPT & 2.88 & 3.24 & 11.16 & 12.05 \\
\hline
SV ($\epsilon =.1$) &  2.88 & 3.06 & 10.27 & 11.16 \\
\hline
SV ($\epsilon =.2$) &  2.8 & 3.06 & 8.48 & 10.71 \\
\hline
\end{tabular}

\caption{SV (approximate) vs. optimal algorithm using two datasets and a small candidate set size ($|\Gamma|=50$), and $k=5$. \label{tab:sv_opt}}
\vspace{-9mm}
 \end{table}

 \fi

\subsection{K-core Resilience: Human vs Yeast} 
\label{sec:human_vs_yeast}
K-cores have been previously applied in the analysis of functional modules in protein-protein networks \cite{alvarez2006large,wuchty2005peeling}.
Here, we compare the $k$-core stability of Human and Yeast (Figs. \ref{fig:SV_k_b_human}, \ref{fig:SV_k_b_yeast}). Human is shown to be more stable, as can be inferred from the range of values in the profile---$1\%$ to $35\%$ for Human and $3.4\%$ to $100\%$ for Yeast. Moreover, the profile for Human is smoother than Yeast. These results confirm our intuition that proteins have a more complex functional structure in Humans compared to other organisms \cite{Zitnik454033}. We also show similar results for clustering coefficient and efficiency, which are other popular robustness measures for networks \cite{ellens2013graph}, within the same core set of vertices to facilitate the comparison. Both competing metrics fail to effectively assess robustness for varying values of $k$ and budget. In particular, the clustering coefficient of the networks remain mostly unchanged after edge deletions. The effect of network efficiency minimization over the core of the network does not necessarily increase with the budget, which is counter-intuitive. More specifically, efficiency minimization often fails to break dense substructures of the network, even for large values of budget.
%cannot reveal the same for these networks. The variation in Clustering Coefficient (Figs. \ref{fig:SV_k_b_human_cc}, \ref{fig:SV_k_b_yeast_cc}) and Efficiency (Figs. \ref{fig:SV_k_b_human_sp}, \ref{fig:SV_k_b_yeast_sp}) are very small and arbitrary respectively.  

\begin{figure}[t]
    \centering
     \subfloat[Human (Core resilience)]{\includegraphics[width=0.2\textwidth,height=0.16\textwidth]{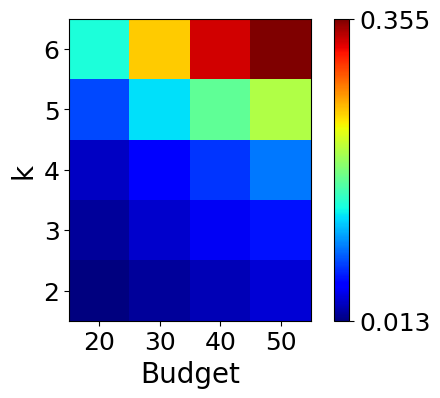}\label{fig:SV_k_b_human}}
    \hspace{1mm}
    \subfloat[Yeast (Core resilience)]{\includegraphics[width=0.2\textwidth,height=0.16\textwidth]{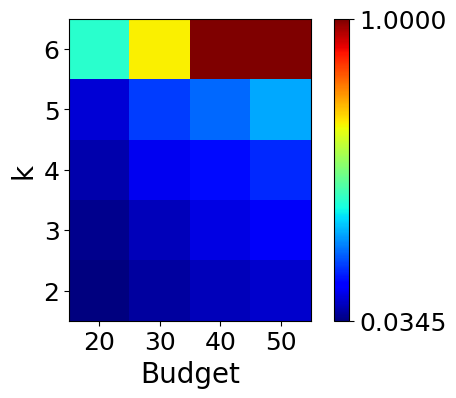}\label{fig:SV_k_b_yeast}}\\
    \vspace{-3mm}
    \subfloat[Human (Clustering coefficient)]{\includegraphics[width=0.2\textwidth,height=0.16\textwidth]{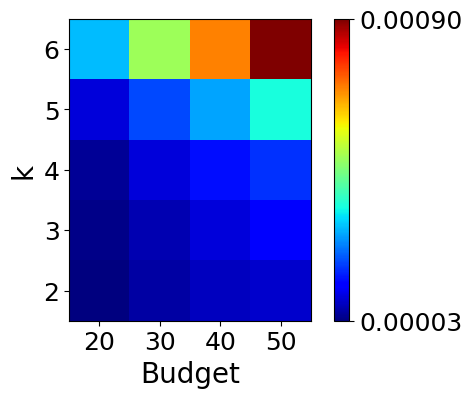}\label{fig:SV_k_b_human_cc}}
    \hspace{1mm}
    \subfloat[Yeast (Clustering coefficient)]{\includegraphics[width=0.2\textwidth,height=0.16\textwidth]{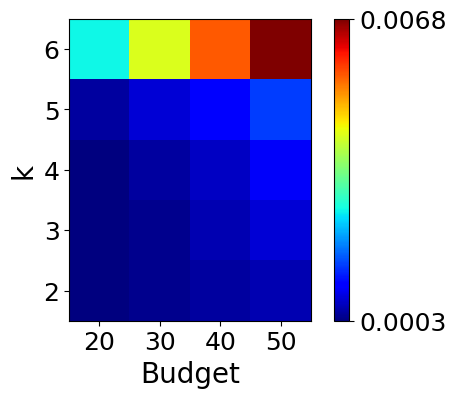}\label{fig:SV_k_b_yeast_cc}}\\
      \vspace{-3mm}
    \subfloat[Human (Efficiency)]{\includegraphics[width=0.2\textwidth,height=0.16\textwidth]{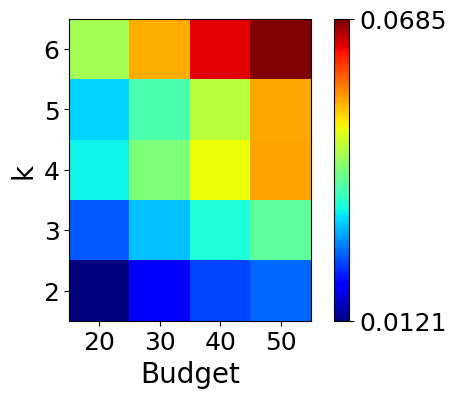}\label{fig:SV_k_b_human_sp}}
    \hspace{1mm}
    \subfloat[Yeast (Efficiency)]{\includegraphics[width=0.2\textwidth,height=0.16\textwidth]{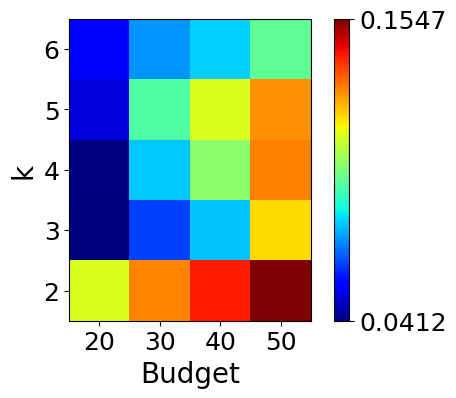}\label{fig:SV_k_b_yeast_sp}}
    \vspace{-3mm}
    \caption{Core resilience (a, b) and other robustness metrics, clustering coefficient (c, d) and efficiency (e, f) \cite{ellens2013graph}, for the Human and Yeast protein-protein interaction networks. \label{fig:SV_human_yeast}}
    
\end{figure}

\end{document}